\documentclass[12pt,journal,onecolumn]{IEEEtran}
\usepackage{silence}
\usepackage{premable_doc}
\usepackage{premable_tikz}
\usepackage{cite}
\usepackage{etoolbox}
\makeatletter
\patchcmd{\@compress@cite}
    {\@citea\@cite@out{#3}}
    {\citedash\@cite@out{#3}}
    {}
    {\PackageError{Journal_v3_verified}{Citation range customization failed}{Check the cite package definition.}}
\makeatother

\usepackage{fancyhdr}  
\usepackage{array}

\newacronym{npd}{NPD}{neural polar decoder}
\newacronym{hy}{HY}{Honda-Yamamoto}
\newacronym{bms}{BMS}{binary memoryless symmetric}

\newcommand{\nnc}[3]{\cG_\mathsf{NN}^{\left(#1, #2, #3\right)}}
\newtheoremstyle{channelstyle} 
{3pt} 
{3pt} 
{\normalfont} 
{} 
{\bfseries} 
{.} 
{.5em} 
{} 

\theoremstyle{channelstyle}

\begin{document}

\title{Neural Polar Codes for Source Coding}
\author{%
\begin{tabular}{@{}>{\centering\arraybackslash}p{0.29\textwidth}
                    >{\centering\arraybackslash}p{0.29\textwidth}
                    >{\centering\arraybackslash}p{0.29\textwidth}@{}}
{\large Bashar Huleihel} & {\large Ziv Aharoni} & {\large Haim H. Permuter} \\
Ben-Gurion University & Duke University & Ben-Gurion University \\
basharh@post.bgu.ac.il & ziv.aharoni@duke.edu & haimp@bgu.ac.il
\end{tabular}}

\maketitle

\begin{abstract}
This paper presents a data-driven approach to lossless and lossy source compression using polar codes. Classical polar source coding achieves fundamental compression limits with low complexity for binary memoryless sources. For unknown $q$-ary sources with memory, however, successive-cancellation decoding requires an explicit source model, and its complexity grows rapidly with alphabet and state-space sizes. For comparison, we present analytical successive-cancellation recursions and show that exact implementations scale quadratically with alphabet size and cubically with state-space size. Motivated by these bottlenecks, we employ neural polar decoders trained directly from data, eliminating the need for an explicit source model and reducing decoding complexity. Once trained, the neural polar decoders approximate conditional distributions of the synthetic polar coordinates. These learned distributions are used both to estimate information-theoretic quantities, including entropy rates, mutual information, and rate--distortion functions, and to construct explicit polar compression schemes. For lossless compression, the neural polar decoder identifies high-entropy polarized coordinates and reconstructs the remaining coordinates through the learned successive-cancellation recursion. For lossy compression, we combine neural polar decoders with a neural test channel to estimate and optimize the mutual-information--distortion tradeoff and construct an explicit lossy polar code. We further extend the framework to the Wyner--Ziv setting with decoder side information by learning conditional polarized distributions and a reconstruction mapping. Experiments on memoryless sources, sources with memory, and real genomic data show that the learned decoders closely match analytical successive-cancellation decoders when available and remain applicable when classical polar methods are computationally infeasible or the required source model is unavailable.
\end{abstract}

\begin{IEEEkeywords}
Polar codes, lossless compression, lossy compression, data-driven approach, neural polar decoders.
\end{IEEEkeywords}

\glsresetall
\section{Introduction}

Source coding studies the efficient compression of data generated by a source, requiring exact reconstruction in the lossless setting and reconstruction within a prescribed distortion in the lossy setting.  In lossless compression, the entropy characterizes the fundamental limit for memoryless sources, and the entropy rate plays the corresponding role for stationary sources with memory.  Classical model-based schemes, such as Huffman coding~\cite{huffman1952method} and arithmetic coding~\cite{rissanen1979arithmetic}, approach these limits when supplied with an accurate source model, for example through block coding or context-dependent conditional probabilities, whereas universal schemes such as Lempel--Ziv coding~\cite{ziv1977universal} do not require prior knowledge of the source distribution and are asymptotically optimal for finite-alphabet stationary ergodic sources.  The latter guarantees are asymptotic, however, and finite-block performance can depend strongly on the available sequence length and on the source structure.  For lossy compression, rate--distortion theory characterizes the minimum rate required to reproduce a source within a specified average distortion~\cite{gallager1968information,cover2006elements}.  Despite these foundations, translating information-theoretic limits into tractable coding algorithms remains challenging for many real sources, which may be non-binary, statistically dependent, and structured across time or space, as in text, genomic sequences, images, and scientific measurements.  In such regimes the source law is often unknown, and even when a finite-state or high-order model is available, the inference required by exact model-based coding can be computationally prohibitive.

For binary memoryless sources, polar codes provide constructive coding schemes that achieve the information-theoretic limits with low-complexity encoding and decoding.  Following Ar\i kan's channel-polarization construction~\cite{arikan2009channel}, polar methods have been developed for both lossless and lossy source coding~\cite{cronie2010lossless,korada2010polar,sasoglu2011polarization,mori2014source}.  The polar transform maps a length-$N$ source block to synthetic coordinates whose conditional entropies polarize: most coordinates become nearly deterministic given the preceding ones, while the remaining coordinates carry essentially all of the source uncertainty.  Compression is performed by storing the high-entropy coordinates, and reconstruction is carried out by successive-cancellation (SC) decoding using the source distribution.  For sources with memory, SC decoding can be implemented through successive-cancellation trellis (SCT) recursions, which augment the polar recursion with trellis state tracking~\cite{wang2015construction}.  For binary memoryless sources the SC updates reduce to simple scalar recursions.  For non-binary finite-field alphabets, however, the likelihood recursions become vector-valued, and for finite-state sources they require state tracking, so that the complexity grows rapidly with the alphabet size and the number of source states.

Neural networks (NNs) provide a complementary approach by learning expressive statistical models directly from data.  Autoregressive models, variational autoencoders (VAEs), normalizing flows, and generative-adversarial networks (GANs) have been used to model complex distributions and, in several compression settings, to parameterize encoders, decoders, or test channels~\cite{van2016pixel,theis2015generative,kingma2013auto,rezende2015variational,larochelle2011neural,goodfellow2014generative}.  Neural decoders for polar and related codes, as well as learned distributed compressors, have also been studied, including neural belief-propagation, graph-neural-network, neural successive-cancellation, and learned-binning approaches~\cite{nachmani2019hyper,cammerer2017scaling,doan2018neural,doan2019neural,song2019graph,teng2019low,hebbar2023crisp,makkuva2021ko,xu2020deep,ebada2019deep,kshirsagar2025polar,ozyilkan2024neural}.  These works primarily address channel coding or settings in which the underlying channel or source model is relatively simple.  The present work instead develops a data-driven polar source-coding architecture for finite-alphabet sources, sources with memory, and real-world sources for which analytical SC operations are unavailable or computationally impractical.

In this paper, we develop a neural--polar framework for source coding that combines the structure of polar compression with data-driven estimation.  The approach preserves the polar transform and the SC computation graph, while replacing the analytical core operations of SC decoding, namely the check-node and bit-node updates, with neural modules trained directly from samples.  Building on the neural polar decoding framework originally developed for binary-input channel coding~\cite{aharoni2023polar,aharoni2026optimized,Aharoni2025dna}, we extend neural polar decoders (NPDs) to finite-alphabet source coding.  The framework therefore does not require an explicit probabilistic source model, and its recursive computational cost is governed by the chosen neural architecture and the depth of the polar recursion rather than by the quadratic alphabet enumeration of analytical SC recursions or the cubic state-space dependence of analytical SCT recursions. We focus on single-encoder, single-decoder problems, including Wyner--Ziv coding with decoder-only side information. Extending the framework to multiterminal source coding introduces additional challenges. For example, Slepian--Wolf coding separately encodes correlated sources and jointly decodes them. Its objective is a rate region rather than a single rate, requiring encoder-specific transmitted sets and a joint decoding rule for the desired rate tuple~\cite{abbe2015randomness,korada2010itw_wynerziv}. The main contributions are summarized as follows.
\begin{itemize}
    \item \emph{Lossless compression.} We extend the NPD methodology from binary-input channel decoding to finite-alphabet source coding over memoryless and finite-state sources.  The learned decoder estimates the polarized conditional distributions required for entropy estimation, synthetic-coordinate selection, and SC reconstruction, so that the encoder can select the high-entropy coordinates and the decoder can reconstruct the remaining ones through the learned SC recursion.  In the experiments, the NPD closely reproduces the behavior of the analytical SC or SCT decoder whenever the latter is tractable.  On the \texttt{E. coli} benchmark~\cite{canterbury_large_corpus,canterbury_ecoli_description}, supplying the sequential NPD probabilities to an arithmetic coder yields a lossless rate of $1.883$ bits per base, comparable to the NNCP rate reported in~\cite{yomjinda2020lossless_dna}. This experiment is included to demonstrate applicability to real-world data.
    \item \emph{Lossy compression.} We develop a data-driven variational procedure for lossy source coding in which a neural test channel is optimized against a multi-letter rate--distortion objective~\cite{gallager1968information,cover2006elements,korada2010polar}.  For a fixed test channel, two NPDs estimate the entropy of the polarized reconstruction sequence and its conditional entropy given the source sequence. Their difference is a sample-based estimate of the induced mutual information, following the NPD-based estimator of~\cite{aharoni2026optimized}.  The test channel is then updated with a distortion penalty through a score-function gradient.  The procedure produces estimated rate--distortion operating points, and the same learned polarized statistics specify an explicit polar lossy coding scheme.
    \item \emph{Wyner--Ziv coding.} We extend this estimation--optimization principle to source coding with decoder side information in the Wyner--Ziv setting~\cite{wyner1976rate}.  Polar codes have previously been applied to this problem through nested constructions for binary memoryless models, where the direct construction attains points on the auxiliary Wyner--Ziv functional and the full tradeoff is obtained by convexification, e.g., through time sharing~\cite{korada2010itw_wynerziv,korada2010polar}.  Our formulation jointly learns a multi-letter auxiliary test channel, a reconstruction mapping, and the conditional polarized distributions that define the conditional mutual-information term. The rate-carrying set consists of the synthetic coordinates that are difficult to infer from the side information but predictable from the source through the learned test channel.  To motivate our approach, we establish a nested polar coding theorem in the setting where the induced joint processes admit finite-state, aperiodic, irreducible Markov (FAIM) representations, building on the conditional polarization results of~\cite{shuval2019fast}.
\end{itemize}
Together, these contributions yield a unified neural--polar approach for estimating source-coding limits and constructing lossless, lossy, and Wyner--Ziv codes in regimes where the source law is unknown or the corresponding analytical SC/SCT recursions are computationally prohibitive.

The remainder of the paper is organized as follows.  Section~\ref{sec:pre} introduces the notation and the polar source-coding preliminaries used throughout the paper.  Section~\ref{sec:polar_alphabet_memory} states the analytical SC and SCT recursions for non-binary alphabets and finite-state sources and identifies their complexity with respect to the alphabet and state-space sizes.  Section~\ref{sec: methodology} presents the NPD-based source-coding architecture together with the proposed lossless compression scheme.  Section~\ref{sec: RD} develops the lossy compression framework based on NPD mutual-information estimation and neural test-channel optimization.  Section~\ref{sec:WZ} extends the construction to the Wyner--Ziv setting with decoder side information.  Section~\ref{sec:conc} concludes the paper.  To preserve the flow of the presentation, the proofs and implementation details are given in the appendices.

\section{Notation and Background} \label{sec:pre}

Throughout this paper, random variables (RVs) are denoted by capital letters and their realizations by lower-case letters, e.g., $X$ and $x$, respectively. We write $X_i^j$ for the random vector $(X_i, X_{i+1}, \dots, X_j)$ and $x_i^j$ for its realization when $i < j$. For a sequence $u_1^n$ we use $u_{1,o}^{n}$ and $u_{1,e}^{n}$ to denote the subsequences of odd and even indices, respectively. The probability $\Pr[X=x]$ is denoted by $P_X(x)$. The notation $[N]$ denotes the set $\{1,\dots,N\}$. For a vector $v$, $[v]_i$ denotes its $i$-th component, with multi-index versions (e.g., $[v]_{i,j,s}$) defined analogously when $v$ is an array. The mutual information between two RVs $X$ and $Y$ is denoted by $\f{\sI}{X;Y}$, and the entropy of $X$ is denoted by $\f{\sH}{X}$. Sets are denoted by calligraphic letters, e.g., $\mathcal{X}$. The notation $\mathbb{GF}(q)$ refers to the Galois field with $q$ elements, where $q$ is a prime power, and $\ominus_q$ denotes subtraction in $\mathbb{GF}(q)$, applied element-wise when used on vectors. The notation $A \otimes B$ denotes the Kronecker product when $A$ and $B$ are matrices, and the tensor product when $A$ and $B$ are distributions. We use the conventions $\log 0=-\infty$ for log-probability expressions and $0\log 0=0$ in entropy expressions. Unless stated otherwise, $\log$ denotes the base-2 logarithm, so that entropies, mutual informations, and rates are expressed in bits, and $\ln$ denotes the natural logarithm. 

The class of shallow NNs with fixed input and output dimensions is defined as follows \cite{schafer2006recurrent}.
\begin{definition}[NN function class]\label{def:NN_function_class}
For the ReLU activation function
$\sigma_\mathsf{R}(x) = \max(x,0)$ and $d_i,d_o \in\bN$, define the class of NNs with $k\in\bN$ neurons as:
\begin{align}
        \cG_\mathsf{NN}^{(d_i,k,d_o)} &:= \left\{g:\bR^{d_i}\to\bR^{d_o}: g(x)=\sum_{j=1}^k\beta_j \sigma_\sR( w_j^\top x+b_j),\ x\in\bR^{d_i} \right\},\label{eq:NN_def}
\end{align}
where $\beta_j\in\bR^{d_o}$, $w_j\in\bR^{d_i}$, and $b_j\in\bR$ are the parameters of $g\in \cG_\mathsf{NN}^{(d_i,k,d_o)}$.
\end{definition}

\subsection{Binary Polar Codes for Source Coding}\label{sec:sc_binary}
Polar codes, introduced for channel coding in \cite{arikan2009channel}, also provide a framework for data compression \cite{korada2010polar}, in which polarization segregates the transformed source symbols into those that are highly predictable and those that are not.
Formally, consider a source distribution $P_{X}$ with binary alphabet $\mathcal{X}=\{0,1\}$, and let $X^N$ represent a sequence of $N=2^n$ ($n\in\mathbb{N}$) independent and identically distributed (i.i.d.) source symbols drawn from $P_X$. The process of polarization begins by transforming the input sequence $X^N$ into $U^N$ through multiplication with a polarizing generator matrix $G_N$, such that $U^N = X^NG_N$, where $G_N$ is defined as
\begin{align}\label{eq:gen_mat}
    G_N=B_NG_2^{\otimes n},
\end{align}
with $G_2$ the Arıkan kernel, $G_2^{\otimes n}$ its $n$th Kronecker power, and $B_N$ the bit-reversal matrix \cite{arikan2009channel}. As the block length $N$ increases, the conditional entropy of each bit $u_i$,  given the preceding bits $u^{i-1}$, tends to either $0$ or $1$. The encoder then selects the least predictable bits, defined by the set $A$, as
\begin{align}
    A = A_\tau=\{i\in[N]\;|\;\mathsf{H}(U_i|U^{i-1})>\tau\},
\end{align}
for some chosen threshold $\tau$, which determines the compression rate $R$. Consequently, the encoder outputs $u_A=(x^N \cdot G_N)_{A}$, reducing the number of bits from $N$ to $|A|$, with compression rate $R = \frac{|A|}{N}$.

Following the encoding process, we turn to the SC decoder for polar codes. The SC decoder is defined by the tuple $(\cX, \cY, \cE, E, F, G, H)$, where $\cX, \cY$ are the alphabets of the source and the side information (if available). The term $\cE$ is the embedding space of the SC decoder. The function $E: \cY \to \cE$ represents the input embedding. When side information is not available, it is a constant function. The functions $F$, $G$, and $H$ are the core operations of the SC decoder. Specifically, $F: \cE \times \cE \to \cE$ and $G: \cE \times \cE \times \cX \to \cE$ denote the check-node and bit-node operations, while $H: \cE \to [0,1]$ converts an embedding into a probability for soft decision output. For binary input memoryless sources, suitable choices for $E$, $F$, $G$, and $H$ are:
\begin{align}\label{eqn:sc_ops}
    E &= \ln\frac{\f{P_{X}}{0}}{\f{P_{X}}{1}}, \\
    F(e_1, e_2) &=  2\tanh^{-1}\left(\tanh{\frac{e_1}{2}}\tanh{\frac{e_2}{2}}\right)\label{eq:CN}, \\
    G(e_1, e_2, u) &=  e_2 + (-1)^{u}e_1, \label{eq:BN}\\
    H(e_1) &= \sigma(-e_1),
\end{align}
where $\sigma(\cdot)$ is the sigmoid function, $e_1,e_2\in\bR$, and $u \in \cX$. Applying SC decoding yields an estimate of the polarized bits and their corresponding posterior distribution \cite{arikan2009channel}.
Specifically, the SC decoding performs the map $(u_{A},A) \mapsto \left\{\f{P_{U_i|U_1^{i-1}}}{1| \hat{u}_1^{i-1}},\hat{u}_i\right\}_{i\in[N]}$. For more details on SC decoding, the reader may refer to \cite[Section VIII]{arikan2009channel}. For non-binary source alphabets and sources with memory, these functions are not suitable, and Section \ref{sec:polar_alphabet_memory} presents the appropriate formulas for these scenarios.

\subsection{Lossless compression using polar codes}\label{sec:lossless_correction}
The SC scheme described above may incur decoding errors at finite block lengths. To guarantee exact reconstruction for every source block, it can be augmented with correction information, as introduced in \cite{cronie2010lossless}. The decoder receives not only the compressed message $u_A$ but also a set of indices $T$, defined as
\begin{align}
    T = \{i: i\in A^c, \hat{u}_i\neq u_i\}.
\end{align}
The set $T$ contains the indices in $A^c\triangleq[N]\setminus A$ where the SC estimate $\hat{u}_i$ differs from the true bit $u_i$. Since the encoder knows $u^N$, it determines $T$ by simulating SC decoding from $u_A$, comparing each decision with $u_i$, and correcting any mismatch before proceeding. The encoder outputs $(u_A,T)$. With $A$ known to both encoder and decoder, the rate in bits per source symbol is
\begin{equation}
    R=\frac{|A|+\ell(T)}{N},
    \label{eq:lossless_correction_rate}
\end{equation}
where $\ell(T)$ denotes the number of bits used to encode $T$. For fixed $A$, the expected
rate is $(|A|+\mathbb E[\ell(T)])/N$.

When the decoder receives $(u_A,T)$, it estimates $u_i$ for all $i\in A^c$ using the SC algorithm. However, during the decoding process, if an index $i$ is included in $T$, the decoder corrects the estimated bit. Once the sequence $u_{A^c}$ is reconstructed, it is combined with $u_A$ to recover $u^N$, enabling accurate reconstruction of the original sequence $x^N$ and achieving lossless compression.

\section{Polar Coding Beyond Binary Memoryless Sources}\label{sec:polar_alphabet_memory}
This section presents the analytical SC recursions used as baselines beyond the binary memoryless setting. We first state the non-binary memoryless recursion in the log-likelihood-ratio (LLR) domain and then specialize the trellis-based polar recursion to finite-state sources over a non-binary alphabet. Stating these formulas explicitly exposes the dependence of a straightforward exact implementation on the alphabet size and the number of source states. This computational dependence, together with the need for an explicit source model, motivates the data-driven NPD formulation in Section~\ref{sec: methodology}.

For $\mathcal{X}=\mathbb{GF}(q)$, where $q=p^m$ with $p$ prime, we consider the $2\times2$ kernel over $\mathbb{GF}(q)$~\cite{mori2014source,sasoglu2011polardiscrete}
\begin{align}\label{eq:kernel2}
    G_2 =
    \begin{bmatrix}
        1 & 0 \\
        \alpha & 1
    \end{bmatrix},
\end{align}
where $\alpha \in \mathbb{GF}(q)\setminus\{0\}$. For this kernel, polarization for every memoryless $q$-ary source holds if and only if $\mathbb{GF}(p)(\alpha)=\mathbb{GF}(q)$~\cite[Corollary~15]{mori2014source}, a condition satisfied when $\alpha$ is a primitive element of $\mathbb{GF}(q)$.

\subsection{Memoryless sources}\label{sec:sc}
For prime alphabet sizes, polarization with $\alpha=1$ was established in~\cite{sasoglu2011polardiscrete}. Mori and Tanaka~\cite{mori2014source} characterized the polarizing kernels over general finite fields as stated above. Recursive SC decoding for $q$-ary sources and channels has been studied previously~\cite{feng2020simplified,farsiabi2024fast}. For completeness, we state the LLR-domain source-decoding updates for the kernel in \eqref{eq:kernel2} using the notation adopted in this paper. The LLR representation improves numerical stability.

Let $P_N(u^N)=P_{X^N}(u^NG_N^{-1})$,
and define
\begin{align}\label{eq: pn^i}
    P_N^{(i)}(u^i)=\sum_{u_{i+1}^N}P_N(u^N).
\end{align}
To express the SC updates in the log domain, define the LLR vector
\begin{align}\label{eq:llr_vec}
\mathbf{L}_N^{(i)}(\hat{u}^{i-1})=[L_N^{(i,0)}(\hat{u}^{i-1}),\dots,L_N^{(i,q-1)}(\hat{u}^{i-1})],
\end{align}
where $L_N^{(i,k)}(\hat{u}^{i-1})\triangleq \ln\left(\frac{P_N^{(i)}(\hat{u}^{i-1},u_i=k)}{P_N^{(i)}(\hat{u}^{i-1},u_i=0)}\right)$,
for $k\in\mathcal{X}$. At the first stage of the SC decoding recursion, the LLR vector is given by
\begin{align}\label{eq:source_llr}
    \mathbf{L}_1^{(1)} &= \left[0,\ln\left(\frac{P_X(1)}{P_X(0)}\right),\dots, \ln\left(\frac{P_X(q-1)}{P_X(0)}\right)\right],
\end{align}
which depends only on the source distribution $P_X$.
The following lemma states explicit LLR-domain formulas for the core SC operations over a non-binary alphabet using the finite-field kernel $G_2$ in~\eqref{eq:kernel2}.

\begin{lemma}[LLR-domain non-binary SC]\label{lem:memoryless_sc}
    For a memoryless source $P_{X}$ with $\mathcal{X}=\mathbb{GF}(q)$ and $P_X(x)>0$ for all $x\in\mathcal{X}$, and any $n\geq 0$, $N=2^n$, $k\in\mathcal{X}$, and $1\leq i\leq N$, we have
    \begin{align}
        L_{2N}^{(2i-1,k)}(\hat{u}^{2i-2})&=f^{+}_k\left(\textbf{L}_{N}^{(i)}(\hat{u}_{1,o}^{2i-2}\ominus_q  \alpha\hat{u}_{1,e}^{2i-2}),\textbf{L}_{N}^{(i)}(\hat{u}_{1,e}^{2i-2})\right),\label{eq:cn_memoryless}\\
        L_{2N}^{(2i,k)}(\hat{u}^{2i-1})&=f^{-}_k\left(\textbf{L}_{N}^{(i)}(\hat{u}_{1,o}^{2i-2}\ominus_q  \alpha\hat{u}_{1,e}^{2i-2}),\textbf{L}_{N}^{(i)}(\hat{u}_{1,e}^{2i-2}),\hat{u}_{2i-1}\right)\label{eq:bn_memoryless},
    \end{align}
    where $f^{+}_k:\mathbb{R}^{2q}\to\mathbb{R}$ and $f^{-}_k:\mathbb{R}^{2q}\times\mathcal{X}\to\mathbb{R}$ are defined as
    \begin{align}
        f^{+}_k(l_1,l_2) &=\ln\left(\frac{\sum_{v\in\mathcal{X}}e^{l_1(k\ominus_q \alpha v)}e^{l_2(v)}}{\sum_{v\in\mathcal{X}}e^{l_1(0\ominus_q \alpha v)}e^{l_2(v)}}\right),\label{eq:sc1}\\
        f^{-}_k(l_1,l_2,u) &= l_2(k)+l_1(u\ominus_q \alpha k)-l_1(u),\label{eq:sc2}
    \end{align}
    for $l_1,l_2\in\mathbb{R}^q$ and $u\in\mathcal{X}$.
\end{lemma}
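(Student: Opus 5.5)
The proof follows Arıkan's probability-domain recursion for the one-step polar transform, lifted to $\mathbb{GF}(q)$, and then takes logarithms. The first step is the basic decomposition of $G_{2N}$. With the bit-reversal convention $G_{2N}=B_{2N}G_2^{\otimes(n+1)}$, the transform $u^{2N}\mapsto x^{2N}$ splits as follows. The pair $(u_{2j-1},u_{2j})$ is mapped by the kernel $G_2^{-1}$, which gives $s_j=u_{2j-1}\ominus_q\alpha u_{2j}$ and $t_j=u_{2j}$. The sequences $s^N$ and $t^N$ are then each transformed by a length-$N$ polar transform acting on disjoint halves of the i.i.d. block, possibly after a fixed permutation that does not affect the distribution.

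One point needs care: with $U^N=X^NG_N$ and $G_2$ lower triangular, $X=UG_2^{-1}$. One must check that the inverse of $\begin{bmatrix}1&0\\\alpha&1\end{bmatrix}$ yields exactly $u_1\ominus_q\alpha u_2$ in the first component. This holds because $(u_1,u_2)G_2^{-1}=(u_1-\alpha u_2,\,u_2)$. Since the source is memoryless, this yields the product identity
\begin{align*}
P_{2N}(u^{2N})=P_N\!\left(u_{1,o}^{2N}\ominus_q\alpha u_{1,e}^{2N}\right)P_N\!\left(u_{1,e}^{2N}\right).
\end{align*}

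Next, the marginalization \eqref{eq: pn^i} is carried out for the two parities. For the odd index $2i-1$, one sums over $u_{2i}^{2N}$. The pairs $(u_{2j-1},u_{2j})$ with $j>i$ range bijectively over $(s_j,t_j)$, so they marginalize each factor separately, down to $P_N^{(i)}$. The remaining free variable $u_{2i}=v$ produces
\begin{align*}
P_{2N}^{(2i-1)}(u^{2i-1})=\sum_{v\in\mathcal{X}}P_N^{(i)}\!\left(\hat s^{i-1},u_{2i-1}\ominus_q\alpha v\right)P_N^{(i)}\!\left(\hat t^{i-1},v\right).
\end{align*}
For the even index $2i$, no sum over the $i$-th pair remains, so
\begin{align*}
P_{2N}^{(2i)}(u^{2i})=P_N^{(i)}\!\left(\hat s^{i-1},u_{2i-1}\ominus_q\alpha u_{2i}\right)P_N^{(i)}\!\left(\hat t^{i-1},u_{2i}\right).
\end{align*}
Here $\hat s^{i-1}=\hat u_{1,o}^{2i-2}\ominus_q\alpha\hat u_{1,e}^{2i-2}$ and $\hat t^{i-1}=\hat u_{1,e}^{2i-2}$.

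To convert to LLRs, divide each probability by its value at $u_{\cdot}=0$. In the odd case, dividing both numerator and denominator by $P_N^{(i)}(\hat s^{i-1},0)P_N^{(i)}(\hat t^{i-1},0)$ turns each term into $e^{l_1(\cdot)}e^{l_2(\cdot)}$, with $l_1=\mathbf{L}_N^{(i)}(\hat s^{i-1})$ and $l_2=\mathbf{L}_N^{(i)}(\hat t^{i-1})$. This gives \eqref{eq:sc1}. In the even case, the ratio for $u_{2i}=k$ versus $u_{2i}=0$ is
\begin{align*}
\frac{P_N^{(i)}(\hat s^{i-1},u\ominus_q\alpha k)}{P_N^{(i)}(\hat s^{i-1},u)}\cdot\frac{P_N^{(i)}(\hat t^{i-1},k)}{P_N^{(i)}(\hat t^{i-1},0)},
\end{align*}
with $u=\hat u_{2i-1}$. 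Its logarithm is $l_1(u\ominus_q\alpha k)-l_1(u)+l_2(k)$, which is \eqref{eq:sc2}. The positivity assumption $P_X>0$ guarantees that all $P_N^{(i)}$ are positive, since $G_N$ is invertible. Hence every LLR is finite and the divisions are legitimate.

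The main obstacle is the first step. One must verify rigorously, by induction on $n$ with the bit-reversal and Kronecker structure, that the odd/even split of $u^{2N}$ corresponds to two independent length-$N$ polar transforms with exactly these arguments. The cleanest route is to write $G_{2N}=R_{2N}(I_N\otimes G_2)(I_2\otimes G_N)$ up to a permutation, as in Arıkan's recursive construction, and to use the permutation invariance of an i.i.d. product measure.
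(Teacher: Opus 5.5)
Your proposal is correct and follows the paper's proof essentially step for step: the kernel factorization $P_{2N}(u^{2N})=P_N(u_{1,o}^{2N}\ominus_q\alpha u_{1,e}^{2N})P_N(u_{1,e}^{2N})$, marginalization of the tail pairs for odd and even indices, and normalization by the zero-symbol probabilities to obtain $f_k^{+}$ and $f_k^{-}$. The decomposition $x_1^NG_N=u_{1,o}^{2N}\ominus_q\alpha u_{1,e}^{2N}$, $x_{N+1}^{2N}G_N=u_{1,e}^{2N}$, which you flag as the main obstacle, is simply asserted in the paper, and your first-paragraph description of it is right (no extra permutation is needed).

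One small slip: the factorization in your last paragraph has $R_{2N}$ in the wrong place. The identity that encodes your first paragraph is
$G_{2N}=(I_2\otimes G_N)R_{2N}^{-1}(I_N\otimes G_2)$. It follows from three facts:
\begin{itemize}
\item $G_{2N}=G_2^{\otimes(n+1)}B_{2N}$;
\item $B_{2N}=(I_2\otimes B_N)R_{2N}^{-1}$;
\item the shuffle identity $R_{2N}(G_2\otimes I_N)=(I_N\otimes G_2)R_{2N}$.
\end{itemize}
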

The proof of Lemma~\ref{lem:memoryless_sc} is given in
Appendix~\ref{app:sc_alphbet}. It uses the polar kernel to factor the distribution
of a source block of length $2N$ into two components corresponding to blocks of length $N$.
Marginalizing the undecoded symbols gives the probability recursions for
odd and even indices, whose log ratios yield $f_k^{+}$ and $f_k^{-}$.
These scalar maps collectively define the vector-valued check-node and bit-node operations, which are applied recursively along the standard SC computation graph. At each leaf, the LLR vector is converted into a posterior distribution and a hard decision according to
\begin{align}
    &P_{U_i|U^{i-1}}(\cdot|\hat{u}^{i-1}) = \text{softmax}\left(\textbf{L}_N^{(i)}(\hat{u}^{i-1})\right),\label{eq:hard_dec1}\\
    &\hat{u}_i=\argmax_{u\in\mathcal{X}}P_{U_i|U^{i-1}}(u|\hat{u}^{i-1}).\label{eq:hard_dec2}
\end{align}

In the binary case, each node propagates a single LLR. In the non-binary case, the check-node operation produces $|\mathcal{X}|$ output metrics, each requiring a summation over $|\mathcal{X}|$ symbols in \eqref{eq:sc1}. Thus, a straightforward exact implementation requires $O\left(|\mathcal{X}|^2N\log N\right)$ operations for the full SC recursion, compared with $O(N\log N)$ in the binary case.
\subsection{Sources with memory}\label{sec:sct}
In this section, we consider a finite-state stochastic source characterized by the conditional probability $P_{X_t,S_t|S_{t-1}}$, where $S_{t-1}\in\mathcal{S}$ is a state variable with initial state $S_0$. We assume that the transition law is time-homogeneous and that, conditioned on $S_{t-1}$, the pair $(X_t,S_t)$ is independent of the preceding source symbols and states. Since the memoryless SC operations of Section~\ref{sec:sc} no longer apply directly to such sources, we adopt the trellis-based polar recursion of~\cite{wang2015construction} and state its log-domain specialization to finite-state sources over a non-binary alphabet.

The construction of the polar coding scheme requires the probability $P_{U^i}(u^{i})$, which is expressed as
\begin{align}\label{eq:joint_memory}
    P_{U^{i}}(u^{i})=\sum_{s_0,s_N'}P_{S_0}(s_0)P_{U^i,S_N|S_0}(u^{i},s_N'|s_0).
\end{align}
Let
\begin{align}\label{eq:p_N}
P_N^{(i)}\left(u^i,s_N|s_0\right)=\sum_{u_{i+1}^N\in\mathcal{X}^{N-i}}P_{U^N,S_N|S_0}(u^N,s_N|s_0).
\end{align}
For numerical stability, we state the SCT recursion directly in the log-probability domain, with the convention $\ln 0=-\infty$. Let
\begin{align}
    L_N^{(i)}(u^i,s_N|s_0)\triangleq \ln\left(P_N^{(i)}(u^i,s_N|s_0)\right).
\end{align}
The recursion is initialized at $N=1$ by
\begin{equation}
    L_1^{(1)}(u_1,s_1|s_0)=\ln P_{X_1,S_1|S_0}(u_1,s_1|s_0).
    \label{eq:sct_log_base}
\end{equation}
The following lemma gives the direct log-domain recursion used by the analytical SCT decoder. It is a log-domain source-coding specialization of the probability-domain trellis-based polar recursion in~\cite{wang2015construction} for the finite-field kernel $G_2$ in~\eqref{eq:kernel2}, with the summation over the non-binary source alphabet made explicit.
\begin{lemma}[Log-domain SCT decoding]\label{lem:sct}
    For the time-homogeneous finite-state source described above with source alphabet $\mathcal{X}=\mathbb{GF}(q)$, and any $n\geq0$, $N=2^n$, and $1\leq i\leq N$, the metrics satisfy
    \begin{align}
        L_{2N}^{(2i-1)}(u_1^{2i-1},s_{2N}|s_0)
        &=\ln\Bigg(\sum_{s_N'}\sum_{u_{2i}}e^{L_N^{(i)}(u_{1,o}^{2i}\ominus_q  \alpha u_{1,e}^{2i},s_N'|s_0)} e^{L_N^{(i)}(u_{1,e}^{2i},s_{2N}|s_N')}\Bigg),\label{eq:sct_log_odd}\\
        L_{2N}^{(2i)}(u_1^{2i},s_{2N}|s_0)
        &=\ln\Bigg(\sum_{s_N'}
        e^{L_N^{(i)}(u_{1,o}^{2i}\ominus_q  \alpha u_{1,e}^{2i},s_N'|s_0)} e^{L_N^{(i)}(u_{1,e}^{2i},s_{2N}|s_N')}\Bigg).\label{eq:sct_log_even}
    \end{align}
\end{lemma}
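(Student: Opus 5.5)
We prove the lemma by induction-free direct factorization, as in the proof of Lemma~\ref{lem:memoryless_sc}, now carrying the state variables through the trellis. The first step is to fix the recursive structure of the transform. Write $G_{2N}$ so that the length-$2N$ transform splits as follows: if $u^{2N}=x^{2N}G_{2N}$, then there are two half-length vectors
\begin{align}
v^N = u_{1,o}^{2N}\ominus_q \alpha u_{1,e}^{2N}, \qquad w^N=u_{1,e}^{2N},
\end{align}
with $v^N=x_1^NG_N$ and $w^N=x_{N+1}^{2N}G_N$. Up to the sign/ordering conventions of the bit-reversal, this is the same decomposition used for the memoryless case. The map $u^{2N}\mapsto(v^N,w^N)$ is a bijection, and its first $2i$ coordinates determine exactly $(v^i,w^i)$. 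I will verify the precise placement of $\alpha$ against \eqref{eq:kernel2} exactly as in Appendix~\ref{app:sc_alphbet}.

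Second, I use the Markov structure to factor the block law at the midpoint state. Because $(X_t,S_t)$ depends on the past only through $S_{t-1}$, and the law is time-homogeneous, we have
\begin{align}
P_{X^{2N},S_{2N}|S_0}(x^{2N},s_{2N}|s_0)=\sum_{s_N'}P_{X^N,S_N|S_0}(x_1^N,s_N'|s_0)\,P_{X^N,S_N|S_0}(x_{N+1}^{2N},s_{2N}|s_N').
\end{align}
Since $G_N$ is invertible, this transfers to the transformed domain:
\begin{align}
P_{U^{2N},S_{2N}|S_0}(u^{2N},s_{2N}|s_0)=\sum_{s_N'}P_{U^N,S_N|S_0}(v^N,s_N'|s_0)\,P_{U^N,S_N|S_0}(w^N,s_{2N}|s_N').
\end{align}

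Third, I marginalize the undecoded coordinates. Summing over $u_{2i+1}^{2N}$ is equivalent, by the bijection, to summing independently over $v_{i+1}^N$ and $w_{i+1}^N$ with $(v^i,w^i)$ held fixed. Each of these sums factors into the corresponding half-length term, and \eqref{eq:p_N} then gives
\begin{align}
P_{2N}^{(2i)}(u^{2i},s_{2N}|s_0)=\sum_{s_N'}P_N^{(i)}(v^i,s_N'|s_0)\,P_N^{(i)}(w^i,s_{2N}|s_N').
\end{align}
For the odd index, I additionally sum over $u_{2i}\in\mathcal{X}$, noting that $v_i$ and $w_i$ both depend on $u_{2i}$. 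Taking logarithms and writing each probability as $e^{L}$ then yields \eqref{eq:sct_log_even} and \eqref{eq:sct_log_odd}. Any zero probabilities are handled by the convention $\ln 0=-\infty$, since $e^{-\infty}=0$.

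The main obstacle is the bookkeeping in the first step. I need to confirm that the prefix $u^{2i}$ determines precisely $(v^i,w^i)$ under the chosen $B_N$ convention. I also need to check the time-homogeneity step, so that the second half-block, conditioned on $s_N'$, carries the same law $P_{U^N,S_N|S_0}(\cdot,\cdot|s_N')$. The rest is routine algebra.
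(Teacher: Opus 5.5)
Your proposal is correct and follows essentially the same route as the paper's proof. Both split the trellis at the midpoint state $s_N'$ using the Markov property and time homogeneity, substitute the recursive transform identities $x_1^NG_N=u_{1,o}^{2N}\ominus_q\alpha u_{1,e}^{2N}$ and $x_{N+1}^{2N}G_N=u_{1,e}^{2N}$, marginalize the undecoded coordinates into $P_N^{(i)}$ factors, and take logarithms. The differences are cosmetic: you derive the even-index recursion first and obtain the odd one by summing over $u_{2i}$, and you spell out the prefix bijection explicitly. The paper instead treats the odd index directly, notes that the even case is identical, and additionally checks the $N=1$ initialization.
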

The proof of Lemma~\ref{lem:sct} is given in
Appendix~\ref{app:sct_alphabet}. It splits the source trellis at its midpoint
and sums over the intermediate state. Combining the
two trellis segments through the polar kernel gives the probability-domain
recursions, and taking logarithms yields the stated formulas. For prime alphabets, the standard polar transform guarantees polarization when the state process is stationary, irreducible, and aperiodic~\cite{shuval2019fast}.

Following \eqref{eq:joint_memory}, \(P_{U^{i}}(u^{i})\) can be computed by marginalizing over \(S_0\) and \(S_N\) while using the recursive formulas in Lemma~\ref{lem:sct}. As a result, the complexity of SCT decoding increases significantly compared with the memoryless case. Specifically, each recursion step requires computing the log-probability metrics for all \(S_0\), \(S_N\), and $U_i\in\mathcal{X}$, and summing over $|\mathcal{S}||\mathcal{X}|$ combinations in \eqref{eq:sct_log_odd} and over $|\mathcal{S}|$ intermediate states in \eqref{eq:sct_log_even}. Thus, the computational complexity of SCT decoding is $O\left(|\mathcal{S}|^3|\mathcal{X}|^2N\log N\right)$.

\section{Data-Driven Lossless Polar Compression}\label{sec: methodology}
In this section, we present the NPD, originally introduced for binary-input channel coding in~\cite{aharoni2023polar}, and describe its adaptation to source coding over finite alphabets. We consider a finite-state stochastic source with alphabet $\mathcal{X}=\mathbb{GF}(q)$. Our objective is to construct practical polar compression schemes without requiring an explicit analytical model of the source distribution. Instead, the posterior quantities required for compression are learned directly from source samples.

\subsection{Neural polar decoder}
The NPD integrates NNs into the traditional architecture of SC polar decoders. Let $x^N$ represent the sequence of source symbols, and $u^N = x^N G_N$ be the corresponding polarized symbols. The NPD is designed to reconstruct $u^N$ from the compressed message by modifying the SC decoding process. Unlike conventional SC decoders that rely on predefined functions $E$, $F$, $G$, and $H$, the NPD replaces these with NNs: $E_{\theta_E} \in \nnc{1}{k}{d}$, $F_{\theta_F} \in \nnc{2d}{k}{d}$, $G_{\theta_G} \in \nnc{2d+1}{k}{d}$, and $H_{\theta_H} \in \nnc{d}{k}{|\mathcal{X}|}$, each parameterized by parameters in $\Theta \subset \mathbb{R}^p$, where $\Theta$ is a compact parameter space. Here, $k$ denotes the number of hidden units and $d$ denotes the dimension of the embedding space $\cE\subset\mathbb{R}^d$, and we collectively write $\theta=\{\theta_E,\theta_F,\theta_G,\theta_H\}$. In the experiments, multilayer versions of these modules are used (Appendix~\ref{app:implementation_details}).

The recursive structure of SC decoding is preserved in the NPD, and the parameters $\theta$ are adjusted during a training phase so that the NPD imitates the SC decoder of Section~\ref{sec:sc} for memoryless sources and the SCT decoder of Section~\ref{sec:sct} for sources with memory. The training objective is to minimize the cross-entropy (CE) between the true conditional distributions $P_{U_i \mid U^{i-1}}$ and their approximations $P^{\theta}_{U_i \mid U^{i-1}}$, which the NPD generates by following the SC decoding recursion implemented with the NNs. This approximation enables the estimation of the conditional entropies $\mathsf{H}(U_i \mid U^{i-1})$, which are then used to identify the set $A$ of unpredictable symbols. Since the true distributions are unknown and only finite samples are available, the CE loss is computed as the empirical negative log-likelihood under $P^{\theta}_{U_i \mid U^{i-1}}$. In particular, let $\cD_{M,N} \sim P_{X^{MN}}$ be a sample of $M$ consecutive length-$N$ blocks. At each training iteration, a block $x_{1}^N$ is sampled from $\mathcal{D}_{M,N}$, and the corresponding polarized symbols $u_{1}^N = x_{1}^N G_N$ are computed.
The leaf-wise component of the training loss is
\begin{align}\label{eq: loss}
\cL_{\mathsf{leaf}}(x_{1}^N) = -\frac{1}{N}\sum_{i=1}^N\log P^{\theta}_{U_i|U^{i-1}}(u_{i}|u_{1}^{i-1}),
\end{align}
where each prediction is conditioned on the true preceding symbols.

The computation of $P^{\theta}_{U_i|U^{i-1}}$ proceeds as follows. Let $e_{l,i}$ denote the embedding of the $i$-th symbol at depth $l$ of the SC decoding tree, and $e_{l,1}^N$ denote the collection of embeddings at depth $l$. Since $N=2^n$, the tree has $n+1$ levels indexed by $l=0,\ldots,n$. In particular, $e_{0,i}$ corresponds to the embedding of $X_i$, while $e_{n,i}$ corresponds to that of $U_i$. Correspondingly, let $v_{l,1}^N$ denote the symbols at depth $l$ of the polar computation graph, i.e., the intermediate values of the recursive polar transform, so that $v_{0,1}^N=x_1^N$ and $v_{n,1}^N=u_1^N$. The initial embeddings are constructed as
\[
e_{0,1}^N = E_{\theta}(0^N) \in \mathbb{R}^{N \times d},
\]
where the embedding network \( E_{\theta} \) is applied element-wise to a constant input.
Consequently, each position \( i \in [N] \) is assigned the same \( d \)-dimensional vector \( e_{0,i} \in \mathbb{R}^d \), which encodes the learned prior belief on the source symbols before the polar transformation is applied. The embeddings are then updated recursively along the SC decoding tree. At each depth $j\in\{0,\dots,n-1\}$, the nodes are paired according to the polar computation graph, and each pair of embeddings $(e_{j,i},e_{j,i'})$ produces the two child embeddings at depth $j+1$,
\begin{align}
    e_{j+1,m} &= \f{F_\theta}{e_{j,i},e_{j,i'}}, \\
    e_{j+1,m'} &= \f{G_\theta}{e_{j,i},e_{j,i'}, v_{j+1,m}},
\end{align}
where $m$ and $m'$ index the check-node and bit-node children of the pair, and the bit-node update uses the symbol $v_{j+1,m}$ already available at the sibling node. The rules for pairing the input embeddings and assigning the corresponding child indices follow the standard SC recursion, as specified in Algorithm~\ref{alg:training-nsc} in Appendix~\ref{app:implementation_details}.
Each time a leaf node is reached, the embedding $e_{n,i}$ is passed through the output network $H_{\theta}$, which produces a vector of logits in $\mathbb{R}^{|\mathcal{X}|}$, and a probability mass function over $\mathbb{GF}(q)$ is obtained by applying the softmax function to these logits. That is,
\begin{equation}
    P^{\theta}_{U_i \mid U^{i-1}}(\cdot \mid u^{i-1})
    = \text{softmax}\!\left(H_{\theta}(e_{n,i})\right).
\end{equation}
In the recursive implementation, we also add auxiliary CE terms at the internal levels of the polar decoding tree. At each depth, $H_\theta$ maps each embedding to $|\mathcal{X}|$ logits, and softmax converts these logits to the probability vector supplied to the CE loss. The full per-block loss is
\begin{equation}\label{eq:npd_full_loss}
    \cL(x_1^N)
    = \frac{1}{N}\sum_{l=0}^{n}\sum_{i=1}^{N}
    L^\theta_{\mathsf{ce}}\!\left(\text{softmax}\!\left(H_\theta(e_{l,i})\right),v_{l,i}\right),
\end{equation}
where $L^\theta_{\mathsf{ce}}\!\left(P,u\right)=-\log P(u)$.
The depth-$n$ contribution is $\cL_{\mathsf{leaf}}(x_1^N)$, while depths $0,\ldots,n-1$ provide the auxiliary terms, which supply additional supervision to the shared modules $F_\theta$ and $G_\theta$ across recursion depths and are used only to improve training. The recursive computation of \eqref{eq:npd_full_loss} along the SC decoding tree is given in Algorithm~\ref{alg:training-nsc} in Appendix~\ref{app:implementation_details}. Further implementation details are provided in~\cite{aharoni2023polar, aharoni2026optimized}.

The complete training procedure for the NPD is summarized in Algorithm \ref{alg:memory}, where the weights of $\theta$ are randomly initialized, the block length is set as $N = 2^{n_\mathsf{t}}$ for a fixed predetermined $n_\mathsf{t}$, and the loss $\cL(x_1^N)$ in \eqref{eq:npd_full_loss} of each sampled block is used for an SGD update of $\theta$ until convergence or until the CE loss no longer improves. Note that Algorithm \ref{alg:memory} uses both $u^N$ and $x^N$ to tune the parameters $\theta$. This is true only for the training phase, for the exclusive purpose of learning the parameters $\theta$. After the training phase, the trained parameters $\theta^\ast$ are used for decoding following the procedure in \cite{arikan2009channel,wang2015construction}.

\begin{algorithm}[t]
    \caption{Data-driven polar compressor}
    \label{alg:memory}
    \textbf{input:}
    Dataset $\cD_{M,N}$, block length $n_\mathsf{t}$, \# of information symbols $k$ \\
    \textbf{output:} Unpredictable set $\cA$, $E_\theta, G_\theta, F_\theta,H_\theta$
    \algrule
    \begin{algorithmic}
    \State Initialize the weights of $E_\theta, F_\theta, G_\theta, H_\theta$
    \State Set $N=2^{n_\mathsf{t}}$
    \For{$t = 1$ to $\mathsf{N_{iters}}$}
        \State Sample $x_1^N\sim\cD_{M,N}$
        \State Compute $u_1^N = x_1^NG_N$
        \State Compute $e_{0,1}^N = E_{\theta}(0^N)$
        \State Compute $\cL(x_{1}^N)=\f{\mathsf{Loss}}{e_{0,1}^N, u_1^N, 0}$ in \eqref{eq:npd_full_loss} using Algorithm~\ref{alg:training-nsc}
        \State Minimize $\cL(x_{1}^N)$ w.r.t. $\theta$ using SGD
    \EndFor
    \State Evaluate $\mathbf{L_i}$ for all $i\in[N]$ using \eqref{eq: loss2}
    \State $\cA = \arg\text{sort}\left\{\mathbf{L_1^N}\right\}_{N-k}^N$
    \end{algorithmic}
\end{algorithm}
Importantly, the set $\cA$ of unpredictable symbols is not part of the training procedure itself. Instead, after training terminates, the empirical leaf-wise cross-entropy values are used as estimates of the conditional entropies $\mathsf{H}(U_i \mid U^{i-1})$ to identify the set $A$.
Specifically, for each $i=1,\dots, N$, we evaluate
\begin{align}\label{eq: loss2}
&\mathbf{L}_i = -\frac{1}{M}\sum_{x_1^N\sim \cD_{M,N}}\log P^{\theta}_{U_i|U^{i-1}}(u_i|u_{1}^{i-1}),
\end{align}
and the set $\cA$ is obtained by selecting the top $k$ indices with largest values in $\mathbf{L}_1^N$, denoted by
\begin{align}
 \cA = \arg\text{sort}(\mathbf{L}_1^N)_{N-k}^N.
\end{align}
For a polar message comprising $|\mathcal A|$ transmitted coordinates over an alphabet of cardinality $q$, the rate is $R=|\mathcal A|\log q/N$ bits per source symbol, with $q=|\mathcal X|$ in the present lossless setting.
\begin{remark}\label{rem:npd_complexity}
In contrast to analytical SC-based polar decoders, which require explicit knowledge of the source model to compute the relevant conditional distributions, the NPD learns the decoding operations directly from samples. Moreover, as shown in \cite[Thm.~$4$]{aharoni2023polar}, for a fixed internal neural architecture with hidden width $k$ and embedding dimension $d$, the recursive part of its computational complexity is $O\left(k d N \log N\right)$, with no dependence on $|\mathcal S|$. In the non-binary setting, the softmax head $H_\theta$ adds a linear dependence on $|\mathcal{X}|$ through its output logits when evaluated at the leaves, while the NPD still avoids the explicit $|\mathcal{X}|^2$ alphabet enumeration and $|\mathcal{S}|^3$ state-space enumeration required by the analytical SC/SCT recursions of Section~\ref{sec:polar_alphabet_memory}.
\end{remark}

\subsection{Experiments}
In this section, we evaluate the performance of the NPDs for source coding. The experiments demonstrate two main points. First, the NPD accurately learns to imitate the operations of the SC and SCT decoders across different alphabet sizes and memory orders. Second, the learned decoder remains applicable in regimes where the analytical recursions of Sections~\ref{sec:sc} and~\ref{sec:sct} become prohibitive even for moderate alphabet sizes or memory orders (Remark~\ref{rem:npd_complexity}), as well as on real-world data, for which the classical SC/SCT algorithms cannot be directly applied since the underlying source model and the corresponding analytical decoding operations are unknown.

\begin{figure}[t]
     \centering
     \hspace{-3.0cm}
     \begin{subfigure}[]{0.24\textwidth}
        \centering
\begin{tikzpicture}

\definecolor{chocolate2267451}{RGB}{226,74,51}
\definecolor{dimgray85}{RGB}{85,85,85}
\definecolor{gainsboro229}{RGB}{229,229,229}

\begin{axis}[
width=2.45in,
height=2.3in,
log basis y={10},
x grid style={gainsboro229},
xlabel={\(\displaystyle n\)},
xlabel style={font=\normalsize},
xmajorgrids,
xmin=2.6, xmax=11.5,
xtick style={color=dimgray85},
y grid style={gainsboro229},
ylabel={Symbol error rate},
ylabel style={font=\normalsize},
ymajorgrids,
ymin=3e-05, ymax=1.168600008718341,
ymode=log,
ytick style={color=dimgray85},
legend style={at={(0.05,0.05)},anchor=south west, nodes={scale=0.85, transform shape} }
]
\addplot [very thick, violet, mark=*, mark size=1, mark options={solid}]
table {%
3  2.531e-01
4  1.382e-01
5  6.964e-02
6  5.697e-02
7  4.402e-02
8  2.106e-02
9  9.502e-03
10 3.878e-03
11 4.568e-04
};
\addlegendentry{SC}
\addplot [very thick, chocolate2267451, mark=*, mark size=1, mark options={solid}]
table {%
3  2.331e-01
4  1.143e-01
5  6.664e-02
6  5.717e-02
7  4.232e-02
8  2.366e-02
9  9.102e-03
10 3.348e-03
11 4.158e-04
};
\addlegendentry{NPD}
\end{axis}

\end{tikzpicture}
        \label{fig:5_source}
     \end{subfigure}
    \hspace{3cm}
     \begin{subfigure}[]{0.24\textwidth}
        \centering
\begin{tikzpicture}

\definecolor{chocolate2267451}{RGB}{226,74,51}
\definecolor{dimgray85}{RGB}{85,85,85}
\definecolor{gainsboro229}{RGB}{229,229,229}

\begin{axis}[
width=2.45in,
height=2.3in,
log basis y={10},
x grid style={gainsboro229},
xlabel={\(\displaystyle n\)},
xlabel style={font=\normalsize},
xmajorgrids,
xmin=2.6, xmax=11.5,
xtick style={color=dimgray85},
y grid style={gainsboro229},
ylabel={Symbol error rate},
ylabel style={font=\normalsize},
ymajorgrids,
ymin=8e-05, ymax=3.568600008718341e-01,
ymode=log,
ytick style={color=dimgray85},
legend style={at={(0.05,0.05)},anchor=south west, nodes={scale=0.85, transform shape} }
]
\addplot [very thick, chocolate2267451, mark=*, mark size=1, mark options={solid}]
table {%
3  1.354e-01
4  5.314e-02
5  3.166e-02
6  2.053e-02
7  9.905e-03
8  5.261e-03
9  2.827e-03
10 1.099e-03
11 3.128e-04
};
\addlegendentry{NPD}
\end{axis}

\end{tikzpicture}
        \label{fig:23}
     \end{subfigure}
     \caption{The left figure compares symbol error rates (SERs) achieved by the SC decoder and the NPD for a source with alphabet $|\mathcal{X}|=5$ and probability vector $P_X=[0.15,0.15,0.05,0.05,0.6]$ at rate $R=1.95$. The right figure shows the SERs achieved by the NPD for a memoryless source with alphabet $|\mathcal{X}|=23$ and entropy $\mathsf{H}(X)=2.0176$ at rate $R=3.2$.}
     \label{fig:alphabet}
\end{figure}

\subsubsection{Memoryless Sources with Large Alphabet Sizes}

We begin with experiments on memoryless sources to assess the behavior of the NPDs for increasing alphabet sizes. Figure \ref{fig:alphabet} shows results for a source with alphabet size $|\mathcal{X}|=5$ and probability vector $$P_X = [0.15,\,0.15,\,0.05,\,0.05,\,0.6].$$
For different block lengths, the symbol error rates (SERs) obtained by the NPD closely follow those achieved by the SC decoder described in Section~\ref{sec:polar_alphabet_memory}. This shows that the NPD learns the decoding operations of the SC algorithm directly from data, without requiring analytical knowledge of the source model.

Figure~\ref{fig:alphabet} also presents results for a memoryless source with alphabet size $|\mathcal{X}|=23$ and entropy $\mathsf{H}(X)=2.0176$. For this alphabet size, classical SC decoding is computationally infeasible with our implementation because of the explicit $|\mathcal{X}|^2$ alphabet enumeration in \eqref{eq:sc1}. We therefore report only the NPD results. The NPD uses fixed neural update modules together with a softmax head whose output cost is linear in $|\mathcal{X}|$, and exhibits the expected polarization behavior as the SER decreases with increasing block length. This demonstrates that the NPD enables polar-based source coding in large-alphabet settings.

\subsubsection{Sources With Memory}
The left subfigure of Figure~\ref{fig:memory} illustrates the bit error rates (BERs) for a binary second-order Markov source. The performance of the NPD closely follows that of the analytical SCT decoder described in Section~\ref{sec:sct}, confirming that the NPD successfully learns the trellis-based recursion in Lemma~\ref{lem:sct}, without requiring access to the underlying Markov transition model.

\begin{figure}[t]
     \centering
     \hspace{-3.0cm}
     \begin{subfigure}[]{0.24\textwidth}
        \centering
\begin{tikzpicture}

\definecolor{chocolate2267451}{RGB}{226,74,51}
\definecolor{dimgray85}{RGB}{85,85,85}
\definecolor{gainsboro229}{RGB}{229,229,229}

\begin{axis}[
width=2.45in,
height=2.3in,
log basis y={10},
x grid style={gainsboro229},
xlabel={\(\displaystyle n\)},
xlabel style={font=\normalsize},
xmajorgrids,
xmin=3.6, xmax=10.5,
xtick style={color=dimgray85},
y grid style={gainsboro229},
ylabel={Bit error rate},
ylabel style={font=\normalsize},
ymajorgrids,
ymin=3e-06, ymax=0.168600008718341,
ymode=log,
ytick style={color=dimgray85},
legend style={at={(0.05,0.05)},anchor=south west, nodes={scale=0.85, transform shape} }
]
\addplot [very thick, violet, mark=*, mark size=1, mark options={solid}]
table {%
4  3.390e-02
5  2.076e-02
6  1.331e-02
7  2.762e-03
8  6.447e-04
9  1.282e-04
10 1.835e-05
};
\addlegendentry{SCT}

\addplot [very thick, chocolate2267451, mark=*, mark size=1, mark options={solid}]
table {%
4  3.287e-02
5  2.044e-02
6  1.359e-02
7  2.559e-03
8  6.016e-04
9  1.385e-04
10 1.935e-05
};
\addlegendentry{NPD}
\end{axis}

\end{tikzpicture}
        \label{fig:Markov}
     \end{subfigure}
    \hspace{3cm}
     \begin{subfigure}[]{0.24\textwidth}
        \centering
\begin{tikzpicture}

\definecolor{chocolate2267451}{RGB}{226,74,51}
\definecolor{dimgray85}{RGB}{85,85,85}
\definecolor{gainsboro229}{RGB}{229,229,229}

\begin{axis}[
width=2.45in,
height=2.3in,
log basis y={10},
x grid style={gainsboro229},
xlabel={\(\displaystyle n\)},
xlabel style={font=\normalsize},
xmajorgrids,
xmin=2.6, xmax=19.5,
xtick style={color=dimgray85},
y grid style={gainsboro229},
ylabel={Lossless rate},
ylabel style={font=\normalsize},
ymajorgrids,
ymin=0.4, ymax=0.9,
ytick style={color=dimgray85},
legend style={at={(0.05,0.05)},anchor=south west, nodes={scale=0.85, transform shape} }
]
\addplot [ultra thick, blue]
table {%
3  0.5944
4  0.5944
5  0.5944
6  0.5944
7  0.5944
8  0.5944
9  0.5944
10 0.5944
11 0.5944
12 0.5944
13 0.5944
14 0.5944
15 0.5944
16 0.5944
17 0.5944
18 0.5944
19 0.5944
};
\addlegendentry{Entropy rate}
\addplot [very thick, chocolate2267451, mark=*, mark size=1, mark options={solid}]
table {%
3  0.8712
4  0.8138
5  0.7597
6  0.7323
7  0.7129
8  0.7017
9  0.6881
10 0.6772
11 0.6677
12 0.657
13 0.649
14 0.64
15 0.634
16 0.63
17 0.627
18 0.624
19 0.621
};
\addlegendentry{NPD}
\end{axis}

\end{tikzpicture}
        \label{fig:markov6}
     \end{subfigure}
     \caption{The left subfigure compares the bit error rates (BERs) achieved by the SCT decoder and the NPD on a second-order Markov source. The right subfigure shows the average zero-error lossless compression rate achieved by the NPD, including the bits used to encode $T$ in \eqref{eq:lossless_correction_rate}, as a function of block size parameter $n$ for a 6th-order Markov source model with an entropy rate of $0.595$.}
     \label{fig:memory}
\end{figure}
To evaluate scalability to larger memory orders, the right subfigure of Figure~\ref{fig:memory} presents results for a sixth-order Markov source using the correction scheme of Section~\ref{sec:lossless_correction}, which guarantees exact reconstruction at every block length. In this case, the analytical SCT recursion has $|\mathcal{S}|=|\mathcal{X}|^{6}$ states, and its $|\mathcal{S}|^{3}$ dependence makes it computationally impractical. We therefore do not run this decoder. Instead, we compare the average lossless compression rate $(|A|+\mathbb E[\ell(T)])/N$, including the correction overhead, with the theoretical entropy rate. As the block length increases, the compression rate achieved by the NPD approaches the entropy rate, demonstrating that the learned polar scheme captures long-range dependencies even in high-order Markov processes while eliminating the complexity bottleneck of SCT decoding.

\subsubsection{Real-world data}
In addition to the synthetic experiments, we further evaluate the proposed approach on real-world genomic data using the \texttt{E. coli} file from the Large Corpus of the Canterbury Corpus collection~\cite{canterbury_large_corpus}. The file contains the complete genome of the \emph{Escherichia coli} bacterium and is listed in the Canterbury Corpus with size $4{,}638{,}690$ bytes.

For the \texttt{E. coli} experiment, each source block is transformed as
$U^N=X^NG_N$. The trained NPD provides the sequential probabilities
$P_\theta(U_i|U^{i-1})$ through its SC computation graph, and these
probabilities drive an arithmetic coder that losslessly encodes $U^N$.
Because $G_N$ is invertible, decoding $U^N$ gives exact recovery of $X^N$.
Combining the NPD probabilities with arithmetic coding yields a lossless
compression rate of $1.883$ bpb, which is comparable to the $1.885$ bpb NNCP result
in~\cite{yomjinda2020lossless_dna}. Arithmetic coding is also used by neural DNA compressors that
combine a learned sequential probability model with an arithmetic
coder~\cite{yomjinda2020lossless_dna}. The purpose of this experiment is to
demonstrate applicability to real-world genomic data rather than to claim a
performance improvement. Specifically, Table~\ref{tab:ecoli_comparison} lists
representative final compression rates for the same \texttt{E. coli} file. These rates are not directly comparable because our evaluation measures the coding rate on held-out validation data using an NPD trained on a separate portion of the sequence, whereas the published baselines in~\cite{yomjinda2020lossless_dna} use adaptive whole-file compression. The implementation details for this experiment, including the NPD architecture, training parameters, DNA preprocessing, and rate accounting, are summarized in Appendix~\ref{app:implementation_details}.

\begin{table}[t]
\centering
\caption{Representative final bits-per-base (bpb) compression rates on the
\texttt{E. coli} benchmark. The proposed rate is obtained by arithmetic coding
driven by the NPD probabilities. The remaining rates are taken from~\cite[Table~11]{yomjinda2020lossless_dna}, where the two NNCP configurations differ in batch size and in (time steps, segment length).}
\label{tab:ecoli_comparison}
\begin{tabular}{lc}
\hline
Method & Rate (bpb) \\
\hline
Proposed NPD & $1.883$ \\
LSTM (NNCP), $4\times352$, batch $8$, $(24,24)$ & $1.885$ \\
LSTM-Stateful, $4\times352$ & $1.893$ \\
LSTM (NNCP), $4\times352$, batch $16$, $(20,20)$ & $1.895$ \\
Bi-LSTM, $4\times32$ & $1.900$ \\
PAQ8l & $1.900$ \\
PAQ8a & $1.902$ \\
TRFCP, $4\times256$ & $1.913$ \\
LSTM+2DCNN & $1.933$ \\
PPM(3) & $1.964$ \\
\hline
\end{tabular}
\end{table}

\section{Lossy Compression via NPD}
\label{sec: RD}
Building on the lossless NPD construction in Section~\ref{sec: methodology},
we now turn to lossy compression, whose fundamental
limits are described by rate--distortion theory. Rather than reconstructing $X^N$ exactly, the compressed
representation is allowed to produce an approximation $\hat X^N$. We
measure the discrepancy between the source and reconstruction by a
single-letter distortion measure
$d:\mathcal X\times \mathcal X\to \mathbb{R}_{\ge 0}$, with
$d_{\max}\triangleq\max_{x,\hat x\in\mathcal X}d(x,\hat x)<\infty$.
The normalized additive block distortion is
\begin{equation}
    d_N(x^N,\hat x^N)
    \triangleq
    \frac{1}{N}
    \sum_{i=1}^{N}
    d(x_i,\hat x_i).
    \label{eq:normalized_block_distortion}
\end{equation}
The compression scheme is required to satisfy an average per-letter
distortion constraint $D\in\mathbb{R}_{\ge 0}$.

For stationary ergodic sources, Gallager~\cite{gallager1968information}
showed that the rate--distortion function is characterized by a
multi-letter optimization. In particular, the $N$-th order
rate--distortion function is given by
\begin{equation}
    R_N(D)
    =
    \frac{1}{N}
    \min_{P_{\hat X^N|X^N}}
    \mathsf{I}(X^N;\hat X^N)
    \quad
    \text{subject to}
    \quad
    \mathbb{E}
    \left[
        d_N(X^N,\hat X^N)
    \right]
    \le D,
    \label{eq:rd_problem}
\end{equation}
and the operational rate--distortion function $R(D)$ is obtained as
$N\to\infty$.

In the following, we propose a practical framework that jointly estimates
the rate--distortion curve and then constructs an explicit lossy coding scheme.
The proposed method operates directly on source samples, applies to
stationary ergodic sources with or without memory, and learns both the
test channel and the corresponding coding rule.
Section~\ref{sec:RD_pmf} presents the neural parameterization of the test
channel $P_{\hat X^N|X^N}^{\phi}$, Section~\ref{sec:RD_estimation}
describes how NPDs estimate the mutual information for a fixed test channel, and Section~\ref{sec:RD_opt} presents the test-channel optimization step, the resulting alternating scheme, and the lossy polar code construction.

\subsection{Neural Test Channel}
\label{sec:RD_pmf}

To model the joint distribution between the source and its reconstruction,
we use a parametric test channel $P_{\hat X^N|X^N}^{\phi}$, where $\phi$
denotes the collection of its parameters. We do not restrict attention to
memoryless test channels. Instead, we allow
$P_{\hat X^N|X^N}^{\phi}$ to capture temporal dependencies across the
block. This generality is essential for stationary ergodic sources beyond
the i.i.d.\ setting.

The test channel is parameterized by an encoder--decoder transformer
operating on length-$N$ blocks~\cite{vaswani2017attention}. The encoder
processes the complete source block $x^N$, including positional embeddings.
At index $i$, the decoder uses causal self-attention over the embeddings of
the preceding reconstruction symbols $\hat x^{i-1}$ and cross-attention to
all encoder outputs. Denoting the resulting transformer mapping by
$T_\phi$, we define
$o_i^\phi(x^N,\hat x^{i-1})=T_\phi(x^N,\hat x^{i-1},i)
\in\mathbb R^{|\mathcal X|}$, where $\hat x^0$ is the empty sequence.
Applying the softmax function to $o_i^\phi$ gives
\[
    P_{\hat X_i|X^N,\hat X^{i-1}}^{\phi}
    (a|x^N,\hat x^{i-1})
    =
    \frac{\exp\!\left(o_i^\phi(x^N,\hat x^{i-1})_a\right)}
    {\sum_{b\in\mathcal X}
    \exp\!\left(o_i^\phi(x^N,\hat x^{i-1})_b\right)},
    \qquad a\in\mathcal X.
\]
The resulting block test channel has the chain-rule factorization
\begin{equation}
    P_{\hat X^N|X^N}^{\phi}(\hat x^N|x^N)
    =
    \prod_{i=1}^{N}
    P_{\hat X_i|X^N,\hat X^{i-1}}^{\phi}
    (\hat x_i|x^N,\hat x^{i-1}).
    \label{eq:test_channel_factorization}
\end{equation}
The normalized factors in \eqref{eq:test_channel_factorization} define a
valid conditional distribution, and every $P_{\hat X^N|X^N}$ admits this
factorization by the chain rule. The transformer restricts
\eqref{eq:rd_problem} to the family
$\{P_{\hat X^N|X^N}^{\phi}\}_{\phi}$, with each symbol conditioned on the
complete source block.

\subsection{Mutual Information Estimation}
\label{sec:RD_estimation}

For lossy compression, the fundamental quantity of interest is the mutual information induced by the
test channel. Following the NPD-based mutual information estimator
in~\cite{aharoni2026optimized}, we next describe how this quantity is
estimated using the NPD of Section~\ref{sec: methodology} for a fixed neural test channel.

For a given $\phi$, the source distribution and the neural test channel
induce the joint distribution
\[
    P_{X^N,\hat X^N}^{\phi}(x^N,\hat x^N)
    =
    P_{X^N}(x^N)
    P_{\hat X^N|X^N}^{\phi}(\hat x^N|x^N).
\]
Let $U^N=\hat X^N G_N$. Since the mapping between $\hat X^N$ and $U^N$
is bijective, the mutual information can be written in the polar domain as
\begin{align}
    \mathsf{I}_\phi(X^N;\hat X^N)
    &=
    \mathsf{I}_\phi(X^N;U^N) \nonumber\\
    &=
    \mathsf{H}_\phi(U^N)-\mathsf{H}_\phi(U^N|X^N).
    \label{eq:MI_dist}
\end{align}
Thus, the mutual information can be computed by estimating the
unconditional entropy of $U^N$ and the conditional entropy of $U^N$ given $X^N$.

Both entropy terms are estimated using an NPD, but with different embedding
networks. To this end, we use two parameter sets,
\begin{align*}
    \theta^{\mathrm{const}}
    &=
    \{E_{\theta_{E_1}},F_{\theta_F},G_{\theta_G},H_{\theta_H}\},
    \\
    \theta^{\mathrm{cond}}
    &=
    \{E_{\theta_{E_2}},F_{\theta_F},G_{\theta_G},H_{\theta_H}\},
\end{align*}
where the core components $F_{\theta_F}$, $G_{\theta_G}$, and
$H_{\theta_H}$ are shared, while the embedding networks are different.
The entropies in \eqref{eq:MI_dist} are determined by the test-channel
distribution parameterized by $\phi$. The NPD parameters $\theta$ affect
only their estimates. The parameter set
$\theta^{\mathrm{const}}$ is used to estimate
$\mathsf{H}_\phi(U^N)$ and employs the constant embedding
\[
    \mathbf e_0^{\mathrm{const}}
    =
    E_{\theta_{E_1}}(0^N)
    \in \mathbb{R}^{N\times d},
\]
following the procedure described in Section~\ref{sec: methodology}. In
contrast, $\theta^{\mathrm{cond}}$ is used to estimate
$\mathsf{H}_\phi(U^N|X^N)$ and employs the source-dependent embedding
\[
    \mathbf e_0^{\mathrm{cond}}
    =
    E_{\theta_{E_2}}(x^N)
    \in \mathbb{R}^{N\times d},
\]
which provides the NPD with the source realization and enables it
to approximate the conditional distributions
$P_{U_i|U^{i-1},X^N}^{\phi}$.

Let
\[
    \mathcal D_{M,N}^{\phi}
    =
    \{(x_j^N,\hat x_j^N)\}_{j=1}^{M}
    \sim
    P_{X^N,\hat X^N}^{\phi}
\]
be a sample set generated from the source and the fixed test channel. For
each pair $(x_j^N,\hat x_j^N)$, we compute
$u_j^N=\hat x_j^N G_N$. The NPD losses associated with the unconditional
and conditional entropy estimates are
\begin{align}
    \mathcal L_{\mathrm{const}}(\hat x_j^N)
    &=
    -\frac{1}{N}
    \sum_{i=1}^{N}
    \log
    P_{U_i|U^{i-1}}^{\theta^{\mathrm{const}}}
    (u_{j,i}|u_j^{i-1}),
    \label{eq:loss_const_mi}
    \\
    \mathcal L_{\mathrm{cond}}(x_j^N,\hat x_j^N)
    &=
    -\frac{1}{N}
    \sum_{i=1}^{N}
    \log
    P_{U_i|U^{i-1},X^N}^{\theta^{\mathrm{cond}}}
    (u_{j,i}|u_j^{i-1},x_j^N).
    \label{eq:loss_cond_mi}
\end{align}
To distinguish the learned estimates from the exact entropy terms, define
\begin{align*}
    \widehat{\mathsf H}_{\theta^{\mathrm{const}}}^{\phi}(U^N)
    &:=
    \frac{N}{M}
    \sum_{j=1}^{M}
    \mathcal L_{\mathrm{const}}(\hat x_j^N),
    \\
    \widehat{\mathsf H}_{\theta^{\mathrm{cond}}}^{\phi}(U^N|X^N)
    &:=
    \frac{N}{M}
    \sum_{j=1}^{M}
    \mathcal L_{\mathrm{cond}}(x_j^N,\hat x_j^N).
\end{align*}
Here, the subscript identifies the NPD parameters, while the superscript
$\phi$ indicates that the empirical averages are evaluated on
$\mathcal D_{M,N}^{\phi}$. These quantities estimate
$\mathsf H_\phi(U^N)$ and
$\mathsf H_\phi(U^N|X^N)$, respectively. Accordingly, the normalized
mutual information is estimated by
\begin{equation}
    \begin{aligned}
        \frac{1}{N}
        \widehat{\mathsf{I}}_{\theta}(\mathcal D_{M,N}^{\phi})
        &:=
        \frac{1}{N}
        \left[
            \widehat{\mathsf H}_{\theta^{\mathrm{const}}}^{\phi}(U^N)
            -
            \widehat{\mathsf H}_{\theta^{\mathrm{cond}}}^{\phi}(U^N|X^N)
        \right]
        \\
        &=
        \frac{1}{M}
        \sum_{j=1}^{M}
        \left[
            \mathcal L_{\mathrm{const}}(\hat x_j^N)
            -
            \mathcal L_{\mathrm{cond}}(x_j^N,\hat x_j^N)
        \right],
    \end{aligned}
    \label{eq:npd_mi_estimator}
\end{equation}
where $\theta=\{\theta^{\mathrm{const}},\theta^{\mathrm{cond}}\}$.

We use \eqref{eq:npd_mi_estimator} as a plug-in mutual-information
estimate. Consistency of this construction is established
in~\cite[Theorem~4]{aharoni2026optimized}
for binary-input indecomposable finite-state channels under the stated
neural-approximation and loss-minimization conditions. 

\subsection{Distortion-Constrained Mutual Information Optimization}
\label{sec:RD_opt}

We now describe the update of the test-channel parameters $\phi$, which is performed with the NPD parameters held fixed and used only to evaluate the mutual-information term.
Following the $N$-th order rate--distortion formulation in
\eqref{eq:rd_problem}, we consider the Lagrangian
\begin{equation}
    J_\lambda(\phi)
    \triangleq
    \frac{1}{N} \mathsf{I}_\phi(X^N;\hat X^N)
    +
    \lambda
    \mathbb{E}_{P_{X^N,\hat X^N}^{\phi}}
    \left[
        d_N(X^N,\hat X^N)
    \right],
    \label{eq:rd_lagrangian}
\end{equation}
where $\lambda\ge 0$ controls the rate--distortion tradeoff.

Let
\begin{equation}
    \imath_\phi(x^N,\hat x^N)
    \triangleq
    \log
    \frac{
        P_{U^N|X^N}^{\phi}(u^N|x^N)
    }{
        P_{U^N}^{\phi}(u^N)
    },
    \qquad
    u^N = \hat x^N G_N ,
    \label{eq:rd_information_density}
\end{equation}
denote the information density in bits induced by
$P_{X^N,\hat X^N}^{\phi}$. The minimization of
\eqref{eq:rd_lagrangian} requires gradients with respect to the
test-channel parameters. Since $\hat X^N$ is sampled from the discrete
test channel $P_{\hat X^N|X^N}^{\phi}$, the sampling operation is not
directly differentiable. To estimate the gradient $\nabla_\phi J_\lambda(\phi)$, we employ the score-function (REINFORCE) estimator \cite{williams1992reinforce}. Specifically, the following theorem gives a score-function
representation, which enables
Monte-Carlo gradient estimation from samples drawn according to
$P_{X^N,\hat X^N}^{\phi}$.

\begin{theorem}\label{th:pg_rd}
Under the softmax test-channel parameterization in
Section~\ref{sec:RD_pmf}, assume that the test-channel probabilities are
differentiable with respect to $\phi$. Then the gradient of
$J_\lambda(\phi)$ is
\begin{equation}
    \nabla_\phi J_\lambda(\phi)
    =
    \mathbb{E}_{P_{X^N,\hat X^N}^{\phi}}
    \left[
        \nabla_\phi
        \ln
        P_{\hat X^N|X^N}^{\phi}(\hat X^N|X^N)
        \left(
            \frac{1}{N}\imath_\phi(X^N,\hat X^N)
            +
            \lambda d_N(X^N,\hat X^N)
        \right)
    \right].
    \label{eq:rd_exact_gradient}
\end{equation}
\end{theorem}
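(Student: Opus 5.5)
The plan is to differentiate the two terms of $J_\lambda(\phi)$ in \eqref{eq:rd_lagrangian} separately, as finite sums. Everything lives on finite alphabets, so every expectation is a finite sum over $(x^N,\hat x^N)\in\mathcal X^N\times\mathcal X^N$. Gradients can therefore be exchanged with sums without any dominated-convergence argument. I write $Q_\phi(\hat x^N|x^N)=P^{\phi}_{\hat X^N|X^N}(\hat x^N|x^N)$.

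The softmax parameterization of Section~\ref{sec:RD_pmf} together with the factorization \eqref{eq:test_channel_factorization} gives $Q_\phi>0$ everywhere. Hence $\ln Q_\phi$ is differentiable, and the log-derivative identity $\nabla_\phi Q_\phi = Q_\phi\nabla_\phi\ln Q_\phi$ holds pointwise. Similarly, $P^{\phi}_{\hat X^N}(\hat x^N)=\sum_{x^N}P_{X^N}(x^N)Q_\phi(\hat x^N|x^N)$ is positive and differentiable.

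\emph{Distortion term.} The source law $P_{X^N}$ does not depend on $\phi$, so
\[
\nabla_\phi \mathbb E[d_N]=\sum_{x^N,\hat x^N}P_{X^N}(x^N)\,d_N(x^N,\hat x^N)\,\nabla_\phi Q_\phi(\hat x^N|x^N).
\]
Applying the log-derivative identity turns this directly into $\mathbb E[\nabla_\phi\ln Q_\phi\cdot d_N]$.

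\emph{Mutual-information term.} Since $G_N$ is bijective, $P^\phi_{U^N|X^N}(u^N|x^N)=Q_\phi(\hat x^N|x^N)$ and $P^\phi_{U^N}(u^N)=P^\phi_{\hat X^N}(\hat x^N)$ for $u^N=\hat x^NG_N$. Hence $\mathsf I_\phi=\sum P_{X^N}Q_\phi\,\imath_\phi$. The product rule splits its gradient into two parts. The first part is $\sum P_{X^N}(\nabla_\phi Q_\phi)\imath_\phi=\mathbb E[\nabla_\phi\ln Q_\phi\,\imath_\phi]$. The second part is $\mathbb E[\nabla_\phi\imath_\phi]$, and I must show that it vanishes.

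\emph{Main step: the second part vanishes.} This is where the obstacle lies, because $\imath_\phi$ depends on $\phi$ through both the numerator and the marginal in the denominator. In natural-log units, $\nabla_\phi\imath_\phi=\nabla_\phi\ln Q_\phi-\nabla_\phi\ln P^\phi_{\hat X^N}$. I handle the two pieces separately.

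For the numerator piece,
\[
\mathbb E[\nabla_\phi\ln Q_\phi]=\sum_{x^N}P_{X^N}(x^N)\,\nabla_\phi\sum_{\hat x^N}Q_\phi(\hat x^N|x^N)=\sum_{x^N}P_{X^N}(x^N)\,\nabla_\phi 1=0.
\]

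For the marginal piece, the expectation depends only on $\hat X^N$, so
\[
\mathbb E[\nabla_\phi\ln P^\phi_{\hat X^N}]=\sum_{\hat x^N}\nabla_\phi P^\phi_{\hat X^N}(\hat x^N)=\nabla_\phi 1=0.
\]
With base-2 logarithms, the conversion factor $1/\ln 2$ multiplies a zero, so it is harmless.

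\emph{Assembly.} Dividing the mutual-information gradient by $N$, adding $\lambda$ times the distortion gradient, and combining under a single expectation yields \eqref{eq:rd_exact_gradient}.
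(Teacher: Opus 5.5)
Your proof is correct and follows the paper's own argument essentially step by step. You apply the score-function identity to the distortion term, write the mutual information as $\mathbb{E}[\imath_\phi]$, use the product rule, and show that $\mathbb{E}[\nabla_\phi \imath_\phi]=0$ by normalization of both $P^{\phi}_{\hat X^N|X^N}$ and the induced marginal. Your explicit observation that the softmax makes both of these strictly positive is a slightly cleaner justification of well-definedness than the paper's restriction of sums to the support of $P_{X^N}$.
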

The proof of Theorem~\ref{th:pg_rd} is given in
Appendix~\ref{app:PG}. It differentiates the mutual-information and distortion
terms separately using the score-function identity. For the
mutual-information term, the expected derivative of the information density
vanishes by normalization, leaving the expression in
\eqref{eq:rd_exact_gradient}.

In the data-driven setting, the exact information density in
\eqref{eq:rd_information_density} is unavailable, and we replace it
by the NPD-based plug-in estimate obtained in the estimation step, in which the
conditional and unconditional distributions are approximated by
$P_{U^N|X^N}^{\theta^{\mathrm{cond}}}$ and
$P_{U^N}^{\theta^{\mathrm{const}}}$, respectively. Thus, for fixed NPD
parameters
$\theta=\{\theta^{\mathrm{const}},\theta^{\mathrm{cond}}\}$ during the
test-channel update, we define
\begin{align}
    \hat \imath_{\theta}(x^N,\hat x^N)
    &\triangleq
    \log
    \frac{
        P_{U^N|X^N}^{\theta^{\mathrm{cond}}}(u^N|x^N)
    }{
        P_{U^N}^{\theta^{\mathrm{const}}}(u^N)
    }
    \nonumber\\
    &=
    \sum_{i=1}^{N}
    \log
    \frac{
        P_{U_i|U^{i-1},X^N}^{\theta^{\mathrm{cond}}}
        (u_i|u^{i-1},x^N)
    }{
        P_{U_i|U^{i-1}}^{\theta^{\mathrm{const}}}
        (u_i|u^{i-1})
    },
    \qquad
    u^N=\hat x^N G_N .
    \label{eq:rd_npd_information_density}
\end{align}

For a mini-batch $\mathcal D_{M,N}^{\phi}$ as in Section~\ref{sec:RD_estimation},
drawn independently at the current $\phi$ with the NPD parameters held fixed, substituting
\eqref{eq:rd_npd_information_density} for the exact information density in
\eqref{eq:rd_exact_gradient} gives the plug-in
score-function gradient estimate
\begin{equation}
    \widehat{\nabla_\phi J_\lambda}
    (\theta,\phi;\mathcal D_{M,N}^{\phi})
    =
    \frac{1}{M}
    \sum_{j=1}^{M}
    \nabla_\phi
    \ln
    P_{\hat X^N|X^N}^{\phi}(\hat x_j^N|x_j^N)
    \left[
        \frac{1}{N}
        \hat \imath_{\theta}(x_j^N,\hat x_j^N)
        +
        \lambda d_N(x_j^N,\hat x_j^N)
    \right].
    \label{eq:rd_empirical_score_gradient}
\end{equation}
For fixed $\theta$, \eqref{eq:rd_empirical_score_gradient} is an unbiased
estimate of
$\nabla_\phi\,\mathbb{E}_{P_{X^N,\hat X^N}^{\phi}}
[N^{-1}\hat\imath_{\theta}(X^N,\hat X^N)+\lambda d_N(X^N,\hat X^N)]$,
the gradient of the learned surrogate objective in which $\hat\imath_\theta$
is held fixed, but not necessarily of $\nabla_\phi J_\lambda(\phi)$. When $P_{U^N|X^N}^{\theta^{\mathrm{cond}}}$ and $P_{U^N}^{\theta^{\mathrm{const}}}$
coincide with $P_{U^N|X^N}^{\phi}$ and $P_{U^N}^{\phi}$ at the current
$\phi$, then $\hat\imath_\theta=\imath_\phi$ and, by
Theorem~\ref{th:pg_rd}, the estimate is unbiased for
$\nabla_\phi J_\lambda(\phi)$.

The test-channel parameters are then updated according to
\begin{equation}
    \phi
    \leftarrow
    \phi
    -
    \gamma
    \widehat{\nabla_\phi J_\lambda}
    (\theta,\phi;\mathcal D_{M,N}^{\phi}).
    \label{eq:rd_phi_update}
\end{equation}

This optimization step is alternated with the NPD estimation
step of Section~\ref{sec:RD_estimation}: for the
current test channel, the NPD parameters are updated to estimate the
mutual-information term, and then, with the NPD parameters fixed, the test
channel is updated using \eqref{eq:rd_empirical_score_gradient}. Before
the first update of $\phi$, a short warm-up stage updates only
$\theta=\{\theta^{\mathrm{const}},\theta^{\mathrm{cond}}\}$
using $\mathcal L_{\mathrm{const}}$ and $\mathcal L_{\mathrm{cond}}$ with $\phi$ fixed,
which provides the initial estimate of the information density
needed for a meaningful first update of the test-channel parameters.
Repeating these steps jointly estimates the rate--distortion tradeoff and
learns the polarized statistics from which an explicit lossy polar coding scheme is constructed below. The resulting alternating optimization procedure is
illustrated in Fig.~\ref{fig:rd_framework_diagram} and summarized in
Algorithm~\ref{alg:RD}.

\begin{figure}[t]
    \centering
    \resizebox{0.98\textwidth}{!}{\begin{tikzpicture}[
  data/.style={
      ellipse,
      draw=blue!65!black,
      fill=blue!6,
      align=center,
      minimum height=7mm,
      minimum width=16mm,
      font=\scriptsize,
      inner sep=1.4pt
  },
  process/.style={
      rectangle,
      draw=black!60,
      rounded corners=1.5pt,
      fill=gray!8,
      align=center,
      minimum height=9mm,
      text width=23mm,
      font=\scriptsize,
      inner sep=1.5pt
  },
  npd/.style={
      rectangle,
      draw=green!55!black,
      rounded corners=1.5pt,
      fill=green!8,
      align=center,
      minimum height=11mm,
      text width=28mm,
      font=\scriptsize,
      inner sep=1.5pt
  },
  metric/.style={
      rectangle,
      draw=black!60,
      rounded corners=1.5pt,
      fill=gray!8,
      align=center,
      minimum height=10mm,
      text width=25mm,
      font=\scriptsize,
      inner sep=1.5pt
  },
  ratebox/.style={
      rectangle,
      draw=black!60,
      rounded corners=1.5pt,
      fill=gray!8,
      align=center,
      minimum height=12mm,
      text width=34mm,
      font=\scriptsize,
      inner sep=1.8pt
  },
  stepbox/.style={
      rectangle,
      draw=orange!80!black,
      rounded corners=1.5pt,
      fill=orange!10,
      align=center,
      minimum height=12mm,
      text width=30mm,
      font=\scriptsize,
      inner sep=1.7pt
  },
  arrow/.style={
      -{Stealth[length=1.6mm,width=1.1mm]},
      line width=0.48pt
  },
  feedback/.style={
      -{Stealth[length=1.6mm,width=1.1mm]},
      dashed,
      line width=0.48pt
  },
  trainarrow/.style={
      -{Stealth[length=1.6mm,width=1.1mm]},
      dashed,
      line width=0.48pt,
      green!55!black
  },
  ctx/.style={
      font=\scriptsize,
      fill=white,
      inner sep=0.8pt
  }
]


\node[data] (x) at (0,0)
    {Source\\$X^N$};

\node[process] (tc) at (3.0,0)
    {Neural test channel\\
     $P^{\phi}_{\hat X^N|X^N}$};

\node[data] (xh) at (6.0,0)
    {Sample\\$\hat X^N$};

\node[metric, text width=30mm] (dist) at (10.45,0)
    {Samplewise distortion\\[0.5mm]
     $\displaystyle d_N(x_j^N,\hat x_j^N)$};

\draw[arrow] (x.east)--(tc.west);
\draw[arrow] (tc.east)--(xh.west);
\draw[arrow] (xh.east)--(dist.west);


\node[process,text width=30mm] (polar) at (6.0,-1.45)
    {Polar transform $U^N=\hat X^N G_N$};

\draw[arrow] (xh.south)--(polar.north);


\node[npd] (npdx) at (4.45,-3.15)
    {Conditional NPD\\[0.7mm]
     $N^{-1}\widehat{\mathsf H}_{\theta^{\mathrm{cond}}}^{\phi}
      (U^N|X^N)$};

\node[npd] (npdc) at (7.55,-3.15)
    {Unconditional NPD\\[0.7mm]
     $N^{-1}\widehat{\mathsf H}_{\theta^{\mathrm{const}}}^{\phi}
      (U^N)$};

\begin{scope}[on background layer]
\node[
    draw=green!50!black,
    dashed,
    rounded corners=2pt,
    fit=(npdx)(npdc),
    inner sep=1.8mm
] (npdgroup) {};
\end{scope}

\coordinate (usplit) at ($(polar.south)+(0,-2.6mm)$);
\draw[arrow] (polar.south)--(usplit);
\draw[arrow] (usplit)-|(npdx.north);
\draw[arrow] (usplit)-|(npdc.north);

\coordinate (ctxY) at (0,-3.15);
\draw[arrow]
    (x.south)--(ctxY)
    -- node[ctx,pos=0.55,above,yshift=1mm] {$X^N$ context}
       (npdx.west);


\node[ratebox] (rate) at (6.0,-5.15)
    {NPD information-density estimate\\[0.8mm]
     $\displaystyle \hat\imath_\theta(x_j^N,\hat x_j^N)$};

\draw[arrow]
    (npdx.south)--(npdx.south|-rate.north);

\draw[arrow]
    (npdc.south)--(npdc.south|-rate.north);


\node[stepbox] (train) at (1.65,-5.15)
    {NPD estimation step\\[0.45mm]
     $\displaystyle
      \min_{\theta}\;
      \mathcal L_{\mathrm{const}}
      +\mathcal L_{\mathrm{cond}}$\\[-0.15mm]
     update $\theta$ with $\phi$ fixed};

\coordinate (trainlevel) at ($(npdgroup.south)+(0,-3mm)$);
\coordinate (traintarget) at ($(npdgroup.south)+(-12mm,0)$);

\draw[trainarrow]
    (train.north)
    -- (train.north|-trainlevel)
    -- (traintarget|-trainlevel)
    -- (traintarget);


\node[stepbox] (phistep) at (10.45,-5.15)
    {Score-function test-channel update\\[0.45mm]
     $\displaystyle
      \phi\leftarrow\phi-\gamma
      \widehat{\nabla_\phi J_\lambda}$\\[-0.15mm]
     using \eqref{eq:rd_empirical_score_gradient} with $\theta$ fixed};

\draw[arrow] (rate.east)--(phistep.west);
\draw[arrow] (dist.south)--(phistep.north);

\coordinate (fbR) at ($(phistep.east)+(8mm,0)$);
\coordinate (fbTop) at ($(tc.north)+(0,8mm)$);

\draw[feedback]
    (phistep.east)
    -- (fbR)
    -- (fbR|-fbTop)
    -- (tc.north|-fbTop)
    -- (tc.north);

\end{tikzpicture}}
    \caption{Alternating NPD-based optimization of the rate--distortion
    objective. For the current neural test channel
    $P_{\hat X^N|X^N}^{\phi}$, each source block generates a
    reconstruction sample, its samplewise distortion, and the polarized sequence $U^N=\hat X^NG_N$. The conditional and
    unconditional NPDs provide the plug-in information-density estimate
    $\hat\imath_\theta(x_j^N,\hat x_j^N)$, which enters the score-function gradient
    \eqref{eq:rd_empirical_score_gradient} together with the distortion term. The NPD parameters $\theta$ are updated with $\phi$ fixed (dashed green arrow), and the updated test-channel parameters $\phi$ are fed back to the test channel (dashed black arrow).}
    \label{fig:rd_framework_diagram}
\end{figure}

\begin{algorithm}[]
    \caption{Rate--distortion via NPDs}
    \label{alg:RD}
    \textbf{input:}
    Block length $N$, dataset size $M$,
    \# of iterations $\mathsf{N_{iters}}$, learning rates
    $\gamma_\theta,\gamma_\phi$, Lagrange multiplier $\lambda$\\
    \textbf{output:} Optimized $\theta^\ast,\phi^\ast$
    \algrule
    \begin{algorithmic}
    \State Initialize the weights of
    $\theta=\{\theta^{\mathrm{const}},\theta^{\mathrm{cond}}\}$ and $\phi$
    \State Generate
    $\mathcal D_{M,N}^{\phi}\sim P_{X^N,\hat X^N}^{\phi}$
    \State \underline{\textbf{Warm-up:}} repeat the estimation step below with $\phi$ fixed

    \For{$\mathsf{N_{iters}}$ iterations}

        \State \underline{\textbf{Estimation step:}}
        \State Sample $(x^N,\hat x^N)\sim \mathcal D_{M,N}^{\phi}$
        \State Compute $u^N=\hat x^N G_N$
        \State Compute
        $\mathbf e_0^{\mathrm{const}}=E_{\theta_{E_1}}(0^N)$
        and
        $\mathbf e_0^{\mathrm{cond}}=E_{\theta_{E_2}}(x^N)$
        \State Compute
        $\mathcal L_{\mathrm{const}}(\hat x^N)$ using
        \eqref{eq:loss_const_mi}
        and
        $\mathcal L_{\mathrm{cond}}(x^N,\hat x^N)$ using
        \eqref{eq:loss_cond_mi}
        \State Update
        \[
            \theta
            :=
            \theta
            -
            \gamma_\theta
            \nabla_\theta
            \left[
                \mathcal L_{\mathrm{const}}(\hat x^N)
                +
                \mathcal L_{\mathrm{cond}}(x^N,\hat x^N)
            \right]
        \]

        \State \underline{\textbf{Optimization step:}}
        \State Sample $(x^N,\hat x^N)\sim \mathcal D_{M,N}^{\phi}$
        \State Compute $u^N=\hat x^N G_N$
        \State Compute
        $\hat\imath_{\theta}(x^N,\hat x^N)$ using
        \eqref{eq:rd_npd_information_density}
        \State Compute
        $\widehat{\nabla_\phi J_\lambda}
        (\theta,\phi;\mathcal D_{M,N}^{\phi})$ using
        \eqref{eq:rd_empirical_score_gradient}
        \State Update
        \[
            \phi
            :=
            \phi
            -
            \gamma_\phi
            \widehat{\nabla_\phi J_\lambda}
            (\theta,\phi;\mathcal D_{M,N}^{\phi})
        \]
        \State Generate
        $\mathcal D_{M,N}^{\phi}\sim P_{X^N,\hat X^N}^{\phi}$

    \EndFor \\

    \Return $\theta^\ast,\phi^\ast$
    \end{algorithmic}
\end{algorithm}

\subsubsection*{Lossy Polar Code Construction}
\label{sec:RD_construction}

After the alternating procedure converges, let $\phi^\ast$ denote the
optimized test-channel parameters and let
$\theta^\ast=\{\theta^{\mathrm{const},\ast},
\theta^{\mathrm{cond},\ast}\}$ denote the final trained NPD parameters.
These models are converted into an operational finite-length lossy source
code. The construction follows the Honda--Yamamoto polar source-coding
principle~\cite{honda2013polar}, with the model-based synthetic distributions
$P_{U_i|U^{i-1}}$ and $P_{U_i|U^{i-1},X^N}$ of the polar sequence $U^N=\hat X^NG_N$ replaced by their
learned counterparts
$P_{U_i|U^{i-1}}^{\theta^{\mathrm{const},\ast}}$ and
$P_{U_i|U^{i-1},X^N}^{\theta^{\mathrm{cond},\ast}}$.

The index sets are chosen using a design set generated by the optimized
test channel. We draw
$(X^N,\hat X^N)\sim P_{X^N}P_{\hat X^N|X^N}^{\phi^\ast}$, form
$U^N=\hat X^NG_N$, and evaluate the NPD probabilities under teacher
forcing using the known prefixes $u^{i-1}$ of each sampled block.
Since the entire block $u^N$ is available, these evaluations require no
sequential SC decisions, and all
nodes at each depth of the SC computation graph can be evaluated in parallel before proceeding to the next
depth~\cite{aharoni2026optimized}. Averaging the
coordinatewise negative log-probabilities over the design set gives the
empirical entropy estimates
\begin{equation}
    \widehat h_i
    \triangleq
    \widehat{\mathsf H}_{\theta^{\mathrm{const},\ast}}^{\phi^\ast}
    (U_i|U^{i-1}),
    \qquad
    \widehat h_i^X
    \triangleq
    \widehat{\mathsf H}_{\theta^{\mathrm{cond},\ast}}^{\phi^\ast}
    (U_i|U^{i-1},X^N),
    \qquad i=1,\ldots,N .
\end{equation}
Let $q=|\mathcal X|$ and choose a common threshold parameter
$0<\delta<1/2$. Define the nested sets
\[
\begin{aligned}
    \mathcal F
    &\triangleq \{i\in[N]:\widehat h_i>\delta\log q\},\\
    \mathcal F_x
    &\triangleq \mathcal F\cap
    \{i\in[N]:\widehat h_i^X\ge(1-\delta)\log q\}.
\end{aligned}
\]
The intersection enforces $\mathcal F_x\subseteq\mathcal F$ even when
estimation errors violate the ordering of the true conditional and
unconditional entropies. The transmitted coordinates are $\mathcal A=\mathcal F\setminus\mathcal F_x$. Thus, $\mathcal F_x$, $\mathcal A$, and $[N]\setminus\mathcal F$
partition all coordinates into shared frozen symbols, transmitted symbols,
and symbols reconstructed by the decoder, respectively. In particular, intermediate-entropy coordinates,
which are neither frozen nor classified as low entropy at the decoder, are
transmitted.

Once the sets are fixed, the encoder samples the polarized symbols using
the conditional NPD rather than sampling directly from the neural test
channel. Given $x^N$, it runs the learned SC recursion with
the source-dependent embedding $E_{\theta_{E_2}^\ast}(x^N)$ and the learned
conditional distributions
$P_{U_i|U^{i-1},X^N}^{\theta^{\mathrm{cond},\ast}}
(\cdot|\hat u^{i-1},x^N)$, where $\hat u^{i-1}$ contains the symbols
selected in the preceding SC steps. For $i\in\mathcal F_x$, the encoder
sets $\hat u_i$ to a shared symbol drawn independently and uniformly over
$\mathbb{GF}(q)$, independently of the source. For every
$i\notin\mathcal F_x$, it samples $\hat u_i$ from the learned conditional
distribution above. It transmits the sampled symbols on $\mathcal A$ and
leaves those outside $\mathcal F$ untransmitted. The message is therefore
$\hat u_{\mathcal A}$.

The decoder performs SC or successive-cancellation list (SCL) decoding,
using the received symbols $\hat u_{\mathcal A}$ on $\mathcal A$ and the
regenerated shared symbols on $\mathcal F_x$ as fixed coordinates. It
estimates only the symbols in
$[N]\setminus\mathcal F$ using the learned unconditional distributions
$P_{U_i|U^{i-1}}^{\theta^{\mathrm{const},\ast}}(\cdot|\hat u^{i-1})$.
Finally, it applies the inverse polar transform to obtain
$\hat x^N=\hat u^NG_N^{-1}$.
\begin{figure}[t]
    \centering

    \begin{subfigure}{0.44\textwidth}
        \centering
        \resizebox{\textwidth}{!}{

\pgfmathdeclarefunction{hbin}{1}{%
  \pgfmathparse{(#1<=0 || #1>=1) ? 0 : -(#1*ln(#1) + (1-#1)*ln(1-#1))/ln(2)}%
}

\begin{tikzpicture}
\begin{axis}[
    width=3in,
    height=2.6in,
    grid=both,
    xlabel={Rate},
    ylabel={Distortion},
    xmin=0, xmax=1,
    ymin=0, ymax=0.5,
    title={Binary symmetric source},
    legend style={
        at={(0.98,0.98)},
        anchor=north east,
        nodes={scale=0.85, transform shape}
    }
]

\addplot[
    very thick,
    black,
    domain=0.0001:0.4999,
    samples=1000
]
({1 - hbin(x)}, {x});
\addlegendentry{Rate--Distortion}

\addplot[
    only marks,
    mark=star,
    mark size=4.0,             
    line width=1.0,            
    mark options={fill=yellow!90!white, draw=black}
]
coordinates {(0.7115, 0.051)};
\addlegendentry{Estimated pairs}

\addplot[
    only marks,
    mark=star,
    mark size=4.0,
    line width=1.0,
    mark options={fill=yellow!90!white, draw=black},
    forget plot
]
coordinates {
    (0.5330, 0.103)
    (0.3852, 0.153)
    (0.2791, 0.205)
    (0.18717, 0.254)
    (0.0662, 0.353)
};


\addplot[
    only marks,
    mark=o,
    mark size=1.2,   
    line width=0.8,
    draw=red,
    mark options={fill=white}
]
coordinates {
    (0.7136, 0.0738)
    (0.5310, 0.1238)
    (0.3902, 0.1738)
    (0.2781, 0.2238)
    (0.1887, 0.2738)
    (0.0659, 0.3738)
};
\addlegendentry{Block length $n = 10$}

\addplot[
    only marks,
    mark=o,
    mark size=1.2,
    line width=0.8,
    draw=green,
    mark options={fill=white}
]
coordinates {
    (0.7136, 0.0675)
    (0.5310, 0.1175)
    (0.3902, 0.1675)
    (0.2781, 0.2175)
    (0.1887, 0.2675)
    (0.0659, 0.3675)
};
\addlegendentry{Block length $n = 11$}

\addplot[
    only marks,
    mark=o,
    mark size=1.2,
    line width=0.8,
    draw=blue,
    mark options={fill=white}
]
coordinates {
    (0.7136, 0.0652)
    (0.5310, 0.1152)
    (0.3902, 0.1652)
    (0.2781, 0.2152)
    (0.1887, 0.2652)
    (0.0659, 0.3652)
};
\addlegendentry{Block length $n = 12$}

\addplot[
    only marks,
    mark=o,
    mark size=1.2,
    line width=0.8,
    draw=magenta,
    mark options={fill=white}
]
coordinates {
    (0.7136, 0.0614)
    (0.5310, 0.1114)
    (0.3902, 0.1614)
    (0.2781, 0.2114)
    (0.1887, 0.2614)
    (0.0659, 0.3614)
};
\addlegendentry{Block length $n = 13$}

\end{axis}
\end{tikzpicture}}
        \label{fig:RD_BSS}
    \end{subfigure}
    \hspace{0.015\textwidth}
    \begin{subfigure}{0.42\textwidth}
        \centering
        \resizebox{\textwidth}{!}{

\begin{tikzpicture}
\begin{axis}[
    width=3in,
    height=2.6in,
    grid=both,
    xlabel={Rate},
    ylabel={Distortion},
    xlabel style={font=\normalsize},
    ylabel style={font=\normalsize},
    xmin=0, xmax=0.85,
    ymin=0, ymax=0.5,
    title={First-order Markov source},
    title style={font=\normalsize},
    tick label style={font=\small},
    legend style={
        at={(0.98,0.98)},
        anchor=north east,
        nodes={scale=0.8, transform shape}
    }
]

\addplot[
    thick,
    blue,
    mark=o,
    mark size=1.8,
    line width=1.0,
    mark options={fill=blue!30!white, draw=blue}
]
coordinates {
    (0.0000000, 0.5000000)
    (0.00001676302, 0.4840919)
    (0.0001899161, 0.4607729)
    (0.001041784, 0.4485878)
    (0.007805088, 0.4081276)
    (0.01891936, 0.3698409)
    (0.03615485, 0.3305780)
    (0.05935751, 0.2923670)
    (0.08777879, 0.2573168)
    (0.1207477, 0.2256095)
    (0.1585711, 0.1964401)
    (0.1976844, 0.1717921)
    (0.2414934, 0.1487717)
    (0.2947852, 0.1250935)
    (0.3519401, 0.1032953)
    (0.4117202, 0.08352986)
    (0.4700492, 0.06664051)
    (0.5276350, 0.05194228)
    (0.5906104, 0.03763278)
    (0.6453422, 0.02650760)
    (0.6911638, 0.01811840)
    (0.7274454, 0.01210727)
    (0.7552706, 0.007915685)
    (0.7759869, 0.005066018)
    (0.7909818, 0.003175081)
};
\addlegendentry{Blahut-Arimoto}

\addplot[
    only marks,
    mark=star,
    mark size=3.8,
    line width=1.0,
    mark options={fill=red, draw=black}
]
coordinates {
    (0.162, 0.2172)
    (0.28241, 0.146)
    (0.0222, 0.391)
    (0.4620, 0.079)
    (0.587771, 0.04211)
};
\addlegendentry{Estimated Pairs}

\addplot[
    only marks,
    mark=o,
    mark size=2.2,
    line width=1.0,
    mark options={fill=green!70!black, draw=black}
]
coordinates {
    (0.039, 0.427)
    (0.1870, 0.2519)
    (0.3237, 0.1740)
    (0.66328, 0.051511)
    (0.5194, 0.095)
};
\addlegendentry{Operating Pairs}

\end{axis}
\end{tikzpicture}}
        \label{fig:RD_Markov}
    \end{subfigure}

\caption{Estimated rate--distortion pairs and corresponding $(R,D)$ operating points of the learned lossy coding scheme. Star markers denote estimated pairs, and circular markers denote achieved performance. In both figures, the NPD and neural test channel are trained using block length $n=10$. The left figure corresponds to a binary symmetric source, while the right figure corresponds to a first-order Markov source. The operating points in the latter are obtained using block length $n=13$ and list size $L=16$ for decoding.}

    \label{fig:RD_combined}
\end{figure}

\begin{figure}[t]
    \centering
    \resizebox{0.52\textwidth}{!}{\pgfmathdeclarefunction{hbinecoli}{1}{%
  \pgfmathparse{(#1<=0 || #1>=1) ? 0 : -(#1*ln(#1) + (1-#1)*ln(1-#1))/ln(2)}%
}

\begin{tikzpicture}
\begin{axis}[
    width=3.4in,
    height=2.7in,
    grid=both,
    xlabel={Rate (bits per base)},
    ylabel={Distortion},
    xmin=0, xmax=2.0,
    ymin=0, ymax=0.65,
    title={\texttt{E.coli} source},
    title style={font=\normalsize},
    tick label style={font=\small},
    legend style={
        at={(0.97,0.97)},
        anchor=north east,
        nodes={scale=0.72, transform shape}
    }
]

\addplot[
    black,
    dashed,
    thick,
    domain=0.0001:0.65,
    samples=500
]
({2-hbinecoli(x)-x*ln(3)/ln(2)}, {x});
\addlegendentry{Uniform i.i.d. $q=4$}

\addplot[
    blue,
    thick,
    smooth,
    mark=star,
    mark size=3.8,
    line width=1.0,
    mark options={fill=yellow!90!white, draw=blue}
]
coordinates {
    (0.072, 0.566)
    (0.250, 0.445)
    (0.6753, 0.266)
    (0.988, 0.175)
    (1.4758, 0.063)
    (1.885, 0.0001)
};
\addlegendentry{Estimated pairs}

\end{axis}
\end{tikzpicture}}
    \caption{Estimated rate--distortion tradeoff for the \texttt{E. coli} source under symbolwise Hamming distortion. Blue stars denote the estimated $(R,D)$ pairs obtained after convergence of the NPDs and neural test channel for different values of the Lagrange multiplier. The dashed black curve is the rate--distortion function of a uniform memoryless quaternary source.}
    \label{fig:RD_EColi}
\end{figure}

\subsection{Experiments}
We evaluate the lossy source-coding performance of the proposed rate--distortion estimation and optimization scheme. The architecture, training parameters, and alternating optimization schedule used for the experiments are given in Table~\ref{tab:lossy_reproducibility} in Appendix~\ref{app:implementation_details}. Figure~\ref{fig:RD_combined}
(left) shows the results for a binary symmetric source under Hamming
distortion. The star markers correspond to the rate--distortion estimates
obtained while training the NPDs and optimizing the neural test channel, as described in
Algorithm~\ref{alg:RD}. These estimates closely follow the analytical
rate--distortion curve, demonstrating the accuracy of the proposed
estimation--optimization procedure. The remaining markers represent the
$(R,D)$ operating points achieved by the resulting lossy polar coding
scheme for different block lengths. As the block length increases, these
operating points approach the estimated rate--distortion curve.

Figure~\ref{fig:RD_combined} (right) presents the corresponding results for a
non-symmetric binary first-order Markov source with transition matrix
\[
P_{X_t|X_{t-1}}
=
\begin{bmatrix}
0.3 & 0.7\\
0.8 & 0.2
\end{bmatrix}.
\]
Since the rate--distortion function of this source is not available in
closed form, we approximate it numerically using a
Blahut--Arimoto-type algorithm applied to source blocks of length
$N=2^{11}$ \cite{naiss2012computable}. The star markers show the rate--distortion
estimates produced by the proposed procedure in Algorithm~\ref{alg:RD}.
These estimates lie close to, and slightly above, the numerical Blahut--Arimoto curve,
indicating that the learned test channel and the NPD-based mutual
information estimator capture the multi-letter rate--distortion behavior
of the Markov source. The circular markers correspond to the $(R,D)$
operating points achieved by the Honda--Yamamoto construction using the
learned NPD and neural test channel. As expected for finite-block lossy
coding, these operating points lie above the estimated
rate--distortion curve.

Finally, we evaluate the proposed rate--distortion estimation procedure on the real \texttt{E. coli} genomic sequence used in the lossless experiments. The source is represented over the four-symbol alphabet using the same preprocessing and chronological training--validation split described in Appendix~\ref{app:implementation_details}. We train the NPDs and neural test channel under symbolwise Hamming distortion for several values of the Lagrange multiplier. This experiment focuses on evaluating the learned rate--distortion estimator on real genomic data. The estimated $(R,D)$ pairs are shown in Fig.~\ref{fig:RD_EColi}. At near-zero distortion, the estimated rate closely matches the lossless compression rate of $1.883$ bpb reported in Section~\ref{sec: methodology}. For comparison, Fig.~\ref{fig:RD_EColi} includes the rate--distortion function of a uniform memoryless quaternary source under Hamming distortion,
\[
R_{\mathrm{iid}}(D)=
\begin{cases}
    2-h_2(D)-D\log 3, & 0\le D\le \frac{3}{4},\\
    0, & D>\frac{3}{4},
\end{cases}
\]
where $h_2(\cdot)$ denotes the binary entropy function.
This provides a baseline that ignores temporal dependence in the genomic sequence.

\section{Wyner--Ziv via NPD}\label{sec:WZ}
In this section, we extend the NPD-based rate--distortion estimation optimization framework of Section~\ref{sec: RD} to the Wyner--Ziv setting, where side information is available only at the decoder. Unlike the ordinary rate--distortion problem, the Wyner--Ziv formulation involves an auxiliary random variable, the rate is characterized by a conditional mutual information, and the reconstruction is formed at the decoder from the auxiliary description and the side information. Consequently, the learning procedure must optimize both a test channel that generates the auxiliary representation and a reconstruction mapping, while the NPDs estimate the synthetic conditional entropy terms that define the conditional mutual information. The scheme applies to stationary ergodic source pairs and is based on the multi-letter Wyner--Ziv rate--distortion functional, whereas its polar-coding guarantees are established for the FAIM setting under the assumptions of Lemma~\ref{lem:wz_polarization} and Theorem~\ref{thm:wz_faim_achievability}. Throughout this section, we define explicitly the components that differ from Section~\ref{sec: RD} and refer to the corresponding definitions there for the common structure.

\subsection{Problem Formulation}
\label{sec:WZ_problem}

Let $\{(X_i,Y_i)\}_{i\ge 1}$ be a jointly stationary ergodic source pair with
finite alphabets $\mathcal X$ and $\mathcal Y$. The sequence $X^N$ is the
source to be compressed and $Y^N$ denotes correlated side information
available only at the decoder. An $(N,|\mathcal M_N|)$ Wyner--Ziv code
consists of an encoder $f_N:\mathcal X^N\to\mathcal M_N$ that maps
$X^N$ to a message $M=f_N(X^N)$, and a decoder
$\psi_N:\mathcal M_N\times\mathcal Y^N\to\mathcal X^N$ that
produces the reconstruction $\hat X^N=\psi_N(M,Y^N)$. We use the bounded
distortion measure $d$ and normalized additive block distortion $d_N$
from~\eqref{eq:normalized_block_distortion}.
Let $R_{\mathrm{WZ}}(D)$ denote the infimum of asymptotically achievable
rates under an expected per-symbol distortion constraint of $D$.

For i.i.d. source pairs, the classical Wyner--Ziv theorem gives a
single-letter auxiliary-variable characterization~\cite{wyner1976rate}.
Kieffer's sliding-block coding framework extends Wyner--Ziv coding to
jointly stationary ergodic finite-alphabet processes~\cite{kieffer1981method}.
For each block length $N$, define
\begin{equation}
    \mathcal R_{\mathrm{WZ}}^{(N)}(D)
    \triangleq
    \frac{1}{N}
    \inf_{P_{U^N|X^N},\,g_N}
    \mathsf{I}(X^N;U^N|Y^N),
    \label{eq:wz_multiletter}
\end{equation}
where the infimum ranges over all finite block auxiliary alphabets
$\mathcal U_N$, all conditional distributions $P_{U^N|X^N}$, and all
reconstruction mappings
\[
    g_N:\mathcal U_N\times\mathcal Y^N\to\mathcal X^N
\]
such that the Markov condition
\begin{equation}
    U^N - X^N - Y^N
    \label{eq:wz_block_markov}
\end{equation}
holds and
\begin{equation}
    \mathbb E
    \left[
        d_N
        \left(
            X^N,
            g_N(U^N,Y^N)
        \right)
    \right]
    \le D.
    \label{eq:wz_multiletter_distortion}
\end{equation}
The operational Wyner--Ziv rate--distortion function
admits the multiletter characterization~\cite{kieffer1981method,verdu2008information}
\begin{equation}
    R_{\mathrm{WZ}}(D)
    =
    \lim_{N\to\infty}
    \mathcal R_{\mathrm{WZ}}^{(N)}(D).
    \label{eq:wz_limit}
\end{equation}
The neural formulation developed below restricts this finite-block
optimization to a prescribed auxiliary alphabet and parameterized
test-channel and reconstruction families.

\subsection{Conditional Polarization and Nested Index Sets}
\label{sec:WZ_polarization}

Before introducing the NPD estimation and coding scheme, we first
identify the conditional polarization and nesting properties that determine which
auxiliary coordinates must be communicated in the Wyner--Ziv setting.
For the polar construction analyzed below, let $q$ be prime and consider a
stationary auxiliary process $\{U_t\}_{t\in\mathbb Z}$ over
$\mathbb{GF}(q)$ such
that the joint process
$\{(U_t,X_t,Y_t)\}_{t\in\mathbb Z}$ satisfies, for every $N$, the blockwise
Markov relation
\begin{equation}
    U^N-X^N-Y^N.
    \label{eq:wz_markov}
\end{equation}
For $N=2^n$, let $\tilde U^N=U^NG_N$, where $G_N$ is generated by the
standard $q$-ary Ar\i kan kernel with operations over $\mathbb{GF}(q)$. For
each of the two choices $Z=X$ and $Z=Y$, define
\begin{equation}
    h_{i,N}^{Z}
    \triangleq
    \frac{\mathsf H(\tilde U_i|\tilde U^{i-1},Z^N)}{\log q},
    \qquad i=1,\ldots,N.
    \label{eq:wz_conditional_synthetic_entropy}
\end{equation}
Since the polar transform is bijective and the Markov relation implies
$\mathsf H(U^N|X^N,Y^N)=\mathsf H(U^N|X^N)$, we have
\begin{align}
    \mathsf I(X^N;U^N|Y^N)
    &=
    \mathsf H(\tilde U^N|Y^N)
    -\mathsf H(\tilde U^N|X^N)
    \nonumber\\
    &=
    (\log q)\sum_{i=1}^{N}
    \left(h_{i,N}^{Y}-h_{i,N}^{X}\right).
    \label{eq:wz_synthetic_decomposition}
\end{align}
To translate the decomposition in~\eqref{eq:wz_synthetic_decomposition}
into a polar rate set, we apply the conditional-polarization result for
FAIM source--observation pairs to each of the processes
$\{(U_t,X_t)\}_{t\in\mathbb Z}$ and $\{(U_t,Y_t)\}_{t\in\mathbb Z}$.
This result, including its prime-alphabet extension, is established
in~\cite[Cor.~8, Th.~13, App.~B]{shuval2019fast}. The lemma below combines
the resulting conditional polarization with the block Markov relation to
establish nested high-entropy sets whose asymptotic size difference equals
the conditional mutual-information rate.

For $\beta\in(0,1/2)$, let $\delta_N\triangleq2^{-N^\beta}$. For
each of the two choices $Z=X$ and $Z=Y$, define
\begin{equation}
    \mathcal H_{U|Z}^{(N)}
    \triangleq
    \{i\in[N]:h_{i,N}^{Z}\ge 1-\delta_N\}.
    \label{eq:wz_high_entropy_set}
\end{equation}

\begin{lemma}[Fast conditional polarization and nested rate set]
\label{lem:wz_polarization}
Let $q$ be prime, and let $\{(U_t,X_t,Y_t)\}_{t\in\mathbb Z}$ be a
stationary finite-alphabet process with $U_t\in\mathbb{GF}(q)$. Suppose
that the joint processes $\{(U_t,X_t)\}_{t\in\mathbb Z}$ and
$\{(U_t,Y_t)\}_{t\in\mathbb Z}$ each admit a FAIM representation,
and that $U^N-X^N-Y^N$ forms a Markov
chain for every $N$. For $N=2^n$, every
$\beta\in(0,1/2)$, and both $Z=X$ and $Z=Y$, we have
\begin{equation}
    \lim_{N\to\infty}
    \frac{|\mathcal H_{U|Z}^{(N)}|}{N}\log q
    =\overline{\mathsf H}(U|Z),
    \label{eq:wz_cor_high_limit}
\end{equation}
where
$\overline{\mathsf H}(U|Z)
\triangleq\lim_{N\to\infty}N^{-1}\mathsf H(U^N|Z^N)$. Furthermore, for
every $i\in[N]$,
\begin{align}
    h_{i,N}^{X} &\le h_{i,N}^{Y},
    \label{eq:wz_cor_entropy_ordering}
    \\
    \mathcal H_{U|X}^{(N)}
    &\subseteq
    \mathcal H_{U|Y}^{(N)}.
    \label{eq:wz_cor_set_nesting}
\end{align}
Hence, the rate-carrying set
$\mathcal A_N\triangleq\mathcal H_{U|Y}^{(N)}\setminus\mathcal H_{U|X}^{(N)}$
satisfies
\begin{equation}
    \lim_{N\to\infty}
    \frac{|\mathcal A_N|}{N}\log q
    =
    \overline{\mathsf I}(X;U|Y),
    \label{eq:wz_cor_rate_limit}
\end{equation}
where
$\overline{\mathsf I}(X;U|Y)
\triangleq\lim_{N\to\infty}N^{-1}\mathsf I(X^N;U^N|Y^N)$.
\end{lemma}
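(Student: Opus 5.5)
We prove the three claims in the order they are stated. The limit \eqref{eq:wz_cor_high_limit} is imported from the FAIM conditional-polarization theory. The ordering \eqref{eq:wz_cor_entropy_ordering} and the nesting \eqref{eq:wz_cor_set_nesting} follow from the Markov chain. The rate limit \eqref{eq:wz_cor_rate_limit} then follows by subtraction.

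\emph{Step 1: the limit \eqref{eq:wz_cor_high_limit}.} For a fixed choice $Z\in\{X,Y\}$, the pair process $\{(U_t,Z_t)\}$ is a FAIM source--observation pair over the prime field $\mathbb{GF}(q)$. We therefore invoke the fast conditional polarization result of \cite[Cor.~8, Th.~13, App.~B]{shuval2019fast}. It states that, for every $\beta<1/2$, the fraction of indices with $h_{i,N}^Z\ge 1-2^{-N^\beta}$ converges to $\overline{\mathsf H}(U|Z)/\log q$. It also gives the analogous statement for the low-entropy set, whose fraction tends to $1-\overline{\mathsf H}(U|Z)/\log q$. The limit $\overline{\mathsf H}(U|Z)$ exists by stationarity, since $N^{-1}\mathsf H(U^N|Z^N)$ converges for a jointly stationary pair. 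The main work here is checking that the hypotheses of that corollary match ours. In particular, the FAIM representation is required for the pair $(U,Z)$ itself and not for the triple, and the set definition \eqref{eq:wz_high_entropy_set} uses normalization by $\log q$ and the threshold $\delta_N$. The prime-alphabet extension in their Appendix~B covers the $q$-ary kernel.

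\emph{Step 2: the ordering \eqref{eq:wz_cor_entropy_ordering} and the nesting \eqref{eq:wz_cor_set_nesting}.} Because $\tilde U^N$ is a bijective function of $U^N$, the chain $\tilde U^N - X^N - Y^N$ also holds, so $\tilde U_i$ is conditionally independent of $Y^N$ given $(\tilde U^{i-1},X^N)$. Hence
\[
\mathsf H(\tilde U_i|\tilde U^{i-1},X^N)=\mathsf H(\tilde U_i|\tilde U^{i-1},X^N,Y^N)\le \mathsf H(\tilde U_i|\tilde U^{i-1},Y^N),
\]
where the inequality holds because conditioning reduces entropy. This gives $h_{i,N}^X\le h_{i,N}^Y$. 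The nesting of the high-entropy sets is then immediate from the common threshold $1-\delta_N$.

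\emph{Step 3: the rate limit \eqref{eq:wz_cor_rate_limit}.} By the nesting, $|\mathcal A_N|=|\mathcal H_{U|Y}^{(N)}|-|\mathcal H_{U|X}^{(N)}|$. Applying Step~1 with $Z=Y$ and with $Z=X$ gives the limit $\overline{\mathsf H}(U|Y)-\overline{\mathsf H}(U|X)$. The Markov relation gives $\mathsf H(U^N|X^N)=\mathsf H(U^N|X^N,Y^N)$, so for every $N$
\[
\mathsf I(X^N;U^N|Y^N)=\mathsf H(U^N|Y^N)-\mathsf H(U^N|X^N).
\]
Dividing by $N$ and letting $N\to\infty$ identifies the limit as $\overline{\mathsf I}(X;U|Y)$, and shows along the way that this limit exists.

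The only nontrivial ingredient is Step~1, which is a direct citation once the hypotheses are verified. Steps~2 and~3 are elementary.
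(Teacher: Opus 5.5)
Your Steps~2 and~3 match the paper's argument. The block Markov chain passes to $\tilde U^N$, which gives $h_{i,N}^X\le h_{i,N}^Y$ and hence the nesting. The rate limit then follows by subtracting the two densities and using $\mathsf I(X^N;U^N|Y^N)=\mathsf H(U^N|Y^N)-\mathsf H(U^N|X^N)$.

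The gap is in Step~1. You call it ``a direct citation once the hypotheses are verified'' but never carry out the one verification that matters. You attribute to the FAIM results a statement directly about the set $\{i:h_{i,N}^Z\ge 1-2^{-N^\beta}\}$. The paper does not use them that way. It takes from them two separate facts: fast polarization toward one of the $q$-ary Bhattacharyya parameter $z_{i,N}^Z$, and only fixed-threshold polarization of $h_{i,N}^Z$. Since $\mathcal H_{U|Z}^{(N)}$ in \eqref{eq:wz_high_entropy_set} is defined through $h_{i,N}^Z$ at a vanishing threshold, a translation is needed. It does not work at the same exponent: $z_{i,N}^Z>1-\delta_N$ with $h_{i,N}^Z\ge(z_{i,N}^Z)^2$ only gives $h_{i,N}^Z>(1-\delta_N)^2$, which is below $1-\delta_N$. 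So it does not place $i$ in $\mathcal H_{U|Z}^{(N)}$.

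The paper closes this with a sandwich argument.
\begin{itemize}
\item Choose $\beta'\in(\beta,1/2)$ and $\delta_N'=2^{-N^{\beta'}}$. Then for all large $N$ and any fixed $\epsilon\in(0,1/2)$, $2\delta_N'\le\delta_N<\epsilon$.
\item Now $z_{i,N}^Z>1-\delta_N'$ implies $h_{i,N}^Z\ge(z_{i,N}^Z)^2>(1-\delta_N')^2\ge 1-\delta_N$. Hence $\{i:z_{i,N}^Z>1-\delta_N'\}\subseteq\mathcal H_{U|Z}^{(N)}\subseteq\{i:h_{i,N}^Z>1-\epsilon\}$.
\item The left set has density $\overline{\mathsf H}(U|Z)/\log q$ by fast Bhattacharyya polarization with exponent $\beta'$. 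The right set has the same density by fixed-threshold polarization. The squeeze gives \eqref{eq:wz_cor_high_limit}.
\end{itemize}
The upper bound can also be obtained with no citation, from $|\mathcal H_{U|Z}^{(N)}|(1-\delta_N)\le\sum_i h_{i,N}^Z=\mathsf H(U^N|Z^N)/\log q$. It is the lower bound that genuinely needs fast polarization and the exponent slack $\beta'>\beta$. With this step inserted, your proof coincides with the paper's.
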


The proof of Lemma~\ref{lem:wz_polarization} is given in
Appendix~\ref{app:wz_polarization}. It applies conditional FAIM polarization
separately with observations $X^N$ and $Y^N$. The block
Markov relation orders the coordinatewise conditional entropies and produces
nested fast high-entropy sets, whose asymptotic size difference equals the
conditional mutual-information rate.

The following theorem converts this nested polarization structure into an
operational coding statement for a prescribed auxiliary process and
reconstruction sequence.

\begin{theorem}[Nested polar coding for Wyner--Ziv compression]
\label{thm:wz_faim_achievability}
Let $q$ be prime and $\{(U_t,X_t,Y_t)\}_{t\in\mathbb Z}$ be a stationary
finite-alphabet process with $U_t\in\mathbb{GF}(q)$. Assume that
$\{(U_t,X_t)\}$ and $\{(U_t,Y_t)\}$ each admit a FAIM representation
and that $U^N-X^N-Y^N$ is a Markov chain for every $N$.
Assume that the single-letter distortion satisfies
$0\le d(x,\hat x)\le d_{\max}<\infty$ for all
$x,\hat x\in\mathcal X$.
For block lengths $N=2^n$ and $\beta\in(0,1/2)$, let
$\delta_N\triangleq 2^{-N^\beta}$ and define
\begin{align}
    \mathcal F_N
    &\triangleq
    \{i\in[N]:h_{i,N}^{X}\ge 1-\delta_N\},
    \label{eq:wz_ach_frozen_set}\\
    \mathcal L_N
    &\triangleq
    \{i\in[N]:h_{i,N}^{Y}\le \delta_N\},
    \label{eq:wz_ach_low_set}\\
    \mathcal I_N
    &\triangleq
    [N]\setminus(\mathcal F_N\cup\mathcal L_N).
    \label{eq:wz_ach_information_set}
\end{align}
Let
$g_N:\mathcal U^N\times\mathcal Y^N\to\mathcal X^N$
be a sequence of deterministic reconstruction mappings. If
\begin{equation}
    \limsup_{N\to\infty}
    \mathbb E_{P_{X^NY^NU^N}}
    \left[d_N\bigl(X^N,g_N(U^N,Y^N)\bigr)\right]
    \le D,
    \label{eq:wz_target_distortion}
\end{equation}
then there exists a sequence of deterministic nested polar codes with rate
$R_N=|\mathcal I_N|\log q/N$ bits per source symbol and reconstruction
$\hat X^N$ satisfying
\begin{align}
    \lim_{N\to\infty}R_N
    &=\overline{\mathsf I}(X;U|Y),
    \label{eq:wz_ach_rate}\\
    \limsup_{N\to\infty}
    \mathbb E\left[d_N(X^N,\hat X^N)\right]
    &\le D.
    \label{eq:wz_ach_distortion}
\end{align}
where the expectation in~\eqref{eq:wz_ach_distortion} is under the joint law
induced by the source and the deterministic code.
Moreover, the decoder recovers the encoder-generated auxiliary sequence with
probability tending to one.
\end{theorem}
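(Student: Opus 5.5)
Given $\phi$-free prescribed joint process $(U^N,X^N,Y^N)$, the plan is the standard Honda--Yamamoto/Korada nested construction with randomized frozen symbols, analysed through total variation, followed by derandomization. First, rate: by Lemma~\ref{lem:wz_polarization}, $|\mathcal H_{U|X}^{(N)}|\log q/N\to\overline{\mathsf H}(U|X)$ and $|\mathcal H_{U|Y}^{(N)}|\log q/N\to\overline{\mathsf H}(U|Y)$. I would additionally invoke the fast low-entropy half of the FAIM conditional polarization of~\cite{shuval2019fast}, giving $|\mathcal L_N|\log q/N\to \log q-\overline{\mathsf H}(U|Y)$, so the set of $Y$-unpolarized indices has vanishing fraction. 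Since $\mathcal F_N\subseteq\mathcal H_{U|Y}^{(N)}$ is disjoint from $\mathcal L_N$ (by the ordering \eqref{eq:wz_cor_entropy_ordering}, $h^Y_i\ge h^X_i\ge 1-\delta_N>\delta_N$), we get $|\mathcal I_N|=N-|\mathcal F_N|-|\mathcal L_N|$, hence $R_N\to\overline{\mathsf H}(U|Y)-\overline{\mathsf H}(U|X)=\overline{\mathsf I}(X;U|Y)$.

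Second, the randomized code: the encoder, given $x^N$, sets $\tilde u_i$ for $i\in\mathcal F_N$ to shared uniform symbols, and for $i\notin\mathcal F_N$ samples $\tilde u_i\sim P_{\tilde U_i|\tilde U^{i-1},X^N}$ (randomized SC encoding); it transmits $\tilde u_{\mathcal I_N}$. The decoder knows $\tilde u_{\mathcal F_N}$ and $\tilde u_{\mathcal I_N}$, and estimates $\tilde u_i$, $i\in\mathcal L_N$, by SC with the MAP rule under $P_{\tilde U_i|\tilde U^{i-1},Y^N}$, then outputs $g_N(\hat U^NG_N^{-1},Y^N)$. Let $Q$ denote the induced joint law of $(X^N,Y^N,\tilde U^N)$. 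Using the chain rule for KL divergence, $D(P\|Q)=\sum_{i\in\mathcal F_N}\mathbb E[\log q-\mathsf H(\tilde U_i|\tilde U^{i-1},X^N=x^N)]\cdot(\text{in nats/bits})\le N\delta_N\log q$, where $Y^N$ is generated from $X^N$ alone so the Markov chain is preserved in $Q$. Pinsker gives $\|P-Q\|_{TV}=O(\sqrt{N\delta_N})\to0$.

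Third, decoding error and distortion: under the ideal law $P$, the SC error probability on $\mathcal L_N$ is bounded by $\sum_{i\in\mathcal L_N}Z(\tilde U_i|\tilde U^{i-1},Y^N)$, and the Bhattacharyya parameter is bounded by a power of the conditional entropy (for $q$-ary, $Z\le c_q\sqrt{h}$-type bounds), giving $O(N\sqrt{\delta_N})\to0$; under $Q$ add the TV distance. Then, since $d_N\le d_{\max}$, $\mathbb E_Q[d_N(X^N,\hat X^N)]\le \mathbb E_P[d_N(X^N,g_N(U^N,Y^N))]+d_{\max}(\|P-Q\|_{TV}+P_e)$, and taking $\limsup$ yields \eqref{eq:wz_ach_distortion} averaged over the frozen symbols and encoder randomness. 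Finally derandomize: the frozen-symbol average of distortion plus error probability is small, so some fixed frozen vector works; the encoder randomness I would replace by the standard argument of fixing shared common randomness (or argmax/pseudo-random maps) conditioned on which average performance is no worse than the mean, giving deterministic codes while keeping the auxiliary-recovery probability tending to one.

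The main obstacle is the second step under memory: the KL telescoping must be done with conditionals on the full block $X^N$ rather than single-letter, and one must check that the fast rate $\delta_N=2^{-N^\beta}$ from the FAIM results indeed controls both the $X$-high and $Y$-low sets simultaneously (the latter requiring the low-entropy side of~\cite{shuval2019fast}, which Lemma~\ref{lem:wz_polarization} does not state explicitly). The derandomization of the encoder's sampling is the second delicate point; I would handle it by an averaging argument over a finite set of shared dither sequences.
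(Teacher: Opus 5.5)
Your construction and analysis follow the paper's proof almost step for step. You use uniform shared symbols on $\mathcal F_N$ and successive sampling from $P_{\tilde U_i|\tilde U^{i-1},X^N}$ elsewhere, compare the induced law $Q_N$ with the target law $P_N$ in total variation, and use SC/MAP decoding on $\mathcal L_N$ with the block error transferred from $P_N$ to $Q_N$. The rate comes from fast polarization at both endpoints together with $\mathcal F_N\cap\mathcal L_N=\varnothing$, and the distortion guarantee from a bounded-distortion transfer.

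Your two local substitutions are valid. The KL chain rule followed by a single Pinsker step gives $\|P_N-Q_N\|_{\mathrm{TV}}\le\sqrt{N\delta_N\ln q/2}$, slightly sharper than the paper's telescoped bound $N\sqrt{\delta_N\ln q/2}$. Bounding the MAP error through the Bhattacharyya parameter and $z^2\le h$ gives $O(N\sqrt{\delta_N})$, where the paper uses Hellman--Raviv to get $N\delta_N\log q/2$; both vanish. The low-entropy endpoint you flag is handled in the paper exactly by the sandwich $\{i:z_{i,N}^{Y}<\delta_N'\}\subseteq\mathcal L_N\subseteq\{i:h_{i,N}^{Y}<\epsilon\}$, with $\beta<\beta'<1/2$ and $h\le\log_q(1+(q-1)z)$.

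The step that does not go through as you state it is the derandomization. The averaged distortion $\overline{\mathrm D}_N=\mathbb E_{Q_N}[d_N(X^N,\hat X^N)]$ is of order $D$, not small. Choosing a realization $w$ of the shared randomness with $\mathrm D_N(w)+e_N(w)$ at most its mean therefore only gives $e_N(w)\le\overline{\mathrm D}_N+\bar e_N$, which does not tend to zero. Moreover, a realization that is simultaneously no worse than the mean in both criteria need not exist: two equally likely realizations with $(\mathrm D,e)=(0,1)$ and $(1,0)$ already rule it out. As written, you lose the final assertion that the auxiliary sequence is recovered with probability tending to one.

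The missing ingredient is the paper's good-set argument. By Markov's inequality, $\Pr\bigl(e_N(W_N)\le\sqrt{\bar e_N}\bigr)\ge1-\sqrt{\bar e_N}$. Since $d_N\ge0$, the mean of $\mathrm D_N(W_N)$ conditioned on this event is at most $\overline{\mathrm D}_N/(1-\sqrt{\bar e_N})$. Hence some $w_N$ satisfies both $e_N(w_N)\le\sqrt{\bar e_N}\to0$ and $\limsup_N\mathrm D_N(w_N)\le D$. Alternatively, when $\bar e_N>0$, choosing $w$ with $\mathrm D_N(w)+e_N(w)/\sqrt{\bar e_N}$ at most its mean works equally well. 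It is simplest to apply this once to a variable $W_N$ collecting both the frozen vector and the encoder's sampling seeds, as the paper does.

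Finally, drop the argmax alternative. Replacing the sampling on the non-frozen indices by a deterministic argmax rule destroys the closeness of $Q_N$ to $P_N$ in total variation, so your distortion transfer would no longer apply. Only fixing a realization of the sampling randomness is justified by your analysis.
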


The proof of Theorem~\ref{thm:wz_faim_achievability} is given in
Appendix~\ref{app:wz_faim_achievability}. It first analyzes a randomized
successive encoder whose induced law approaches the target law in total
variation. SC decoding and fast polarization establish reliability
and rate, the target distortion is transferred to the actual reconstruction,
and a selection argument removes the encoder randomness. Hence, if the prescribed auxiliary
process and reconstruction sequence attain the Wyner--Ziv rate--distortion
boundary, so do the resulting polar codes.

For memoryless sources, the corresponding nested set is used in the
polar Wyner--Ziv construction of~\cite{korada2010itw_wynerziv}, and Theorem~\ref{thm:wz_faim_achievability} gives the corresponding
operational statement for a prescribed auxiliary process in the FAIM case. The
finite-block neural test channel introduced next is not constrained to induce
processes that admit FAIM state representations or to attain the optimal
target law.
Thus, the theorem provides the information-theoretic basis for the FAIM case, while the general learned finite-length construction is
evaluated empirically.

\subsection{Neural Test Channel and Reconstruction Models}
\label{sec:WZ_test_channel}

Whereas the neural test channel of Section~\ref{sec:RD_pmf} directly generates the reconstruction sequence, the Wyner--Ziv formulation requires two learned components: a stochastic test
channel $P_{U^N|X^N}^{\phi}$ that generates the auxiliary description $U^N$ of $X^N$, and a reconstruction mapping that forms $\hat X^N$ at the decoder from
$(U^N,Y^N)$.

For the neural parameterization, we fix a finite auxiliary alphabet
$\mathcal U$ and use the same length-$N$ encoder--decoder transformer as in
Section~\ref{sec:RD_pmf}, with the reconstruction symbol $\hat X_i$ replaced by
the auxiliary symbol $U_i$. The encoder processes the complete source block
$X^N$, and the causally masked decoder generates $U_i$ from $X^N$ and
$U^{i-1}$, which induces the factorization
\begin{equation}
    P_{U^N|X^N}^{\phi}(u^N|x^N)
    =
    \prod_{i=1}^{N}
    P_{U_i|X^N,U^{i-1}}^{\phi}
    (u_i|x^N,u^{i-1}).
    \label{eq:wz_test_channel_factorization}
\end{equation}
Together with the source law $P_{X^N,Y^N}$, this test channel induces the
joint distribution
\[
    P_{X^N,Y^N,U^N}^{\phi}
    =
    P_{X^N,Y^N}
    P_{U^N|X^N}^{\phi},
\]
so that the Markov relation $U^N-X^N-Y^N$ is satisfied by construction.

The reconstruction is modeled separately and is allowed to
depend on both the auxiliary sequence and the side-information sequence available at the decoder.
For deterministic evaluation and coding, we use the reconstruction mapping
\begin{equation}
    \hat X^N
    =
    g_\psi(U^N,Y^N),
    \label{eq:wz_reconstruction_mapping}
\end{equation}
where $\psi$ denotes its parameters. The mapping $g_\psi$ may process
$(U^N,Y^N)$ noncausally across the block, reflecting the multi-letter
nature of the Wyner--Ziv reconstruction rule.

For gradient-based training with discrete reconstructions, the network
also defines a stochastic reconstruction distribution
\[
    P_{\hat X^N|U^N,Y^N}^{\psi}(\hat x^N|u^N,y^N)
    =\prod_{i=1}^{N}P_{\hat X_i|U^N,Y^N}^{\psi}(\hat x_i|u^N,y^N),
\]
where each $P_{\hat X_i|U^N,Y^N}^{\psi}$ is a softmax distribution with differentiable
logits. Reconstructions are sampled from $P_{\hat X^N|U^N,Y^N}^{\psi}$ during training.
For deterministic evaluation and coding, the mapping in
\eqref{eq:wz_reconstruction_mapping} selects
$[g_\psi(u^N,y^N)]_i=\arg\max_{\hat x\in\mathcal X}
P_{\hat X_i|U^N,Y^N}^{\psi}(\hat x|u^N,y^N)$, with ties resolved by a fixed rule.
The test channel $P_{U^N|X^N}^{\phi}$ thus determines the auxiliary
representation and the conditional mutual-information term
$\mathsf{I}_\phi(X^N;U^N|Y^N)$, while the reconstruction network defines
$P_{\hat X^N|U^N,Y^N}^{\psi}$ for training and $g_\psi$ for deterministic evaluation and coding. The two components are optimized jointly as described next.

\subsection{Optimization and Estimation Scheme}
\label{sec:WZ_opt}
The estimation--optimization procedure follows the alternating structure of Section~\ref{sec: RD}.
For a fixed test channel, the NPDs are trained to estimate the relevant information-theoretic objective.
With the NPD parameters fixed, the learned stochastic models are then updated using the resulting plug-in estimate.
There are three differences from the rate--distortion setting.
First, the rate term is the conditional mutual information $\mathsf{I}_\phi(X^N;U^N|Y^N)$.
Accordingly, the unconditional NPD of Section~\ref{sec:RD_estimation} is replaced by an NPD conditioned on the decoder side information $Y^N$.
Second, the training distortion is averaged over the reconstruction distribution
$P_{\hat X^N|U^N,Y^N}^{\psi}$, which uses the decoder side information.
Third, the reconstruction parameters $\psi$ are updated together with $\phi$.
We define these modified components explicitly below.

\subsubsection*{Estimation Stage}

For a fixed test channel $P_{U^N|X^N}^{\phi}$, the induced joint distribution $P_{X^N,Y^N,U^N}^{\phi}$ satisfies $U^N-X^N-Y^N$ by construction, so the Wyner--Ziv rate term can be
decomposed according to \eqref{eq:wz_synthetic_decomposition}. We estimate the two entropy terms in
\eqref{eq:wz_synthetic_decomposition} using
two NPD parameter sets,
\begin{align*}
    \theta^{Y}
    &=
    \{E_{\theta_{E_Y}},F_{\theta_F},G_{\theta_G},H_{\theta_H}\},
    \\
    \theta^{X}
    &=
    \{E_{\theta_{E_X}},F_{\theta_F},G_{\theta_G},H_{\theta_H}\},
\end{align*}
which, as in Section~\ref{sec:RD_estimation}, share the core components and differ in their embedding networks. The parameter set $\theta^Y$ uses the side-information-dependent
embedding
$\mathbf e_0^{Y}=E_{\theta_{E_Y}}(y^N)\in\mathbb R^{N\times d}$
and estimates the distributions
$P_{\tilde U_i|\tilde U^{i-1},Y^N}^{\theta^Y}$, whereas
$\theta^X$ uses the source-dependent embedding
$\mathbf e_0^{X}=E_{\theta_{E_X}}(x^N)\in\mathbb R^{N\times d}$
and estimates the distributions
$P_{\tilde U_i|\tilde U^{i-1},X^N}^{\theta^X}$.

Let $\mathcal D_{M,N}^{\phi}
    =\{(x_j^N,y_j^N,u_j^N)\}_{j=1}^{M}\sim P_{X^N,Y^N,U^N}^{\phi}$. 
    For each triplet, we compute $\tilde u_j^N=u_j^N G_N$. The corresponding NPD losses are
\begin{align}
    \mathcal L_Y(y_j^N,u_j^N)
    &=
    -\frac{1}{N}
    \sum_{i=1}^{N}
    \log
    P_{\tilde U_i|\tilde U^{i-1},Y^N}^{\theta^Y}
    (\tilde u_{j,i}|\tilde u_j^{i-1},y_j^N),
    \label{eq:wz_loss_y}
    \\
    \mathcal L_X(x_j^N,u_j^N)
    &=
    -\frac{1}{N}
    \sum_{i=1}^{N}
    \log
    P_{\tilde U_i|\tilde U^{i-1},X^N}^{\theta^X}
    (\tilde u_{j,i}|\tilde u_j^{i-1},x_j^N).
    \label{eq:wz_loss_x}
\end{align}
Averaging these losses gives estimates of
$\frac{1}{N}\mathsf{H}_\phi(\tilde U^N|Y^N)$ and
$\frac{1}{N}\mathsf{H}_\phi(\tilde U^N|X^N)$, respectively, and the
normalized conditional mutual information is estimated by the plug-in estimate
\begin{equation}
    \frac{1}{N}
    \widehat{\mathsf{I}}_{\theta}(\mathcal D_{M,N}^{\phi})
    =
    \frac{1}{M}
    \sum_{j=1}^{M}
    \left[
        \mathcal L_Y(y_j^N,u_j^N)
        -
        \mathcal L_X(x_j^N,u_j^N)
    \right],
    \label{eq:wz_cmi_estimator}
\end{equation}
where $\theta=\{\theta^Y,\theta^X\}$.

\subsubsection*{Optimization Stage}
With the NPD parameters fixed, we update the test-channel parameters
$\phi$ and the reconstruction parameters $\psi$. Following the
multi-letter formulation in \eqref{eq:wz_multiletter}, we use the
Lagrangian with randomized reconstruction
\begin{equation}
    J_\lambda(\phi,\psi)
    \triangleq
    \frac{1}{N}
    \mathsf{I}_\phi(X^N;U^N|Y^N)
    +
    \lambda
    \mathbb E_{P_{X^N,Y^N}P_{U^N|X^N}^{\phi}P_{\hat X^N|U^N,Y^N}^{\psi}}
    \left[
        d_N
        \left(
            X^N,
            \hat X^N
        \right)
    \right],
    \label{eq:wz_lagrangian}
\end{equation}
where $\lambda\ge 0$ controls the rate--distortion tradeoff.
The conditional information density associated with the first term is
\begin{equation}
    \imath_\phi(x^N,y^N,u^N)
    \triangleq
    \log
    \frac{
        P_{U^N|X^N}^{\phi}(u^N|x^N)
    }{
        P_{U^N|Y^N}^{\phi}(u^N|y^N)
    }.
    \label{eq:wz_information_density}
\end{equation}
The following theorem gives the corresponding score-function gradient for
the Wyner--Ziv objective.

\begin{theorem}[Score-function gradient for the Wyner--Ziv objective]
\label{th:pg_wz}
Under the softmax test-channel parameterization in
Section~\ref{sec:WZ_test_channel}, assume that the test-channel probabilities
are differentiable with respect to $\phi$. For fixed source--side-information
distribution $P_{X^N,Y^N}$ and reconstruction parameters $\psi$, the gradient
of $J_\lambda(\phi,\psi)$ in \eqref{eq:wz_lagrangian} with respect to $\phi$ is
\begin{equation}
    \nabla_\phi J_\lambda(\phi,\psi)
    =
    \mathbb E
    \left[
        \nabla_\phi
        \ln
        P_{U^N|X^N}^{\phi}(U^N|X^N)
        \left(
            \frac{1}{N}
            \imath_\phi(X^N,Y^N,U^N)
            +
            \lambda
            d_N
            \left(
                X^N,
                \hat X^N
            \right)
        \right)
    \right],
    \label{eq:wz_exact_phi_gradient}
\end{equation}
where the expectation is with respect to
$P_{X^N,Y^N}P_{U^N|X^N}^{\phi}P_{\hat X^N|U^N,Y^N}^{\psi}$.
\end{theorem}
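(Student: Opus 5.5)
We differentiate the two terms of $J_\lambda(\phi,\psi)$ in \eqref{eq:wz_lagrangian} separately, mirroring the argument for Theorem~\ref{th:pg_rd}. Since all alphabets are finite, every expectation is a finite sum. The softmax parameterization \eqref{eq:wz_test_channel_factorization} makes $P^{\phi}_{U^N|X^N}(u^N|x^N)>0$ for every $u^N$. Consequently $\ln P^{\phi}_{U^N|X^N}$ is differentiable, and differentiation may be interchanged with summation. We use the score-function identity
\[
\nabla_\phi P^{\phi}_{U^N|X^N}(u^N|x^N)=P^{\phi}_{U^N|X^N}(u^N|x^N)\,\nabla_\phi\ln P^{\phi}_{U^N|X^N}(u^N|x^N).
\]

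\emph{Distortion term.} Write the expectation as
\[
\sum_{x^N,y^N,u^N,\hat x^N}P_{X^NY^N}(x^N,y^N)\,P^{\phi}_{U^N|X^N}(u^N|x^N)\,P^{\psi}_{\hat X^N|U^N,Y^N}(\hat x^N|u^N,y^N)\,d_N(x^N,\hat x^N).
\]
Only the factor $P^\phi_{U^N|X^N}$ depends on $\phi$, because $P_{X^NY^N}$ and $\psi$ are fixed. Applying the score identity yields $\mathbb E\big[\nabla_\phi\ln P^\phi_{U^N|X^N}(U^N|X^N)\,\lambda d_N(X^N,\hat X^N)\big]$ under the full joint law.

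\emph{Rate term.} Write
\[
\mathsf I_\phi(X^N;U^N|Y^N)=\sum P_{X^NY^N}\,P^\phi_{U^N|X^N}\,\imath_\phi(x^N,y^N,u^N),
\]
where
\[
P^\phi_{U^N|Y^N}(u^N|y^N)=\sum_{x^N}P_{X^N|Y^N}(x^N|y^N)\,P^\phi_{U^N|X^N}(u^N|x^N).
\]
This is valid by the Markov chain $U^N-X^N-Y^N$, which holds by construction. The product rule gives two pieces. The first is $\mathbb E[\nabla_\phi\ln P^\phi_{U^N|X^N}\cdot\imath_\phi]$, which is the desired term after multiplying by $1/N$. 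The second is $\mathbb E[\nabla_\phi\imath_\phi]$, which we must show vanishes.

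Write $\nabla_\phi\imath_\phi=(\nabla_\phi\ln P^\phi_{U^N|X^N}-\nabla_\phi\ln P^\phi_{U^N|Y^N})/\ln 2$. For the first part,
\[
\mathbb E[\nabla_\phi\ln P^\phi_{U^N|X^N}]=\sum_{x^N}P_{X^N}(x^N)\,\nabla_\phi\sum_{u^N}P^\phi_{U^N|X^N}(u^N|x^N)=0,
\]
because the inner sum equals $1$. For the second part, the joint law marginalizes to $P_{Y^N}P^\phi_{U^N|Y^N}$, since $y^N$ and $u^N$ are the only arguments. Hence
\[
\mathbb E[\nabla_\phi\ln P^\phi_{U^N|Y^N}]=\sum_{y^N}P_{Y^N}(y^N)\,\nabla_\phi\sum_{u^N}P^\phi_{U^N|Y^N}(u^N|y^N)=0.
\]
This uses $P^\phi_{U^N|Y^N}>0$, which is inherited from positivity of the softmax factors.

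Summing the two contributions and inserting the factor $1/N$ gives \eqref{eq:wz_exact_phi_gradient}. The rate term's integrand does not involve $\hat X^N$, so its expectation may equivalently be taken over the full joint law including $P^\psi_{\hat X^N|U^N,Y^N}$.

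The main point needing care is the second normalization step. Unlike in the rate--distortion case, the denominator $P^\phi_{U^N|Y^N}$ is itself a $\phi$-dependent mixture. The argument must therefore use that the expectation of its score is taken under its own law, which is exactly the consequence of the Markov construction. Everything else is a routine finite-sum manipulation.
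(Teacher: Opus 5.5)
Your proof is correct and follows essentially the same route as the paper's: the score-function identity for the distortion term, the product rule applied to $\mathsf I_\phi(X^N;U^N|Y^N)=\mathbb E[\imath_\phi]$ (valid by the Markov chain), and the vanishing of $\mathbb E[\nabla_\phi\imath_\phi]$ by normalization of $P^\phi_{U^N|Z^N}$ under its own marginal law, for $Z=X$ and $Z=Y$. One small correction to your closing remark: in the rate--distortion case the marginal $P^\phi_{U^N}$ is also a $\phi$-dependent mixture, so what is new here is only that the mixture is conditioned on $y^N$, and your check that the $(Y^N,U^N)$-marginal equals $P_{Y^N}P^\phi_{U^N|Y^N}$ already handles this.
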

The proof is given in Appendix~\ref{app:PG_WZ}. It extends the argument
of Theorem~\ref{th:pg_rd} to conditional mutual information.

As in Section~\ref{sec:RD_opt}, the exact information density in
\eqref{eq:wz_information_density} is unavailable and is replaced by an NPD-based plug-in estimate.
For fixed NPD parameters $\theta=\{\theta^{Y},\theta^{X}\}$,
define
\begin{align}
    \hat \imath_{\theta}(x^N,y^N,u^N)
    &\triangleq
    \log
    \frac{
        P_{\tilde U^N|X^N}^{\theta^{X}}(\tilde u^N|x^N)
    }{
        P_{\tilde U^N|Y^N}^{\theta^{Y}}(\tilde u^N|y^N)
    }
    \nonumber\\
    &=
    \sum_{i=1}^{N}
    \log
    \frac{
        P_{\tilde U_i|\tilde U^{i-1},X^N}^{\theta^{X}}
        (\tilde u_i|\tilde u^{i-1},x^N)
    }{
        P_{\tilde U_i|\tilde U^{i-1},Y^N}^{\theta^{Y}}
        (\tilde u_i|\tilde u^{i-1},y^N)
    },
    \qquad
    \tilde u^N=u^N G_N ,
    \label{eq:wz_npd_information_density}
\end{align}
which is the counterpart of \eqref{eq:rd_npd_information_density} with the unconditional distribution
replaced by the distribution conditioned on the side information $Y^N$.
For a mini-batch
\[
    \mathcal D_{M,N}^{\phi,\psi}
    =
    \{(x_j^N,y_j^N,u_j^N,\hat x_j^N)\}_{j=1}^{M}
    \sim P_{X^N,Y^N}P_{U^N|X^N}^{\phi}P_{\hat X^N|U^N,Y^N}^{\psi},
\]
drawn independently at the current parameter values, we hold $\theta$
and $\psi$ fixed for the test-channel update and substitute
\eqref{eq:wz_npd_information_density} for the exact information density in
\eqref{eq:wz_exact_phi_gradient}, which gives the plug-in Monte Carlo gradient estimate
\begin{align}
    \widehat{\nabla_\phi J_\lambda}
    (\theta,\phi,\psi;\mathcal D_{M,N}^{\phi,\psi})
    =
    \frac{1}{M}
    \sum_{j=1}^{M}
    &
    \nabla_\phi
    \ln
    P_{U^N|X^N}^{\phi}(u_j^N|x_j^N)
    \nonumber\\
    &\times
    \left[
        \frac{1}{N}
        \hat \imath_{\theta}(x_j^N,y_j^N,u_j^N)
        +
        \lambda
        d_N
        \left(
            x_j^N,
            \hat x_j^N
        \right)
    \right].
    \label{eq:wz_empirical_phi_gradient}
\end{align}
As in the rate--distortion case, for fixed $\theta$ and $\psi$, \eqref{eq:wz_empirical_phi_gradient} is
an unbiased estimate of the gradient with respect to $\phi$ of the learned surrogate
$\mathbb E[N^{-1}\hat\imath_{\theta}(X^N,Y^N,U^N)+\lambda d_N(X^N,\hat X^N)]$,
in which $\hat\imath_\theta$ is held fixed. If both learned
conditional distributions coincide with their true counterparts at the current $\phi$, the estimate is unbiased for $\nabla_\phi J_\lambda(\phi,\psi)$ by Theorem~\ref{th:pg_wz}.

The reconstruction parameters $\psi$ influence only the distortion term.
With $\phi$ fixed, applying the same score-function argument as in
Theorem~\ref{th:pg_wz} to the reconstruction distribution
$P_{\hat X^N|U^N,Y^N}^{\psi}$ gives
\begin{equation}
    \nabla_\psi J_\lambda(\phi,\psi)
    =
    \lambda
    \mathbb E_{P_{X^N,Y^N}P_{U^N|X^N}^{\phi}P_{\hat X^N|U^N,Y^N}^{\psi}}
    \left[
        d_N(X^N,\hat X^N)
        \nabla_\psi\ln P_{\hat X^N|U^N,Y^N}^{\psi}(\hat X^N|U^N,Y^N)
    \right].
    \label{eq:wz_psi_gradient}
\end{equation}
The derivative acts on the reconstruction log-probability rather than
on $d_N$, so \eqref{eq:wz_psi_gradient} can be applied directly to Hamming
distortion. An unbiased mini-batch
estimate is
\[
    \widehat{\nabla_\psi J_\lambda}
    =\frac{\lambda}{M}\sum_{j=1}^{M}
    d_N(x_j^N,\hat x_j^N)
    \nabla_\psi\ln P_{\hat X^N|U^N,Y^N}^{\psi}(\hat x_j^N|u_j^N,y_j^N).
\]
The parameters are updated according to
\begin{align}
    \phi
    &\leftarrow
    \phi
    -
    \gamma_\phi
    \widehat{\nabla_\phi J_\lambda}
    (\theta,\phi,\psi;\mathcal D_{M,N}^{\phi,\psi}),
    \label{eq:wz_phi_update}
    \\
    \psi
    &\leftarrow
    \psi
    -
    \gamma_\psi
    \widehat{\nabla_\psi J_\lambda}.
    \label{eq:wz_psi_update}
\end{align}

The optimization step is alternated with the estimation step above exactly as in Section~\ref{sec:RD_opt}, including the warm-up stage in which the NPDs are trained with $P_{U^N|X^N}^{\phi}$ fixed before the first update of $(\phi,\psi)$. The complete procedure, which differs from Algorithm~\ref{alg:RD} only in the estimation losses \eqref{eq:wz_loss_y}--\eqref{eq:wz_loss_x}, the additional sampling of $\hat x^N$ from $P_{\hat X^N|U^N,Y^N}^{\psi}$, and the update \eqref{eq:wz_psi_update} of $\psi$, is summarized in Algorithm~\ref{alg:WZ} in Appendix~\ref{app:implementation_details}. Repeating the two steps estimates the
Wyner--Ziv rate--distortion tradeoff and learns a lossy polar coding
scheme with decoder side information, which we describe next.

\subsubsection*{Polar Code Construction}

The lossy construction of Section~\ref{sec:RD_opt} is adapted to the Wyner--Ziv setting as follows.
Let $\phi^\ast$ and $\psi^\ast$ denote the optimized test-channel and
reconstruction parameters, respectively, and let
$\theta^\ast=\{\theta^{X,\ast},\theta^{Y,\ast}\}$ denote the final trained
NPD parameters.
The main change is that the decoder has access to $Y^N$ rather than to the
unconditional distribution of the auxiliary sequence. Consequently, the
rate-carrying coordinates are those that are reliable from the encoder's
view of $X^N$ but remain uncertain from the decoder's side information
$Y^N$.

The index sets are chosen from a design set generated by the optimized
model. We draw
$(X^N,Y^N,U^N)\sim P_{X^N,Y^N}P_{U^N|X^N}^{\phi^\ast}$,
form the polarized auxiliary sequence $\tilde U^N=U^NG_N$, and evaluate
the learned conditional probabilities under teacher forcing, as in the lossy construction. For each
coordinate, define
\begin{equation}
    \widehat h_i^X
    \triangleq
    \widehat{\mathsf{H}}_{\theta^{X,\ast}}^{\phi^\ast}
    (\tilde U_i|\tilde U^{i-1},X^N),
    \qquad
    \widehat h_i^Y
    \triangleq
    \widehat{\mathsf{H}}_{\theta^{Y,\ast}}^{\phi^\ast}
    (\tilde U_i|\tilde U^{i-1},Y^N),
    \qquad i=1,\ldots,N,
\end{equation}
where the empirical averages are taken over the design set. Let
$q=|\mathcal U|$, choose a common threshold parameter $0<\delta<1/2$,
and define
\[
\begin{aligned}
    \mathcal F_y
    &\triangleq \{i\in[N]:\widehat h_i^Y>\delta\log q\},\\
    \mathcal F_x
    &\triangleq \mathcal F_y\cap
    \{i\in[N]:\widehat h_i^X\ge(1-\delta)\log q\}.
\end{aligned}
\]
The source-frozen coordinates are $\mathcal F_x$, and
the transmitted coordinates are $\mathcal A=\mathcal F_y\setminus\mathcal F_x$.
The sets $\mathcal F_x$, $\mathcal A$, and $[N]\setminus\mathcal F_y$
form a disjoint partition: frozen coordinates are shared, low-entropy
coordinates outside $\mathcal F_y$ are reconstructed using side
information, and every remaining coordinate is transmitted. With true
entropies and $\delta=\delta_N$, this partition agrees with that of
Theorem~\ref{thm:wz_faim_achievability} for sufficiently large $N$ under
its assumptions.
Before encoding, the terminals also generate shared symbols
$s_{\mathcal F_x}\in\mathcal U^{|\mathcal F_x|}$ from a common seed,
independently and uniformly over $\mathcal U$ and independently of the
source and side information.

Encoding and decoding then proceed as in the lossy construction, with the two conditional NPDs in place of the conditional and unconditional pair. For source block $x^N$, the encoder forms the
source-dependent embedding $E_{\theta_{E_X}^\ast}(x^N)$ and runs the
learned SC recursion with the conditional distributions given $X^N$,
$P_{\tilde U_i|\tilde U^{i-1},X^N}^{\theta^{X,\ast}}(\cdot|\tilde u^{i-1},x^N)$,
fixing $\tilde u_i=s_i$ for $i\in\mathcal F_x$ and sampling
$\tilde u_i$ from this distribution for every $i\notin\mathcal F_x$.
It transmits $\tilde u_{\mathcal A}$ and leaves the coordinates outside $\mathcal F_y$ untransmitted, so that the rate is
$R=|\mathcal A|\log|\mathcal U|/N$ bits per source symbol.
Given $y^N$
and $\tilde u_{\mathcal A}$, the decoder forms the side-information embedding
$E_{\theta_{E_Y}^\ast}(y^N)$, regenerates $s_{\mathcal F_x}$, and performs
SC/SCL decoding with the learned distributions given $Y^N$,
$P_{\tilde U_i|\tilde U^{i-1},Y^N}^{\theta^{Y,\ast}}(\cdot|\hat{\tilde u}^{i-1},y^N)$,
where coordinates in $\mathcal F_x$ are fixed to the shared symbols,
coordinates in $\mathcal A$ are fixed to the received symbols, and
coordinates outside $\mathcal F_y$ are estimated.
The resulting sequence $\hat{\tilde u}^N$ is passed through
the inverse polar transform and the reconstruction mapping,
\begin{equation}
    \hat u^N=\hat{\tilde u}^N G_N^{-1},
    \qquad
    \hat x^N=g_{\psi^\ast}(\hat u^N,y^N).
\end{equation}

\begin{figure}[!t]
    \centering

    \begin{subfigure}{0.47\textwidth}
        \centering

\pgfmathdeclarefunction{hbin}{1}{%
  \pgfmathparse{(#1<=0 || #1>=1) ? 0 : -(#1*ln(#1) + (1-#1)*ln(1-#1))/ln(2)}%
}

\begin{tikzpicture}
\begin{axis}[
    width=3in,
    height=2.6in,
    grid=both,
    xlabel={Rate},
    ylabel={Distortion},
    title={Binary symmetric source},
    xmin=0, xmax=1,
    ymin=0, ymax=0.5
]

\addplot[
    very thick,
    black,
    domain=0.0001:0.4999,
    samples=800
]
({1 - hbin(x)}, {x});

\addplot[
    ultra thick,
    red,
    dashed,
    domain=0.0001:0.1999,
    samples=800
]
({hbin(0.2*(1-x) + 0.8*x) - hbin(x)}, {x});

\addplot[
    very thick,
    red,
    domain=0.0001:0.04735,
    samples=400
]
({hbin(0.2*(1-x) + 0.8*x) - hbin(x)}, {x});

\addplot[
    ultra thick,
    red,
    domain=0.04735:0.2,
    samples=200
]
({(0.50020/(0.04735-0.2))*(x-0.2)}, {x});

\addplot[
    only marks,
    blue,
    mark=star,
    mark size=2.8,
    line width=1.0,
    mark options={fill=yellow!90!white, draw=blue}
]
coordinates {
    (0.1000, 0.1750)
    (0.1978, 0.1450)
    (0.2523, 0.1248)
    (0.3360, 0.1000)
    (0.4300, 0.0750)
    (0.4996, 0.0500)
    (0.5855, 0.0251)
};

\addplot[
    only marks,
    mark=o,
    mark size=2.0,
    line width=0.8,
    mark options={fill=yellow!90!white, draw=black}
]
coordinates {
    (0.2802, 0.1379)
    (0.5420, 0.0594)
    (0.4650, 0.0834)
    (0.2133, 0.1600)
    (0.1080, 0.1880)
    (0.3690, 0.1100)
    (0.6400, 0.0310)
};

\node[
    anchor=north east,
    draw=black,
    fill=white,
    inner sep=1.5pt,
    font=\scriptsize
] at (rel axis cs:0.98,0.98) {%
\begin{tabular}{@{}c@{\hspace{3pt}}l@{}}

\tikz{\draw[black,very thick] (0,0) -- (0.14,0);} &
No side information \\[-1pt]

\tikz{\draw[red,very thick] (0,0) -- (0.14,0);} &
Wyner--Ziv Curve \\[-1pt]

\tikz{\draw[red,ultra thick,dashed] (0,0) -- (0.14,0);} &
$R_{\mathrm{WZ}}(D)$ \\[-1pt]

\tikz{\draw[draw=blue,fill=yellow!90!white,line width=1.0]
plot[only marks,mark=star,mark size=1.8] coordinates {(0.11,0)};} &
Wyner--Ziv estimates \\[-1pt]

\tikz{\draw[draw=black,fill=yellow!90!white,line width=0.8]
plot[only marks,mark=o,mark size=1.7] coordinates {(0.11,0)};} &
Coding operating points

\end{tabular}%
};

\end{axis}
\end{tikzpicture}
        \caption{Binary symmetric source.}
        \label{fig:WZ_BSC}
    \end{subfigure}
    \hspace{0.015\textwidth}
    \begin{subfigure}{0.47\textwidth}
        \centering

\begin{tikzpicture}
\begin{axis}[
    width=3in,
    height=2.6in,
    grid=both,
    xlabel={Rate},
    ylabel={Distortion},
    xlabel style={font=\normalsize},
    ylabel style={font=\normalsize},
    xmin=0, xmax=0.55,
    ymin=0, ymax=0.5,
    title={First-order Markov source},
    title style={font=\normalsize},
    tick label style={font=\small},
    legend style={
        at={(0.98, 0.98)},
        anchor=north east,
        nodes={scale=0.8, transform shape}
    }
]

\addplot[
    thick,
    black,
    line width=1.0
]
coordinates {
    (4.596849e-01, 7.561174e-04)
    (4.534614e-01, 1.358285e-03)
    (4.438267e-01, 2.378787e-03)
    (4.294139e-01, 4.060263e-03)
    (4.092112e-01, 6.682294e-03)
    (3.835689e-01, 1.047899e-02)
    (3.682486e-01, 1.315574e-02)
    (3.502609e-01, 1.674589e-02)
    (3.255332e-01, 2.211432e-02)
    (2.936764e-01, 2.994642e-02)
    (2.600542e-01, 3.963778e-02)
    (2.313661e-01, 4.959728e-02)
    (2.059031e-01, 6.028679e-02)
    (1.765300e-01, 7.389809e-02)
    (1.461889e-01, 9.066529e-02)
    (1.201853e-01, 1.111429e-01)
    (9.653424e-02, 1.379909e-01)
    (7.000408e-02, 1.669979e-01)
    (4.254666e-02, 1.928439e-01)
    (2.286880e-02, 2.322570e-01)
    (1.002547e-02, 2.952895e-01)
    (3.344648e-03, 3.721938e-01)
    (4.927404e-04, 4.480390e-01)
    (2.482829e-05, 4.772802e-01)
    (0.0, 5.000000e-01)
};
\addlegendentry{No side information (BA)}

\addplot[
    only marks,
    blue,
    mark=star,
    mark size=3.4,
    line width=1.0,
    mark options={fill=yellow!90!white, draw=blue}
]
coordinates {
    (0.265, 0.01)
    (0.22615, 0.02)
    (0.1803, 0.033)
    (0.1105, 0.07)
    (0.071, 0.106)
    (0.0, 0.2)
};
\addlegendentry{Wyner--Ziv estimates}

\addplot[
    only marks,
    mark=o,
    mark size=2.2,
    line width=0.8,
    mark options={fill=yellow!90!white, draw=black}
]
coordinates {
    (0.310, 0.0190)
    (0.265, 0.0300)
    (0.212, 0.0443)
    (0.132, 0.0850)
    (0.087, 0.1246)
    (0.005, 0.2280)
};
\addlegendentry{Coding operating points}

\end{axis}
\end{tikzpicture}
        \caption{First-order Markov source.}
        \label{fig:WZ_Markov}
    \end{subfigure}

    \caption{Rate--distortion performance for Wyner--Ziv coding with decoder side information generated through a binary symmetric channel with crossover probability $0.2$. Blue stars indicate estimated Wyner--Ziv $(R,D)$ pairs obtained by alternating optimization of the conditional NPDs, auxiliary test channel, and reconstruction model, and circular markers indicate the corresponding coding operating points obtained using SCL decoding with block length $N=2^{13}$ and list size $L=16$. In (a), the black curve is the rate--distortion function without side information, the dashed red curve is the branch $R_{\mathrm{WZ}}(D)$ before time sharing, and the solid red curve is the Wyner--Ziv rate--distortion curve. In (b), the black curve is the Blahut--Arimoto estimate of the rate--distortion curve without side information.}

    \label{fig:WZ_combined}
\end{figure}

\subsection{Experiments}

We evaluate the proposed neural--polar Wyner--Ziv construction in two binary-source settings. The results are reported in Fig.~\ref{fig:WZ_combined}.
We first consider a binary symmetric source with decoder side information generated according to $Y = X \oplus Z$, where $Z \sim \mathrm{Bernoulli}(p)$ is independent of $X$ and $p=0.2$.
For this model, the Wyner--Ziv rate--distortion function is known explicitly and therefore provides a benchmark for the proposed estimation procedure.
Following~\cite{korada2010polar}, the dashed red curve in Fig.~\ref{fig:WZ_BSC} is labeled $R_{\mathrm{WZ}}(D)=h_2(p(1-D)+(1-p)D)-h_2(D)$, $0\le D\le p$, denoting the achievable branch before time sharing, where $h_2$ is the binary entropy function. The solid red Wyner--Ziv curve is its lower convex envelope with the zero-rate point $(R,D)=(0,p)$, obtained by time sharing with reconstruction from the side information alone.
As shown in Fig.~\ref{fig:WZ_BSC}, the values obtained by jointly optimizing the auxiliary test channel and reconstruction model, together with the corresponding conditional NPDs, closely track the analytical Wyner--Ziv curve.
The figure also includes the operating points of the resulting polar coding scheme, obtained using the learned test channel, conditional NPDs, and reconstruction mapping.
We further consider a first-order Markov source with the same $\mathrm{BSC}(0.2)$ side-information mechanism, which assesses the method for a source with memory.
Fig.~\ref{fig:WZ_Markov} compares the Wyner--Ziv estimates with a no-side-information Blahut--Arimoto benchmark. The lower rates of the Wyner--Ziv points therefore quantify the gain provided by the decoder side information.
The experiment further demonstrates that the same learning and polar-coding procedure applies beyond the memoryless benchmark.

Finally, we evaluate the Wyner--Ziv estimation procedure on the real
\texttt{E. coli} sequence considered in the preceding lossless and lossy
experiments. Decoder-only side information is generated by passing each source
symbol through a $q$-ary symmetric channel with crossover probability
$p_{\mathrm{side}}=0.2$:
\begin{equation}
    P_{Y|X}(y|x)
    =
    \begin{cases}
        1-p_{\mathrm{side}}, & y=x,\\
        \dfrac{p_{\mathrm{side}}}{q-1}, & y\ne x.
    \end{cases}
\end{equation}
We use the symbolwise Hamming distortion
$d(x,\hat x)=\mathbf 1\{x\ne\hat x\}$. Fig.~\ref{fig:WZ_EColi}
compares the NPD-estimated rate--distortion pairs obtained without side
information in Section~\ref{sec: RD} with those obtained by the proposed
Wyner--Ziv procedure using decoder side information, together with the
corresponding coding operating points. The side information substantially
reduces the rate required at a given distortion.

\begin{figure}[t]
    \centering
    \resizebox{0.48\textwidth}{!}{\begin{tikzpicture}
\begin{axis}[
    width=3.4in,
    height=2.7in,
    grid=both,
    xlabel={Rate (bits per base)},
    ylabel={Distortion},
    xmin=0, xmax=2.0,
    ymin=0, ymax=0.65,
    title={\texttt{E.coli} source},
    title style={font=\normalsize},
    tick label style={font=\small},
    legend style={
        at={(0.97,0.97)},
        anchor=north east,
        nodes={scale=0.72, transform shape}
    }
]

\addplot[
    black,
    thick,
    smooth,
    mark=square*,
    mark size=2.2,
    mark options={fill=black, draw=black}
]
coordinates {
    (0.072, 0.566)
    (0.250, 0.445)
    (0.6753, 0.266)
    (0.988, 0.175)
    (1.4758, 0.063)
    (1.885, 0.0001)
};
\addlegendentry{No side information (NPD)}

\addplot[
    red,
    thick,
    smooth,
    mark=star,
    mark size=3.4,
    line width=1.0,
    mark options={fill=yellow!90!white, draw=red}
]
coordinates {
    (0, 0.2)
    (0.282804, 0.1485)
    (0.381924, 0.12677)
    (0.565, 0.095)
    (0.718, 0.063)
    (0.821574, 0.0398)
    (1.021976, 0.0001)
};
\addlegendentry{Wyner--Ziv estimates}

\addplot[
    only marks,
    mark=o,
    mark size=2.2,
    line width=0.8,
    mark options={fill=yellow!90!white, draw=black}
]
coordinates {
    (0.385000, 0.185000)
    (0.505000, 0.160000)
    (0.693907, 0.125156)
    (0.879849, 0.088556)
    (1.005722, 0.062021)
};
\addlegendentry{Coding operating points}
\end{axis}
\end{tikzpicture}}
    \caption{Rate--distortion performance on the
    \texttt{E. coli} source. Black squares show the estimates obtained without
    side information in Section~\ref{sec: RD}. Red stars show the
    Wyner--Ziv estimates when decoder-only side information is generated
    through a quaternary symmetric channel with
    $p_{\mathrm{side}}=0.2$, and circular markers show the corresponding
    coding operating points obtained using SCL decoding at block length
    $N=2^n$ with $n=13$ and list size $L=16$.}
    \label{fig:WZ_EColi}
\end{figure}

The corresponding implementation parameters for the Wyner--Ziv experiments, including the conditional NPDs, auxiliary test channel, reconstruction mapping, and alternating optimization schedule, are reported in Table~\ref{tab:wz_reproducibility} in Appendix~\ref{app:implementation_details}.

\section{Conclusions and Future Work}\label{sec:conc}
This paper introduced a data-driven polar-coding framework for source coding based on neural polar decoders. The main idea is to replace the explicit successive-cancellation update rules of classical polar constructions with trainable modules that estimate the relevant polarized conditional distributions directly from samples. This yields a unified approach to lossless compression, lossy compression, and Wyner--Ziv coding, while avoiding the explicit quadratic alphabet enumeration and cubic state-space enumeration that limit analytical SC and SCT decoders for large alphabets and sources with memory.

The proposed schemes retain the recursive polar structure and use NPDs to identify the polarized coordinates required for compression and reconstruction. In the lossy setting, NPDs were combined with a neural test channel to optimize rate--distortion operating points from data, and the Wyner--Ziv extension incorporated decoder side information through learned conditional polarized distributions and a learned reconstruction mapping, supported by conditional polarization and nested polar coding results for FAIM processes. The numerical results show that, when analytical benchmarks are available, the learned constructions closely match them, and that the same methodology remains applicable in settings where explicit source models or analytical decoding rules are unavailable.

An important direction for future work is extending the NPD framework to multiterminal source coding. A representative example is Slepian--Wolf coding, which requires learning coupled synthetic conditional distributions and constructing encoder-specific transmitted sets and a joint decoder for target rate tuples, while each encoder uses only its local observation.

\begin{appendices}

\section{Proof of Lemma \ref{lem:memoryless_sc}}\label{app:sc_alphbet}
This appendix proves the non-binary SC update formulas in
Lemma~\ref{lem:memoryless_sc}. The main idea is to use the polar-kernel
decomposition and source memorylessness to derive the probability-domain
recursions for odd and even indices, and then convert these identities into
the LLR-domain operations $f_k^{+}$ and $f_k^{-}$.

\begin{proof}
The derivation follows that of Ar\i{}kan in~\cite{arikan2009channel}, adapted
to source coding over $\mathbb{GF}(q)$ without channel-output variables and
to the finite-field kernel in~\eqref{eq:kernel2}. All sums below are over the
corresponding symbols in $\mathcal{X}$. The recursive polar transform
generated by \eqref{eq:kernel2} gives
\begin{equation}
    x_1^NG_N=u_{1,o}^{2N}\ominus_q \alpha u_{1,e}^{2N},
    \qquad
    x_{N+1}^{2N}G_N=u_{1,e}^{2N}.
\end{equation}
Since the source is memoryless, the two length-$N$ input blocks are independent, and hence
\begin{equation}
    P_{2N}(u^{2N})
    =P_N(u_{1,o}^{2N}\ominus_q \alpha u_{1,e}^{2N})P_N(u_{1,e}^{2N}).
    \label{eq:appendix_memoryless_factorization}
\end{equation}
Therefore,
\begin{align}
    P_{2N}^{(2i-1)}(u^{2i-1})
    &= \sum_{u_{2i}^{2N}} P_{2N}(u^{2N}) \nonumber\\
    &= \sum_{u_{2i}}\sum_{u_{2i+1,e}^{2N}}
       P_N(u_{1,e}^{2N})
       \sum_{u_{2i+1,o}^{2N}}
       P_N(u_{1,o}^{2N}\ominus_q \alpha u_{1,e}^{2N}) \nonumber\\
    &= \sum_{u_{2i}}
       P_N^{(i)}(u_{1,e}^{2i})
       P_N^{(i)}(u_{1,o}^{2i}\ominus_q \alpha u_{1,e}^{2i}),\label{eq:memoryless_prob_odd}
\end{align}
where the last equality follows from the definition of $P_N^{(i)}$ in \eqref{eq: pn^i}. In particular, for fixed $u_{1,e}^{2N}$, summing over $u_{2i+1,o}^{2N}\in\mathcal{X}^{N-i}$ ensures that the last $N-i$ coordinates of $u_{1,o}^{2N}\ominus_q \alpha u_{1,e}^{2N}$ also range over $\mathcal{X}^{N-i}$.

Similarly, if $u_{2i}$ is also fixed, then
\begin{align}
    P_{2N}^{(2i)}(u^{2i})
    &= \sum_{u_{2i+1}^{2N}} P_{2N}(u^{2N}) \nonumber\\
    &= \sum_{u_{2i+1,e}^{2N}}
       P_N(u_{1,e}^{2N})
       \sum_{u_{2i+1,o}^{2N}}
       P_N(u_{1,o}^{2N}\ominus_q \alpha u_{1,e}^{2N}) \nonumber\\
    &= P_N^{(i)}(u_{1,e}^{2i})
       P_N^{(i)}(u_{1,o}^{2i}\ominus_q \alpha u_{1,e}^{2i}).\label{eq:memoryless_prob_even}
\end{align}
Equations~\eqref{eq:memoryless_prob_odd} and~\eqref{eq:memoryless_prob_even} are the probability-domain recursions for odd and even indices, respectively. We next use them to derive the log-domain updates in Lemma~\ref{lem:memoryless_sc}.

Fix $k\in\mathcal{X}$, and define
\begin{equation}
    a\triangleq \hat{u}_{1,o}^{2i-2}\ominus_q \alpha\hat{u}_{1,e}^{2i-2},
    \qquad
    b\triangleq \hat{u}_{1,e}^{2i-2}.
\end{equation}
We first prove \eqref{eq:cn_memoryless}. By the definition of the LLRs and \eqref{eq:memoryless_prob_odd},
    \begin{align}
        &L_{2N}^{(2i-1,k)}(\hat{u}^{2i-2})\nn\\
        &=\ln\frac{P_{2N}^{(2i-1)}\left(\hat{u}^{2i-2},u_{2i-1}=k\right)}{P_{2N}^{(2i-1)}\left(\hat{u}^{2i-2},u_{2i-1}=0\right)}\nn\\
        &=\ln\frac{\sum_{u}P_N^{(i)}(b,u)P_N^{(i)}(a,k\ominus_q \alpha u)}{\sum_{u}P_N^{(i)}(b,u)P_N^{(i)}(a,0\ominus_q \alpha u)}\nn\\
        &=\ln\left(\sum_{u}\frac{P_N^{(i)}(b,u)}{P_N^{(i)}(b,0)}\cdot\frac{P_N^{(i)}(a,k\ominus_q \alpha u)}{P_N^{(i)}(a,0)}\right)\nn\\
        &\quad-\ln\left(\sum_{u}\frac{P_N^{(i)}(b,u)}{P_N^{(i)}(b,0)}\cdot\frac{P_N^{(i)}(a,0\ominus_q \alpha u)}{P_N^{(i)}(a,0)}\right)\nn\\
        &=\ln\left(\sum_{u}e^{L_N^{(i,u)}(b)}e^{L_{N}^{(i,k\ominus_q \alpha u)}(a)}\right)
        -\ln\left(\sum_{u}e^{L_N^{(i,u)}(b)}e^{L_{N}^{(i,0\ominus_q \alpha u)}(a)}\right)\nn\\
        &=f^{+}_k\left(\textbf{L}_{N}^{(i)}(a),\textbf{L}_{N}^{(i)}(b)\right),
    \end{align}
which is precisely \eqref{eq:cn_memoryless} by the definition of $f_k^+$ in \eqref{eq:sc1}.

We next prove \eqref{eq:bn_memoryless}. Using \eqref{eq:memoryless_prob_even},
    \begin{align}
        &L_{2N}^{(2i,k)}(\hat{u}^{2i-1})\nn\\
        &=\ln\left(\frac{P_{2N}^{(2i)}\left(\hat{u}^{2i-1},u_{2i}=k\right)}{P_{2N}^{(2i)}\left(\hat{u}^{2i-1},u_{2i}=0\right)}\right)\nn\\
        &=\ln\frac{P_N^{(i)}(b,k)P_N^{(i)}(a,\hat{u}_{2i-1}\ominus_q \alpha k)}{P_N^{(i)}(b,0)P_N^{(i)}(a,\hat{u}_{2i-1}\ominus_q \alpha 0)}\nn\\
        &=L_{N}^{(i,k)}(b)
        +\ln\left(\frac{P_N^{(i)}(a,\hat{u}_{2i-1}\ominus_q \alpha k)}{P_N^{(i)}(a,\hat{u}_{2i-1})}\right)\nn\\
        &=L_{N}^{(i,k)}(b)+L_{N}^{(i,\hat{u}_{2i-1}\ominus_q \alpha k)}(a)
        -L_{N}^{(i,\hat{u}_{2i-1})}(a)\nn\\
        &=f^{-}_k\left(\textbf{L}_{N}^{(i)}(a),\textbf{L}_{N}^{(i)}(b),\hat{u}_{2i-1}\right),
    \end{align}
which is \eqref{eq:bn_memoryless} by the definition of $f_k^-$ in
\eqref{eq:sc2}.
\end{proof}

\section{Proof of Lemma \ref{lem:sct}} \label{app:sct_alphabet}
This appendix proves the SCT recursions in Lemma~\ref{lem:sct}. The main
idea is to split the finite-state source trellis at its midpoint, sum over
the intermediate state, and combine the two length-$N$ components according
to the polar kernel. Marginalizing the undecoded symbols gives the
probability-domain recursions for odd and even indices, whose logarithms yield the stated
log-domain updates.

\begin{proof}
The derivation follows the recursive decomposition used for trellis-based
polar constructions in~\cite{wang2015construction}, adapted here to source
coding over $\mathbb{GF}(q)$ without channel-output variables and to the
finite-field kernel in~\eqref{eq:kernel2}. All sums below are
over the corresponding alphabet. For $N=1$, $U_1=X_1$, so
$P_1^{(1)}(u_1,s_1|s_0)=P_{X_1,S_1|S_0}(u_1,s_1|s_0)$, which gives
\eqref{eq:sct_log_base}, including zero-probability transitions under the
stated logarithm convention. For the recursive step, by definition,
\begin{align}
P^{(2i-1)}_{2N}\left(u_1^{2i-1}, s_{2N} \vert s_0\right)&= \sum_{u^{2N}_{2i}}
P_{U_1^{2N}, S_{2N} \vert S_0}\left(u_1^{2N}, s_{2N} \vert s_0\right) \notag \\
&= \sum_{u^{2N}_{2i}}
P_{X_1^{2N}, S_{2N} \vert S_0}\left(u_1^{2N}G^{-1}_{2N}, s_{2N} \vert s_0\right). \label{eq:app_sct_start}
\end{align}
Using the Markov property of the finite-state source, the joint probability in \eqref{eq:app_sct_start} can be decomposed at time $N$ as
\begin{align}
&P_{X_1^{2N}, S_{2N} \vert S_0}\left(x_1^{2N}, s_{2N} \vert s_0\right) \nn\\
&=\sum_{s_N'}P_{X_1^N, S_N \vert S_0}\left(x_1^N, s_N' \vert s_0\right)
P_{X_{N+1}^{2N}, S_{2N} \vert S_N}\left(x_{N+1}^{2N},s_{2N}\vert s_N'\right). \label{eq:app_sct_markov}
\end{align}
The recursive polar transform generated by \eqref{eq:kernel2} gives
\begin{equation}
    x_1^NG_N=u_{1,o}^{2N}\ominus_q \alpha u_{1,e}^{2N},
    \qquad
    x_{N+1}^{2N}G_N=u_{1,e}^{2N}.
\end{equation}
Time homogeneity ensures that the two length-$N$ segments, conditioned on
their respective initial states, are governed by the same transition law
and hence the same family of metrics $P_N^{(i)}$.
Substituting these identities into \eqref{eq:app_sct_start}--\eqref{eq:app_sct_markov} yields
\begin{align}
&P^{(2i-1)}_{2N}\left(u_1^{2i-1}, s_{2N} \vert s_0\right) \nn\\
&=\sum_{s_N'}\sum_{u^{2N}_{2i}}
P_{U_{1,o}^{2N}\ominus_q \alpha U_{1,e}^{2N}, S_N \vert S_0}
\left(u_{1,o}^{2N}\ominus_q \alpha u_{1,e}^{2N},s_N'\vert s_0\right)
P_{U_{1,e}^{2N},S_{2N}\vert S_N}\left(u_{1,e}^{2N},s_{2N}\vert s_N'\right) \notag \\
&=\sum_{s_N'}\sum_{u_{2i}}
P_N^{(i)}\left(u_{1,o}^{2i}\ominus_q \alpha u_{1,e}^{2i},s_N'\vert s_0\right)
P_N^{(i)}\left(u_{1,e}^{2i},s_{2N}\vert s_N'\right),\label{eq:sct_prob_odd}
\end{align}
where the last equality follows from the definition of $P_N^{(i)}$ in \eqref{eq:p_N}. In particular, for fixed $u_{1,e}^{2N}$, summing over $u_{2i+1,o}^{2N}\in\mathcal{X}^{N-i}$ ensures that the last $N-i$ coordinates of $u_{1,o}^{2N}\ominus_q \alpha u_{1,e}^{2N}$ also range over $\mathcal{X}^{N-i}$. This establishes the probability-domain recursion for odd indices.

For the recursion at even indices, $u_{2i}$ is fixed rather than summed over. The same decomposition therefore gives
\begin{align}
P^{(2i)}_{2N}\left(u_1^{2i}, s_{2N} \vert s_0\right)
&= \sum_{u^{2N}_{2i+1}}
P_{U_1^{2N}, S_{2N} \vert S_0}\left(u_1^{2N}, s_{2N} \vert s_0\right) \notag \\
&=\sum_{s_N'}
P_N^{(i)}\left(u_{1,o}^{2i}\ominus_q \alpha u_{1,e}^{2i},s_N'\vert s_0\right)
P_N^{(i)}\left(u_{1,e}^{2i},s_{2N}\vert s_N'\right).\label{eq:sct_prob_even}
\end{align}
Equations~\eqref{eq:sct_prob_odd} and~\eqref{eq:sct_prob_even} are the probability-domain SCT recursions. Finally, using
\[
P_N^{(i)}(u^i,s_N|s_0)=\exp\left(L_N^{(i)}(u^i,s_N|s_0)\right)
\]
and taking the natural logarithm of both equations gives
\begin{align*}
L_{2N}^{(2i-1)}(u_1^{2i-1},s_{2N}|s_0)
&=\ln\Bigg(\sum_{s_N'}\sum_{u_{2i}}
e^{L_N^{(i)}(u_{1,o}^{2i}\ominus_q\alpha u_{1,e}^{2i},s_N'|s_0)}
e^{L_N^{(i)}(u_{1,e}^{2i},s_{2N}|s_N')}\Bigg),\\
L_{2N}^{(2i)}(u_1^{2i},s_{2N}|s_0)
&=\ln\Bigg(\sum_{s_N'}
e^{L_N^{(i)}(u_{1,o}^{2i}\ominus_q\alpha u_{1,e}^{2i},s_N'|s_0)}
e^{L_N^{(i)}(u_{1,e}^{2i},s_{2N}|s_N')}\Bigg),
\end{align*}
which are precisely \eqref{eq:sct_log_odd} and \eqref{eq:sct_log_even}. 
\end{proof}

\section{Proof of Theorem \ref{th:pg_rd}}\label{app:PG}
This appendix proves the exact score-function gradient in
Theorem~\ref{th:pg_rd}. The main idea is to differentiate the mutual-information
and distortion terms separately. The score-function identity directly gives
the distortion gradient, while the mutual-information derivative is written
using information density. The remaining derivative of the information
density has zero expectation by normalization of the conditional and marginal
distributions.

\begin{proof}
Let
\begin{equation}
    J_\lambda(\phi)
    \triangleq
    \frac{1}{N} \mathsf{I}_\phi(X^N;\hat X^N)
    + \lambda
    \mathbb{E}_{P_{X^N,\hat X^N}^{\phi}}
    \left[d_N(X^N,\hat X^N)\right],
\end{equation}
where
$P_{X^N,\hat X^N}^{\phi}=P_{X^N}P_{\hat X^N|X^N}^{\phi}$ and
$U^N=\hat X^NG_N$. Under the assumptions of Theorem~\ref{th:pg_rd},
all derivatives below are well defined, and differentiation may be
interchanged with the finite summations. Sums over $x^N$ are restricted
to the support of $P_{X^N}$. Since $P_{X^N}$ is independent of $\phi$, we use the
score-function identity
\begin{equation}
    \nabla_\phi P_{\hat X^N|X^N}^{\phi}(\hat x^N|x^N)
    =
    P_{\hat X^N|X^N}^{\phi}(\hat x^N|x^N)
    \nabla_\phi\ln P_{\hat X^N|X^N}^{\phi}(\hat x^N|x^N),
    \label{eq:app_rd_score_identity}
\end{equation}
which gives
\begin{align}
&\nabla_\phi
\mathbb{E}_{P_{X^N,\hat X^N}^{\phi}}
\left[d_N(X^N,\hat X^N)\right] \nn\\
&= \sum_{x^N,\hat x^N}
P_{X^N}(x^N)
\nabla_\phi P_{\hat X^N|X^N}^{\phi}(\hat x^N|x^N)
 d_N(x^N,\hat x^N) \nonumber\\
&= \mathbb{E}_{P_{X^N,\hat X^N}^{\phi}}
\left[
\nabla_\phi\ln P_{\hat X^N|X^N}^{\phi}(\hat X^N|X^N)
 d_N(X^N,\hat X^N)
\right].
\label{eq:app_rd_distortion_grad}
\end{align}
Next, let $\imath_\phi(x^N,\hat x^N)$ be the information density in
\eqref{eq:rd_information_density}, namely
\begin{equation}
    \imath_\phi(x^N,\hat x^N)
    =
    \log\frac{P_{U^N|X^N}^{\phi}(u^N|x^N)}{P_{U^N}^{\phi}(u^N)},
    \qquad u^N=\hat x^NG_N .
\end{equation}
Since the mapping $\hat X^N\mapsto U^N$ is bijective,
$\mathsf{I}_\phi(X^N;\hat X^N)=\mathsf{I}_\phi(X^N;U^N)$, and therefore
\begin{align}
&\nabla_\phi \mathsf{I}_\phi(X^N;\hat X^N) \nn\\
&=\nabla_\phi
\sum_{x^N,\hat x^N}P_{X^N}(x^N)
P_{\hat X^N|X^N}^{\phi}(\hat x^N|x^N)
\imath_\phi(x^N,\hat x^N) \nonumber\\
&=\mathbb{E}_{P_{X^N,\hat X^N}^{\phi}}
\left[
\nabla_\phi\ln P_{\hat X^N|X^N}^{\phi}(\hat X^N|X^N)
\imath_\phi(X^N,\hat X^N)
\right] +
\mathbb{E}_{P_{X^N,\hat X^N}^{\phi}}
\left[\nabla_\phi\imath_\phi(X^N,\hat X^N)\right].
\label{eq:app_rd_mi_grad_decomp}
\end{align}
It remains to show that the second expectation in
\eqref{eq:app_rd_mi_grad_decomp} is zero. For the conditional term,
\begin{align}
&\mathbb{E}
\left[\nabla_\phi\log P_{U^N|X^N}^{\phi}(U^N|X^N)\right] \nn\\
&=\sum_{x^N}P_{X^N}(x^N)
\sum_{u^N}P_{U^N|X^N}^{\phi}(u^N|x^N)
\nabla_\phi\log P_{U^N|X^N}^{\phi}(u^N|x^N) \nonumber\\
&=\frac{1}{\ln 2}\sum_{x^N}P_{X^N}(x^N)
\nabla_\phi\sum_{u^N}P_{U^N|X^N}^{\phi}(u^N|x^N)=0.
\end{align}
Similarly,
\begin{equation}
\mathbb{E}_{U^N\sim P_{U^N}^{\phi}}
\left[\nabla_\phi\log P_{U^N}^{\phi}(U^N)\right]
=\frac{1}{\ln 2}\nabla_\phi\sum_{u^N}P_{U^N}^{\phi}(u^N)=0.
\end{equation}
Consequently,
\begin{equation}
    \mathbb{E}_{P_{X^N,\hat X^N}^{\phi}}
    \left[\nabla_\phi\imath_\phi(X^N,\hat X^N)\right]=0.
\end{equation}
Combining this identity with \eqref{eq:app_rd_distortion_grad},
\eqref{eq:app_rd_mi_grad_decomp}, and the definition of
$J_\lambda(\phi)$ yields
\begin{align}
    \nabla_\phi J_\lambda(\phi)
    =
    \mathbb{E}_{P_{X^N,\hat X^N}^{\phi}}
    \left[
        \nabla_\phi\ln P_{\hat X^N|X^N}^{\phi}(\hat X^N|X^N)
        \left(
            \frac{1}{N}\imath_\phi(X^N,\hat X^N)
            +\lambda d_N(X^N,\hat X^N)
        \right)
    \right],
\end{align}
which is \eqref{eq:rd_exact_gradient}.
\end{proof}

\section{Proof of Lemma \ref{lem:wz_polarization}}
\label{app:wz_polarization}
This appendix proves the fast polarization, nesting, and rate statements in
Lemma~\ref{lem:wz_polarization}. The main idea is to apply conditional FAIM
polarization separately with observations $X^N$ and $Y^N$. The block Markov
relation $U^N-X^N-Y^N$ then orders the coordinatewise conditional entropies,
which yields nested fast high-entropy sets. Their asymptotic size difference
is identified with the conditional mutual-information rate.

\begin{proof}
Let $N=2^n$, let $Z$ denote either $X$ or $Y$, and fix
$\beta\in(0,1/2)$. By assumption, there exists a finite state process
$\{S_t^Z\}_{t\in\mathbb Z}$ such that
$\{(U_t,Z_t,S_t^Z)\}_{t\in\mathbb Z}$ is a FAIM process. Applying
\cite[Cor.~8, Th.~13, App.~B]{shuval2019fast} with $U_t$ as the
$q$-ary source and $Z_t$ as the observation gives conditional polarization
and fast polarization to the high-entropy endpoint. The extension in
\cite[App.~B]{shuval2019fast} uses modulo-$q$ operations and the normalized
$q$-ary conditional entropy. More explicitly, let $z_{i,N}^{Z}$ denote the
corresponding $q$-ary conditional Bhattacharyya parameter, choose $\beta'$
such that $\beta<\beta'<1/2$, and let
$\delta_N'=2^{-N^{\beta'}}$. The non-binary comparison bound in
\cite[Lemma~17]{shuval2019fast} gives
\begin{equation}
    (z_{i,N}^{Z})^2
    \le h_{i,N}^{Z}
    \le \log_q\!\left(1+(q-1)z_{i,N}^{Z}\right).
    \label{eq:app_wz_qary_entropy_bhattacharyya}
\end{equation}
Since $\beta'>\beta$, for every fixed $\epsilon\in(0,1/2)$ and all
sufficiently large $N$, we have $2\delta_N'\le\delta_N<\epsilon$.
Thus $(1-\delta_N')^2\ge1-\delta_N$. To prove the first inclusion below,
take any index $i$ such that $z_{i,N}^{Z}>1-\delta_N'$. Since both sides
are nonnegative, squaring preserves the strict inequality. Combining this
fact with the lower bound in
\eqref{eq:app_wz_qary_entropy_bhattacharyya} gives
\[
    h_{i,N}^{Z}
    \ge (z_{i,N}^{Z})^2
    >(1-\delta_N')^2
    \ge 1-\delta_N,
\]
so $i\in\mathcal H_{U|Z}^{(N)}$ by
\eqref{eq:wz_high_entropy_set}. For the second inclusion, if
$i\in\mathcal H_{U|Z}^{(N)}$, then
$h_{i,N}^{Z}\ge1-\delta_N>1-\epsilon$ because
$\delta_N<\epsilon$. Therefore,
\begin{equation}
    \{i:z_{i,N}^{Z}>1-\delta_N'\}
    \subseteq
    \mathcal H_{U|Z}^{(N)}
    \subseteq
    \{i:h_{i,N}^{Z}>1-\epsilon\}.
    \label{eq:app_wz_fast_high_sandwich}
\end{equation}
The left and right sets in \eqref{eq:app_wz_fast_high_sandwich} both have
limiting density $\overline{\mathsf H}(U|Z)/\log q$ by the fast-to-one and
fixed-threshold polarization results, respectively~\cite{shuval2019fast}.
The squeeze theorem,
followed by multiplication by $\log q$, therefore proves
\eqref{eq:wz_cor_high_limit} for both $Z=X$ and $Z=Y$.

We next establish the coordinatewise ordering.
Since $\tilde U^N$ is a deterministic function of $U^N$, the block Markov
relation in \eqref{eq:wz_markov} implies that
$\tilde U^N-X^N-Y^N$ also forms a Markov chain. By the chain rule,
$0=\mathsf I(\tilde U^N;Y^N|X^N)
=\sum_{i=1}^{N}\mathsf I(\tilde U_i;Y^N|\tilde U^{i-1},X^N)$.
Since every summand is nonnegative, each vanishes, giving
\begin{equation}
    \mathsf H(\tilde U_i|\tilde U^{i-1},X^N,Y^N)
    =
    \mathsf H(\tilde U_i|\tilde U^{i-1},X^N).
    \label{eq:app_wz_conditional_independence}
\end{equation}
Using \eqref{eq:app_wz_conditional_independence}, we obtain
\begin{align}
    h_{i,N}^{Y}-h_{i,N}^{X}
    &=\frac{1}{\log q}\Bigl[
    \mathsf H(\tilde U_i|\tilde U^{i-1},Y^N)
    -
    \mathsf H(\tilde U_i|\tilde U^{i-1},X^N,Y^N)\Bigr]
    \nonumber\\
    &=
    \frac{\mathsf I(\tilde U_i;X^N|\tilde U^{i-1},Y^N)}{\log q}
    \ge 0.
    \label{eq:app_wz_entropy_ordering}
\end{align}
This proves \eqref{eq:wz_cor_entropy_ordering}. The set inclusion follows
directly: if
$i\in\mathcal H_{U|X}^{(N)}$, then
$h_{i,N}^{Y}\ge h_{i,N}^{X}\ge1-\delta_N$, and hence
$i\in\mathcal H_{U|Y}^{(N)}$. This establishes
\eqref{eq:wz_cor_set_nesting}.

By this inclusion and the definition of $\mathcal A_N$,
\begin{align}
    \frac{|\mathcal A_N|}{N}\log q
    &=
    \frac{|\mathcal H_{U|Y}^{(N)}|}{N}\log q
    -
    \frac{|\mathcal H_{U|X}^{(N)}|}{N}\log q.
    \label{eq:app_wz_set_difference}
\end{align}
Taking $N\to\infty$ in \eqref{eq:app_wz_set_difference} and invoking
\eqref{eq:wz_cor_high_limit} yields
\begin{equation}
    \lim_{N\to\infty}\frac{|\mathcal A_N|}{N}\log q
    =
    \overline{\mathsf H}(U|Y)-\overline{\mathsf H}(U|X).
    \label{eq:app_wz_entropy_rate_difference}
\end{equation}
Finally, the block Markov relation gives, for every $N$,
\begin{align}
    \mathsf I(X^N;U^N|Y^N)
    &=
    \mathsf H(U^N|Y^N)-\mathsf H(U^N|X^N,Y^N)
    \nonumber\\
    &=
    \mathsf H(U^N|Y^N)-\mathsf H(U^N|X^N).
    \label{eq:app_wz_cmi_entropy_difference}
\end{align}
After dividing \eqref{eq:app_wz_cmi_entropy_difference} by $N$ and taking
the limit, the right-hand side of
\eqref{eq:app_wz_entropy_rate_difference} equals
$\overline{\mathsf I}(X;U|Y)$. This proves
\eqref{eq:wz_cor_rate_limit}.
\end{proof}

\section{Proof of Theorem \ref{thm:wz_faim_achievability}}
\label{app:wz_faim_achievability}
This appendix proves the operational Wyner--Ziv statement in
Theorem~\ref{thm:wz_faim_achievability}. The proof proceeds in five stages.
First, we define a randomized successive encoder and show that its induced
law approaches the prescribed target law in total variation. Second, we
bound the SC decoding error under the target law and transfer the bound to
the induced law. Third, fast polarization determines the asymptotic rate.
Fourth, total variation and the decoding-error bound transfer the target
distortion guarantee to the actual reconstruction. Finally, a selection
argument removes all encoder randomness and yields a deterministic code.

\begin{proof}
\emph{Randomized construction and target-law approximation.}
For each $i\in[N]$, define the conditional kernel
\begin{equation}
    q_i(a|\tilde u^{i-1},x^N)
    \triangleq
    P_{\tilde U_i|\tilde U^{i-1},X^N}
    (a|\tilde u^{i-1},x^N),
    \qquad a\in\mathbb{GF}(q).
    \label{eq:wz_target_kernel}
\end{equation}
The encoder generates $\tilde U^N$ successively. For
$i\in\mathcal F_N$, it sets $\tilde U_i=C_i$, where the variables
$\{C_i:i\in\mathcal F_N\}$ are independent and uniform on
$\mathbb{GF}(q)$ and are initially available to both terminals. For
$i\notin\mathcal F_N$, it
draws
\begin{equation}
    \tilde U_i
    \sim q_i(\cdot|\tilde U^{i-1},X^N).
    \label{eq:wz_successive_sampling}
\end{equation}
The encoder transmits $\tilde U_{\mathcal I_N}$.

Let $P_N$ denote the target law of $(X^N,Y^N,\tilde U^N)$, and let $Q_N$
denote the law induced by the randomized encoder. The block Markov relation
ensures that the target conditional law of $\tilde U^N$ given
$(X^N,Y^N)$ equals its conditional law given $X^N$. Hence, the sequential
factorizations of $P_N$ and $Q_N$ differ only on $\mathcal F_N$. The
telescoping bound for sequential distributions keeps
the prefix distributed according to $P_N$ at each replacement and gives
\begin{align}
    \|P_N-Q_N\|_{\mathrm{TV}}
    \le
    \sum_{i\in\mathcal F_N}
    \mathbb E_{P_N}\Bigl[
    \bigl\|
    P_{\tilde U_i|\tilde U^{i-1},X^N}
    (\cdot|\tilde U^{i-1},X^N)
    -\operatorname{Unif}(\mathbb{GF}(q))
    \bigr\|_{\mathrm{TV}}
    \Bigr].
    \label{eq:wz_tv_telescope}
\end{align}
Let $A$ take values in $\mathbb{GF}(q)$ and let $B$ be an arbitrary
finite-alphabet observation. First, Pinsker's inequality applied separately
for each $b$ with $P_B(b)>0$ gives, with KL divergence measured in bits,
\begin{equation}
    \left\|P_{A|B}(\cdot|b)
    -\operatorname{Unif}(\mathbb{GF}(q))\right\|_{\mathrm{TV}}
    \le
    \sqrt{\frac{\ln 2}{2}
    D_{\mathrm{KL}}\!\left(
    P_{A|B}(\cdot|b)
    \middle\|
    \operatorname{Unif}(\mathbb{GF}(q))
    \right)}.
    \label{eq:wz_pinsker_pointwise}
\end{equation}
Second, taking the expectation over $B$ in
\eqref{eq:wz_pinsker_pointwise} and applying Jensen's inequality to the
concave square-root function yields
\begin{align}
    &\mathbb E\left[
    \left\|P_{A|B}(\cdot|B)
    -\operatorname{Unif}(\mathbb{GF}(q))\right\|_{\mathrm{TV}}
    \right]
    \nonumber\\
    &\qquad\le
    \sqrt{\frac{\ln 2}{2}
    \mathbb E\left[
    D_{\mathrm{KL}}\!\left(
    P_{A|B}(\cdot|B)
    \middle\|
    \operatorname{Unif}(\mathbb{GF}(q))
    \right)
    \right]}.
    \label{eq:wz_jensen_average}
\end{align}
The average divergence from the uniform distribution satisfies
\begin{equation}
    \mathbb E\left[
    D_{\mathrm{KL}}\!\left(
    P_{A|B}(\cdot|B)
    \middle\|
    \operatorname{Unif}(\mathbb{GF}(q))
    \right)
    \right]
    =\log q-\mathsf H(A|B).
    \label{eq:wz_uniform_divergence_entropy}
\end{equation}
Combining \eqref{eq:wz_jensen_average} and
\eqref{eq:wz_uniform_divergence_entropy} gives
\begin{equation}
    \mathbb E\left[
    \left\|P_{A|B}(\cdot|B)
    -\operatorname{Unif}(\mathbb{GF}(q))\right\|_{\mathrm{TV}}
    \right]
    \le
    \sqrt{\frac{\ln 2}{2}
    \left(\log q-\mathsf H(A|B)\right)}.
    \label{eq:wz_entropy_uniform_bound}
\end{equation}
Using~\eqref{eq:wz_ach_frozen_set} in
\eqref{eq:wz_entropy_uniform_bound}, for $i\in\mathcal F_N$,
\begin{equation}
    \mathbb E_{P_N}\Bigl[
    \bigl\|
    P_{\tilde U_i|\tilde U^{i-1},X^N}
    -\operatorname{Unif}(\mathbb{GF}(q))
    \bigr\|_{\mathrm{TV}}
    \Bigr]
    \le \sqrt{\frac{\delta_N\ln q}{2}},
\end{equation}
where the conditioning arguments are suppressed for readability. Thus,
\begin{equation}
    \|P_N-Q_N\|_{\mathrm{TV}}
    \le N\sqrt{\frac{\delta_N\ln q}{2}}
    \xrightarrow[N\to\infty]{}0.
    \label{eq:wz_tv_convergence}
\end{equation}

\emph{Decoding reliability.}
The decoder is given $Y^N$, $\tilde U_{\mathcal I_N}$, and
$C_{\mathcal F_N}$. It reconstructs
$\tilde U^N$ successively. At indices in
$\mathcal F_N\cup\mathcal I_N$, it inserts the known symbol. At an index
$i\in\mathcal L_N$, it uses the maximum a posteriori (MAP) rule
\begin{equation}
    \hat{\tilde U}_i
    =
    \arg\max_{a\in\mathbb{GF}(q)}
    P_{\tilde U_i|\tilde U^{i-1},Y^N}
    (a|\hat{\tilde U}^{i-1},Y^N).
    \label{eq:wz_sc_map}
\end{equation}
Ties are resolved according to an arbitrary fixed rule.
The decoder can equivalently be viewed as a deterministic function of
$(Y^N,\tilde U_{\mathcal F_N},\tilde U_{\mathcal I_N})$. Under $Q_N$ these
known coordinates equal
$(C_{\mathcal F_N},\tilde U_{\mathcal I_N})$; under $P_N$ we analyze
the same decoder supplied with the true coordinates. For $i\in\mathcal L_N$,
let $E_i$ be the event that the first decoding error occurs at $i$. On
$E_i$, all preceding decisions are correct, so the MAP rule in
\eqref{eq:wz_sc_map} is evaluated with the true prefix
$\tilde U^{i-1}$. The events $\{E_i:i\in\mathcal L_N\}$ are disjoint and
their union is the block-error event. Therefore,
\begin{equation}
    P_N(\hat{\tilde U}^N\ne\tilde U^N)
    \le
    \sum_{i\in\mathcal L_N}
    \mathcal P_{\mathrm e}
    (\tilde U_i|\tilde U^{i-1},Y^N),
    \label{eq:wz_sc_union}
\end{equation}
where
$\mathcal P_{\mathrm e}(A|B)
\triangleq\sum_b P_B(b)\bigl[1-\max_a P_{A|B}(a|b)\bigr]$
is the MAP error probability averaged over the conditioning variable $B$. The Hellman--Raviv equivocation bound
\cite{hellman1970probability} gives, for finite-alphabet $A$,
$\mathcal P_{\mathrm e}(A|B)\le\mathsf H(A|B)/2$. Consequently,
\begin{equation}
    P_N(\hat{\tilde U}^N\ne\tilde U^N)
    \le \frac{N\delta_N\log q}{2}
    \xrightarrow[N\to\infty]{}0.
    \label{eq:wz_target_error}
\end{equation}
The block-error event is defined by the same deterministic decoding rule
under both laws. Since probabilities of a fixed event differ by at most
total variation,
\begin{equation}
    Q_N(\hat{\tilde U}^N\ne\tilde U^N)
    \le
    \frac{N\delta_N\log q}{2}+\|P_N-Q_N\|_{\mathrm{TV}}
    \xrightarrow[N\to\infty]{}0.
    \label{eq:wz_induced_error}
\end{equation}
The decoder then forms
$\hat U^N=\hat{\tilde U}^NG_N^{-1}$ and
$\hat X^N=g_N(\hat U^N,Y^N)$. Since $G_N$ is invertible,
$\hat U^N\ne U^N$ if and only if
$\hat{\tilde U}^N\ne\tilde U^N$.

\emph{Rate analysis.}
By
\eqref{eq:wz_cor_entropy_ordering}, $h_{i,N}^{X}\le h_{i,N}^{Y}$ for every
$i$. Since $\delta_N<1/2$ for all sufficiently large $N$,
$\mathcal F_N\cap\mathcal L_N=\varnothing$, and therefore
\begin{equation}
    R_N
    =\left(1-\frac{|\mathcal F_N|}{N}
    -\frac{|\mathcal L_N|}{N}\right)\log q.
    \label{eq:wz_rate_partition}
\end{equation}
We use the same Bhattacharyya-to-entropy sandwich argument as in
Appendix~\ref{app:wz_polarization},
\eqref{eq:app_wz_qary_entropy_bhattacharyya}--\eqref{eq:app_wz_fast_high_sandwich},
now at both the high-entropy and low-entropy endpoints.
Choose $\beta'$ such that $\beta<\beta'<1/2$. The fast polarization results
in~\cite{shuval2019fast} can be transferred to the
normalized entropy sets using the same comparison bound. To see this
explicitly, let $z_{i,N}^{Z}$ denote the $q$-ary Bhattacharyya parameter
associated with $h_{i,N}^{Z}$ and set
$\delta_N'=2^{-N^{\beta'}}$. Since
\begin{equation}
    (z_{i,N}^{Z})^2\le h_{i,N}^{Z}
    \le\log_q\!\left(1+(q-1)z_{i,N}^{Z}\right),
    \label{eq:wz_qary_entropy_bhattacharyya}
\end{equation}
for all sufficiently large $N$,
\begin{align}
    \{i:z_{i,N}^{X}>1-\delta_N'\}
    &\subseteq \mathcal F_N
    \subseteq \{i:h_{i,N}^{X}>1-\epsilon\},
    \label{eq:wz_frozen_sandwich}\\
    \{i:z_{i,N}^{Y}<\delta_N'\}
    &\subseteq \mathcal L_N
    \subseteq \{i:h_{i,N}^{Y}<\epsilon\},
    \label{eq:wz_low_sandwich}
\end{align}
for any fixed $\epsilon\in(0,1/2)$. Indeed,
$2\delta_N'<\delta_N$ and $\delta_N<\epsilon$ eventually because
$\beta'>\beta$. Thus, $z_{i,N}^{X}>1-\delta_N'$ implies
$h_{i,N}^{X}\ge(z_{i,N}^{X})^2>1-2\delta_N'>1-\delta_N$, while
$z_{i,N}^{Y}<\delta_N'$ implies
$h_{i,N}^{Y}\le\log_q(1+(q-1)\delta_N')<\delta_N$. The last inequality
holds eventually because $\beta'>\beta$. The reverse inclusions into the
fixed-threshold entropy sets follow from $\delta_N<\epsilon$. The
fixed-threshold entropy densities and the fast-to-zero Bhattacharyya
density follow from Theorem~7 and Corollary~8 of~\cite{shuval2019fast},
whereas Theorem~13 gives the fast-to-one Bhattacharyya density. Thus, the
left and right sets in each of
\eqref{eq:wz_frozen_sandwich}--\eqref{eq:wz_low_sandwich} have the same
respective limiting density. Hence, the
sandwich theorem gives
\begin{align}
    \frac{|\mathcal F_N|}{N}\log q
    &\longrightarrow \overline{\mathsf H}(U|X),
    \label{eq:wz_frozen_density}\\
    \frac{|\mathcal L_N|}{N}\log q
    &\longrightarrow \log q-\overline{\mathsf H}(U|Y).
    \label{eq:wz_low_density}
\end{align}
It follows from \eqref{eq:wz_rate_partition} that
\begin{equation}
    \lim_{N\to\infty}R_N
    =\overline{\mathsf H}(U|Y)-\overline{\mathsf H}(U|X)
    =\overline{\mathsf I}(X;U|Y),
    \label{eq:wz_rate_entropy_difference}
\end{equation}
where the last equality follows from the block Markov relation, as shown in
\eqref{eq:app_wz_cmi_entropy_difference}. This proves
\eqref{eq:wz_ach_rate}.

\emph{Distortion analysis.}
Recall that $d_{\max}=\max_{x,\hat x\in\mathcal X}d(x,\hat x)<\infty$.
Since $0\le d_N\le d_{\max}$,
\begin{align}
    \mathbb E_{Q_N}\left[
    d_N\bigl(X^N,g_N(U^N,Y^N)\bigr)
    \right]
    \le{}&
    \mathbb E_{P_N}\left[
    d_N\bigl(X^N,g_N(U^N,Y^N)\bigr)
    \right]
    \nonumber\\
    &+d_{\max}\|P_N-Q_N\|_{\mathrm{TV}}.
    \label{eq:wz_ideal_distortion_tv}
\end{align}
On the event $\hat U^N=U^N$, the actual and ideal reconstructions coincide.
Thus,
\begin{align}
    \mathbb E_{Q_N}[d_N(X^N,\hat X^N)]
    \le{}&
    \mathbb E_{P_N}\left[
    d_N\bigl(X^N,g_N(U^N,Y^N)\bigr)
    \right]
    \nonumber\\
    &+d_{\max}\|P_N-Q_N\|_{\mathrm{TV}}
    +d_{\max}Q_N(\hat U^N\ne U^N).
    \label{eq:wz_final_distortion_bound}
\end{align}
Taking the limit superior in~\eqref{eq:wz_final_distortion_bound} and using
the target-distortion assumption~\eqref{eq:wz_target_distortion}, together
with~\eqref{eq:wz_tv_convergence} and~\eqref{eq:wz_induced_error}, gives
\begin{equation}
    \limsup_{N\to\infty}
    \mathbb E_{Q_N}[d_N(X^N,\hat X^N)]
    \le D.
\end{equation}

\emph{Derandomization.}
Let $W_N$ collect the frozen vector $C_{\mathcal F_N}$ and independent
uniform random seeds implementing the encoder's conditional sampling,
all independent of $(X^N,Y^N)$. Fixing $W_N=w$ makes the encoder a
deterministic function of $X^N$.

For a fixed realization $w$, define
\begin{align}
    e_N(w)
    &\triangleq
    \Pr(\hat U^N\ne U^N\mid W_N=w),\\
    \mathrm D_N(w)
    &\triangleq
    \mathbb E[d_N(X^N,\hat X^N)\mid W_N=w].
\end{align}
Thus, both quantities are averaged over the source and side-information law
with the code fixed. The law of total expectation shows that averaging over
$W_N$ recovers the randomized construction:
\begin{align}
    \bar e_N
    \triangleq \mathbb E[e_N(W_N)]
    &=Q_N(\hat U^N\ne U^N),\\
    \overline{\mathrm D}_N
    \triangleq \mathbb E[\mathrm D_N(W_N)]
    &=\mathbb E_{Q_N}[d_N(X^N,\hat X^N)].
\end{align}
Hence, \eqref{eq:wz_induced_error} and the preceding distortion bound give
\begin{equation}
    \bar e_N\longrightarrow 0,
    \qquad
    \limsup_{N\to\infty}\overline{\mathrm D}_N\le D.
\end{equation}
Define the set of realizations with small error by
\begin{equation}
    \mathcal G_N
    \triangleq
    \left\{w: e_N(w)\le \sqrt{\bar e_N}\right\}.
\end{equation}
If $\bar e_N>0$, Markov's inequality gives
\begin{equation}
\begin{aligned}
    \Pr(W_N\notin\mathcal G_N)
    &=\Pr\bigl(e_N(W_N)>\sqrt{\bar e_N}\bigr)\\
    &\le \frac{\mathbb E[e_N(W_N)]}{\sqrt{\bar e_N}}
    =\sqrt{\bar e_N}.
\end{aligned}
\end{equation}
If $\bar e_N=0$, nonnegativity implies $e_N(W_N)=0$ almost surely. Hence,
in both cases,
$\Pr(W_N\in\mathcal G_N)\ge1-\sqrt{\bar e_N}$, which is positive for all
sufficiently large $N$.
Since the distortion is nonnegative,
\begin{equation}
\begin{aligned}
    \mathbb E[\mathrm D_N(W_N)\mid W_N\in\mathcal G_N]
    &=
    \frac{\mathbb E[\mathrm D_N(W_N)
    \mathbf 1\{W_N\in\mathcal G_N\}]}
    {\Pr(W_N\in\mathcal G_N)}\\
    &\le
    \frac{\overline{\mathrm D}_N}
    {\Pr(W_N\in\mathcal G_N)}\\
    &\le
    \frac{\overline{\mathrm D}_N}{1-\sqrt{\bar e_N}}.
\end{aligned}
\end{equation}
Thus, some $w_N\in\mathcal G_N$ satisfies
\begin{equation}
    \mathrm D_N(w_N)
    \le
    \frac{\overline{\mathrm D}_N}{1-\sqrt{\bar e_N}}.
\end{equation}
For the deterministic code indexed by this realization,
\begin{align}
    e_N(w_N)
    &\le \sqrt{\bar e_N}\longrightarrow 0,\\
    \limsup_{N\to\infty}\mathrm D_N(w_N)
    &\le D.
\end{align}
The second limit follows because
$(1-\sqrt{\bar e_N})^{-1}\to1$ and
$\limsup_N\overline{\mathrm D}_N\le D$. The fixed frozen vector is part of
the code specification, so fixing $w_N$ does not alter the message set or
its rate. This proves
\eqref{eq:wz_ach_distortion}.
\end{proof}

\section{Proof of Theorem \ref{th:pg_wz}}\label{app:PG_WZ}
This appendix extends the proof of Theorem~\ref{th:pg_rd} to the
Wyner--Ziv objective. The additional step is to account for the
test-channel dependence of the conditional distribution given the side
information.

\begin{proof}
All expectations below are with respect to
$P_{X^N,Y^N}P_{U^N|X^N}^{\phi}P_{\hat X^N|U^N,Y^N}^{\psi}$, with
$P_{X^N,Y^N}$ and $\psi$ fixed. Under the assumptions of
Theorem~\ref{th:pg_wz}, the softmax test-channel probabilities are
strictly positive and differentiable, and differentiation may be
interchanged with the finite summations. For each $y^N$ in the support
of $P_{Y^N}$, the induced conditional distribution is
\begin{equation}
    P_{U^N|Y^N}^{\phi}(u^N|y^N)
    =
    \sum_{x^N}
    P_{X^N|Y^N}(x^N|y^N)
    P_{U^N|X^N}^{\phi}(u^N|x^N),
\end{equation}
which is also strictly positive and differentiable. The Markov relation
$U^N-X^N-Y^N$ gives
$\mathsf I_\phi(X^N;U^N|Y^N)=\mathbb E[\imath_\phi(X^N,Y^N,U^N)]$.
Applying the same product-rule and score-function argument as in
\eqref{eq:app_rd_mi_grad_decomp}, we obtain
\begin{align}
    \nabla_\phi\mathsf I_\phi(X^N;U^N|Y^N)
    &=
    \mathbb E\left[
        \nabla_\phi\ln P_{U^N|X^N}^{\phi}(U^N|X^N)
        \imath_\phi(X^N,Y^N,U^N)
    \right] \nonumber\\
    &\quad+
    \mathbb E\left[
        \nabla_\phi\imath_\phi(X^N,Y^N,U^N)
    \right].
    \label{eq:app_wz_cmi_grad_decomp}
\end{align}
For $Z=X$ or $Z=Y$, normalization of the corresponding conditional
distribution yields
\begin{align}
    &\mathbb E\left[
        \nabla_\phi\log P_{U^N|Z^N}^{\phi}(U^N|Z^N)
    \right] \nonumber\\
    &\quad=
    \frac{1}{\ln 2}\sum_{z^N}P_{Z^N}(z^N)
    \nabla_\phi\sum_{u^N}P_{U^N|Z^N}^{\phi}(u^N|z^N)
    =0,
\end{align}
where the sum over $z^N$ is restricted to the support of $P_{Z^N}$.
Taking the difference of these identities shows that
$\mathbb E[\nabla_\phi\imath_\phi(X^N,Y^N,U^N)]=0$.
Finally, $P_{\hat X^N|U^N,Y^N}^{\psi}$ does not depend on $\phi$ when the reconstruction
parameters are fixed. Applying the score-function identity as in
\eqref{eq:app_rd_distortion_grad} to the joint distribution therefore yields
\[
    \nabla_\phi\mathbb E[d_N(X^N,\hat X^N)]
    =\mathbb E\left[
        \nabla_\phi\ln P_{U^N|X^N}^{\phi}(U^N|X^N)
        d_N(X^N,\hat X^N)
    \right].
\]
Combining this identity
with \eqref{eq:app_wz_cmi_grad_decomp} and the definition of
$J_\lambda(\phi,\psi)$ proves \eqref{eq:wz_exact_phi_gradient}.
\end{proof}

\section{Implementation Details for Experiments}\label{app:implementation_details}

This appendix provides the implementation details used to reproduce the numerical experiments. It first gives the recursive loss computation of the neural polar decoder (NPD) and specifies the NPD architecture and training parameters for the lossless experiments, including the preprocessing and bits per base (bpb) accounting used for the \texttt{E. coli} benchmark. It then reports the corresponding parameters for the lossy source-coding and Wyner--Ziv experiments, together with the complete Wyner--Ziv optimization procedure. The preprocessing rules are stated explicitly in prose, while the neural-network dimensions, optimization parameters, and alternating update schedules are collected in tables.

\subsection{Recursive Loss Computation}
Following~\cite{aharoni2026optimized}, Algorithm~\ref{alg:training-nsc} gives the recursive procedure $\mathsf{Loss}(\mathbf{e},\mathbf{u},\mathsf{L})$ called by Algorithm~\ref{alg:memory}, which evaluates the per-block loss $\cL(x_1^N)$ in \eqref{eq:npd_full_loss} along the SC decoding tree. Its inputs are the embeddings $\mathbf{e}$ at the current depth, the polarized symbols $\mathbf{u}$ of the corresponding subtree, and the loss accumulator $\mathsf{L}$, and it returns the updated accumulator together with the symbols $\mathbf{v}$ at the current depth. At each internal node, the check-node network $F_\theta$ and the bit-node network $G_\theta$ produce the embeddings of the two children, the recursion is applied to each child, and the returned symbols are combined through the kernel to obtain the symbols at the current depth, which supply the auxiliary CE terms of \eqref{eq:npd_full_loss}. The same recursion, with the appropriate initial embeddings, is used for the conditional NPDs of Sections~\ref{sec: RD} and~\ref{sec:WZ}.

\begin{algorithm}[H]
    \caption{ $\mathsf{Loss}(\mathbf{e},\mathbf{u}, \mathsf{L})$}
    \label{alg:training-nsc}
    \begin{algorithmic}
    \State $N=\f{\mathsf{dim}}{\mathbf{u}}$
    \If{$N=1$}
        \State $\mathsf{L} = \mathsf{L} +L^\theta_\mathsf{ce}(\text{softmax}(\mathbf{H_{\theta}(e)}),\mathbf{u})$ \\
        \Return $\mathsf{L},\mathbf{u}$
    \EndIf
    \State Split $\mathbf{e}$ into even and odd indices $\mathbf{e}_e, \mathbf{e}_o$
    \State $\mathbf{e}_{\mathsf{C}} = \f{F_\theta}{\mathbf{e}_e, \mathbf{e}_o}$ \Comment{Check-node}
    \State  $\mathsf{L},\mathbf{v}_1 = \f{\mathbf{\mathsf{Loss}}}{\mathbf{e}_{\mathsf{C}}, \mathbf{u}_{1}^{N/2}, \mathsf{L}}$
    \vspace{0.1em}
    \State $\mathbf{e}_{\mathsf{B}} = \f{G_\theta}{\mathbf{e}_e, \mathbf{e}_o, \mathbf{v}_1}$ \Comment{Bit-node}
    \vspace{0.1em}
    \State $\mathsf{L},\mathbf{v}_2 = \f{\mathbf{\mathsf{Loss}}}{\mathbf{e}_{\mathsf{B}}, \mathbf{u}_{N/2+1}^{N}, \mathsf{L}}$
    \State $\mathbf{v} = [\mathbf{v}_1, \mathbf{v}_2]$
    \State $\mathbf{v} = \mathbf{v} \left(I_{N/2} \otimes G_2\right) R_N$ \Comment{Symbols in current depth}
    \State $\mathsf{L} = \mathsf{L} + \sum_{i=1}^N L^\theta_\mathsf{ce}(\text{softmax}(H_\theta(e_i)), v_i)$ \Comment{Loss in current depth}\\
    \Return $\mathsf{L},\mathbf{v}$
    \end{algorithmic}
\end{algorithm}

\subsection{Lossless Experiments}

The lossless experiments use the same recursive NPD structure for both the synthetic sources and the real-world DNA data. The distinction between these settings is in the size of the neural modules: the synthetic experiments use a smaller network, whereas the \texttt{E. coli} experiment uses a larger embedding and hidden dimension to accommodate the richer empirical distribution. The block length, optimizer, learning rate, and architecture parameters used in these experiments are listed in Table~\ref{tab:lossless_reproducibility}. The modules $E_{\theta_E}$, $F_{\theta_F}$, and $G_{\theta_G}$ are multilayer perceptrons with the number of hidden layers listed in the tables, each of width $k$, rather than the single-hidden-layer networks of Definition~\ref{def:NN_function_class}. For a fixed number $\ell$ of hidden layers, the cost of one module evaluation is $O(kd+(\ell-1)k^2)$ in place of $O(kd)$, which is independent of $N$ and $|\mathcal S|$, so the $O(N\log N)$ scaling in Remark~\ref{rem:npd_complexity} is unaffected.

\subsubsection{DNA Preprocessing and Rate Accounting}
For the \texttt{E. coli} experiment, the preprocessing procedure determines the effective source alphabet, the parsed sequence, and the denominator used when reporting bpb. The Large Canterbury Corpus file is parsed as plain text. FASTA header lines are removed when present, newline characters are stripped, all symbols are converted to uppercase, and only the four canonical DNA bases \texttt{A}, \texttt{C}, \texttt{G}, and \texttt{T} are retained. Ambiguity symbols, including \texttt{N}, are discarded. The retained bases are assigned integer labels using the mapping \texttt{A}$\mapsto 0$, \texttt{C}$\mapsto 1$, \texttt{G}$\mapsto 2$, and \texttt{T}$\mapsto 3$. We use $\mathbb{GF}(4)=\mathbb{GF}(2)(\alpha)$, where $\alpha^2+\alpha+1=0$. The labels $0,1,2,3$ represent the field elements $0,1,\alpha,\alpha+1$, respectively, and the polar kernel in \eqref{eq:kernel2} uses this $\alpha$. The sequence is split chronologically into training and validation portions, with the first $95\%$ used for training and the remaining $5\%$ used for validation. For each validation block, we compute $u^N=x^NG_N$ and use the trained NPD to generate $P_\theta(u_i|u^{i-1})$ successively through the SC graph. These probabilities are supplied to an arithmetic coder to encode $u^N$. The decoder recovers $u^N$ from the arithmetic-coded bitstream and then reconstructs $x^N=u^NG_N^{-1}$. The reported $1.883$ bpb is the actual arithmetic-coded length normalized by the number of retained validation bases. This is a held-out evaluation with a model trained on the first $95\%$ of the sequence.

\begin{table}[H]
\centering
\captionsetup{font=normalsize}
\normalsize
\caption{NPD architecture and training parameters for the lossless experiments. The synthetic-source column applies to Figs.~\ref{fig:alphabet} and~\ref{fig:memory}. The \texttt{E. coli} rate is the final NPD-driven arithmetic-coding rate.}
\label{tab:lossless_reproducibility}
\begin{tabular}{lcc}
\hline
Parameter & Synthetic sources & \texttt{E. coli} \\
\hline
Training block length $N=2^{n_\mathsf{t}}$ & 1024 & 1024 \\
Number of hidden layers in $E_{\theta_E},F_{\theta_F},G_{\theta_G}$ & 2 & 3 \\
Number of hidden layers in $H_{\theta_H}$ & 1 & 1 \\
NPD embedding dimension $d$ & 64 & 128 \\
NPD hidden units $k$ & 128 & 256 \\
Mini-batch size & 20 & 20 \\
Optimizer & Adam & Adam \\
Learning rate & 0.001 & 0.001 \\
\hline
\end{tabular}
\end{table}

\subsection{Lossy Experiments}

For the lossy source-coding experiments, the NPDs are trained jointly with the neural test channel according to the alternating estimation--optimization procedure described in Section~\ref{sec: RD}. Different operating points on the rate--distortion curve are obtained by repeating the training procedure for different values of the Lagrange multiplier $\lambda$, which controls the tradeoff between mutual information and expected distortion. The same architecture and optimization parameters are used for the binary symmetric source and the first-order Markov source, so a single column is sufficient in Table~\ref{tab:lossy_reproducibility}. The table reports the NPD dimensions, the transformer embedding dimension of the test channel, and the number of update steps used for each stage of the alternating procedure. In both the lossy and Wyner--Ziv experiments, the test-channel transformer uses four attention heads and two layers.

\begin{table}[H]
\centering
\captionsetup{font=normalsize}
\normalsize
\caption{Architecture and training parameters for the lossy source-coding experiments. The same implementation parameters are used for the binary symmetric source (BSS) and Markov source models. The optimization-step rows give the alternating schedule within each phase: NPD updates are followed by test-channel updates, and the phase is repeated until convergence.}
\label{tab:lossy_reproducibility}
\begin{tabular}{lc}
\hline
Parameter & BSS and Markov sources \\
\hline
Training block length $N=2^{n_\mathsf{t}}$ & 1024 \\
Number of hidden layers in $E_{\theta_{E_1}},E_{\theta_{E_2}},F_{\theta_F},G_{\theta_G}$ & 2 \\
Number of hidden layers in $H_{\theta_H}$ & 1 \\
NPD embedding dimension $d$ & 128 \\
NPD hidden units $k$ & 256 \\
Test-channel transformer embedding dimension & 128 \\
Mini-batch size & 15 \\
NPD optimization steps per phase & 2 \\
Test-channel optimization steps per phase & 1 \\
Optimizer & Adam \\
Learning rates $(\gamma_\theta,\gamma_\phi)$ & (0.005,0.0025) \\
\hline
\end{tabular}
\end{table}

\subsection{Wyner--Ziv Experiments}

Algorithm~\ref{alg:WZ} summarizes the Wyner--Ziv estimation--optimization procedure of Section~\ref{sec:WZ_opt}. It follows Algorithm~\ref{alg:RD}, with the estimation losses \eqref{eq:wz_loss_y} and \eqref{eq:wz_loss_x}, sampling from the reconstruction model $P_{\hat X^N|U^N,Y^N}^{\psi}$, and joint updates of $(\phi,\psi)$ according to \eqref{eq:wz_phi_update}--\eqref{eq:wz_psi_update}.

\begin{algorithm}[H]
    \caption{Wyner--Ziv optimization via NPDs}
    \label{alg:WZ}
    \textbf{input:}
    Block length $N$, dataset size $M$,
    number of iterations $\mathsf{N_{iters}}$, learning rates
    $\gamma_\theta,\gamma_\phi,\gamma_\psi$, Lagrange multiplier $\lambda$\\
    \textbf{output:} Optimized $\theta^\ast,\phi^\ast,\psi^\ast$
    \algrule
    \begin{algorithmic}
    \State Initialize the weights of
    $\theta=\{\theta^Y,\theta^X\}$, $\phi$, and $\psi$
    \State Generate
    $\mathcal D_{M,N}^{\phi}
    \sim
    P_{X^N,Y^N}P_{U^N|X^N}^{\phi}$
    \State \underline{\textbf{Warm-up:}} repeat the estimation step below with $\phi$ fixed

    \For{$\mathsf{N_{iters}}$ iterations}

        \State \underline{\textbf{Estimation step:}}
        \State Sample $(x^N,y^N,u^N)\sim \mathcal D_{M,N}^{\phi}$
        \State Compute $\tilde u^N=u^NG_N$
        \State Compute
        $\mathbf e_0^{Y}=E_{\theta_{E_Y}}(y^N)$
        and
        $\mathbf e_0^{X}=E_{\theta_{E_X}}(x^N)$
        \State Compute
        $\mathcal L_Y(y^N,u^N)$ using \eqref{eq:wz_loss_y}
        and
        $\mathcal L_X(x^N,u^N)$ using \eqref{eq:wz_loss_x}
        \State Update
        \[
            \theta
            :=
            \theta
            -
            \gamma_\theta
            \nabla_\theta
            \left[
                \mathcal L_Y(y^N,u^N)
                +
                \mathcal L_X(x^N,u^N)
            \right].
        \]
        \State \underline{\textbf{Optimization step:}}
        \State Sample $(x^N,y^N,u^N)\sim \mathcal D_{M,N}^{\phi}$
        \State Compute $\tilde u^N=u^NG_N$
        \State Sample $\hat x^N\sim P_{\hat X^N|U^N,Y^N}^{\psi}(\cdot|u^N,y^N)$
        to form $\mathcal D_{M,N}^{\phi,\psi}$
        \State Compute
        $\hat\imath_{\theta}(x^N,y^N,u^N)$ using
        \eqref{eq:wz_npd_information_density}
        \State Compute
        $\widehat{\nabla_\phi J_\lambda}$ using
        \eqref{eq:wz_empirical_phi_gradient}
        \State Compute
        $\widehat{\nabla_\psi J_\lambda}$ by Monte Carlo averaging of
        \eqref{eq:wz_psi_gradient}
        \State Update
        \[
        \begin{aligned}
            \phi
            &:=
            \phi
            -
            \gamma_\phi
            \widehat{\nabla_\phi J_\lambda},\\
            \psi
            &:=
            \psi
            -
            \gamma_\psi
            \widehat{\nabla_\psi J_\lambda}.
        \end{aligned}
        \]
        \State Generate
        $\mathcal D_{M,N}^{\phi}
        \sim
        P_{X^N,Y^N}P_{U^N|X^N}^{\phi}$

    \EndFor \\

    \Return $\theta^\ast,\phi^\ast,\psi^\ast$
    \end{algorithmic}
\end{algorithm}

The Wyner--Ziv experiments use conditional NPDs, an auxiliary test channel, and a reconstruction network that uses the decoder side information. Table~\ref{tab:wz_reproducibility} lists the parameters for the binary symmetric and Markov sources. The larger dimensions used for the Markov source account for its memory and the conditional dependence induced by the side-information sequence. For the Markov-source experiment with auxiliary alphabet size $|\mathcal U|=4$, we use $\mathbb{GF}(4)=\mathbb{GF}(2)(\alpha)$, where $\alpha^2+\alpha+1=0$. The labels $0,1,2,3$ represent the field elements $0,1,\alpha,\alpha+1$, respectively. The \texttt{E. coli} Wyner--Ziv experiment uses the configuration in the Markov-source column, except that the auxiliary alphabet size is $|\mathcal U|=5$. 

All neural components are trained at block length $N=2^{n_\mathsf{t}}=1024$ ($n_\mathsf{t}=10$). The coding operating points in Figs.~\ref{fig:WZ_combined} and~\ref{fig:WZ_EColi} are evaluated using SCL decoding at $N=2^{13}=8192$ with list size $L=16$. The learned NPD modules, test channel, and reconstruction model are reused without retraining; their shared parameters are applied over the additional SC recursion levels.

\begin{table}[H]
\centering
\captionsetup{font=normalsize}
\normalsize
\caption{Architecture and training parameters for the Wyner--Ziv experiments with decoder side information. Within each phase, the NPD updates are followed by joint test-channel and reconstruction-network updates, and the phases are repeated until convergence.}
\label{tab:wz_reproducibility}
\begin{tabular}{@{}p{0.44\linewidth}
    >{\centering\arraybackslash}p{0.22\linewidth}
    >{\centering\arraybackslash}p{0.22\linewidth}@{}}
\hline
Parameter & BSS with side information & Markov with side information \\
\hline
Training block length $N=2^{n_\mathsf{t}}$ & 1024 & 1024 \\
Auxiliary alphabet size $|\mathcal U|$ & 3 & 4 \\
Hidden layers in NPD modules $E,F,G$ & 2 & 2 \\
Number of hidden layers in $H_{\theta_H}$ & 1 & 1 \\
NPD embedding dimension $d$ & 64 & 128 \\
NPD hidden units $k$ & 128 & 256 \\
Test-channel transformer embedding dimension & 128 & 256 \\
Reconstruction network $g_\psi$ hidden dimension & 128 & 256 \\
Mini-batch size & 15 & 10 \\
NPD optimization steps per phase & 2 & 2 \\
Test-channel/reconstruction optimization steps per phase & 1 & 1 \\
Optimizer & Adam & Adam \\
Learning rates $\gamma_\theta=\gamma_\phi$ & 0.002 & 0.002 \\
Learning rate $\gamma_\psi$ & 0.001 & 0.001 \\
\hline
\end{tabular}
\end{table}
\end{appendices}
\bibliography{ref_verified}
\bibliographystyle{IEEEtran}

\end{document}